\documentclass[a4paper, 10pt]{article}
 
\usepackage{amsmath}
\usepackage{amssymb}
\usepackage{a4wide}
\usepackage{soul}
\usepackage{graphicx}
\usepackage{color}  
\usepackage{graphics}
\usepackage{epsfig}
\usepackage{bbm}
\usepackage{amsmath}
\usepackage{amsfonts}
\usepackage{mathrsfs}
\usepackage{times}
 \usepackage{fourier}
 \usepackage{tikz}

\parskip = 0.05 in

 

%
{%
\stepcounter{equation}
\begin{equation*}}{%
\leqno(\arabic{equation})
\end{equation*}
}
 \usepackage{hyperref} 
\newtheorem{theorem}{Theorem}[section]
\newtheorem{lemma}[theorem]{Lemma}
\newtheorem{proposition}[theorem]{Proposition}
\newtheorem{definition}[theorem]{Definition}
\newtheorem{remark}[theorem]{Remark}
\newtheorem{corollary}[theorem]{Corollary}

\newcommand{\fact}[1]{#1\mathpunct{}!}
\numberwithin{equation}{section}
\newcommand{\h}{\hbar} 
  {\setlength{\baselineskip}{1.5\baselineskip}

\newenvironment{proof}
 {\textit{\textbf{Proof.}}}
 {\hfill $\square$ \vskip 8pt}

\title{Long time semiclassical Egorov theorem for $\h$-pseudodifferential systems}
\author{Marouane Assal \\ \vspace{1pt} 
\small{Institut de Math\'ematiques de Bordeaux}\\
\small{351 cours de la Lib\'eration, 33405 Talence cedex, France} \\ 
\small{\textit{E-mail address :} marouane.assal@math.u-bordeaux1.fr }}

\begin{document}

\maketitle

\begin{abstract}
In the Heisenberg picture, we study the semiclassical time evolution of a bounded quantum observable $Q^w(x,\h D_x;\h)$ associated to a $(m\times m)$ matrix-valued symbol $Q$ generated by a semiclassical matrix-valued Hamiltonian $H\sim H_0+\h H_1$. Under a non-crossing assumption on the eigenvalues of the principal symbol $H_0$ that ensures the existence of almost invariant subspaces of $L^{2}(\mathbb R^n)\otimes \mathbb C^m$, and for a class of observables that are semiclassically block-diagonal with respect to the projections onto these almost invariants subspaces, we establish a long time matrix-valued version for the semiclassical Egorov theorem valid in a large time interval of Ehrenfest type $T(\h)\simeq \log(\h^{-1})$. 
\end{abstract} 
   
\textbf{2010 Mathematics Subject Classification :}{ 35Q40, 81Q20.}

\textbf{Keywords :} Egorov theorem, Quantum evolution, Semiclassical approximation, $\h$-pseudodifferential systems.

\maketitle
 
\section{Introduction}\label{Introo}
Known as Bohr's correspondence principle in physics and Egorov theorem in mathematical literature, the semiclassical approximation ensures the transition between quantum and classical evolutions of observables. The relation between the classical and quantum evolutions may be considered as one of the oldest problems of semiclassical analysis. The starting point was the famous Bohr's correspondence principle proposed by Niels Bohr on 1923. According to this principle, the quantum evolution of an observable is closer and closer to its classical evolution as the Planck constant $\h$ becomes negligible. For many years, the semiclassical approximation has been the object of several investigations following two approaches. The first one uses semiclassical wave packets (or coherent states) as initial data and aims to approximate the evolved wave packet by a linear combination of coherent states (see \cite{Paul}, \cite{Joye}, \cite{Joye1}, \cite{Combescure} and \cite{Com} for a complete description of coherent states). The second approach, that will be our principal subject in this paper, considers the Heisenberg evolution of suitable bounded observables and seeks to construct an approximation for the corresponding observables in terms of $\h$-pseudodifferential operators (see \cite{Wang}, \cite{Rob1}, \cite{Rob}, \cite{iv}). Let us start by recalling the general setting of this approach. 

Let $H \in C^{\infty}(\mathbb R^{2n};\mathbb R)$ be a classical Hamiltonian and $Q \in C^{\infty}(\mathbb R^{2n};\mathbb R)$ be a real observable. We consider the system of Hamilton equations 
\begin{equation}\label{Hamilton equation}
\frac{dx}{dt} = \partial_{\xi} H(x,\xi), \quad \frac{d\xi}{dt}=-\partial_xH(x,\xi).
\end{equation}
These equations generate a flow $\phi_H^t$ called the Hamiltonian flow associated to $H$ defined by $\phi_H^t(x=x(0),\xi=\xi(0))=(x(t),\xi(t))$ and $\phi_H^t{_{|t=0}}(x,\xi)=(x,\xi)$. In the sequel, additional assumption will be required on $H$ so that for all initial data $(x,\xi)\in \mathbb R^{2n}$, $\phi_H^t(x,\xi)$ is well defined for all $t\in \mathbb R$.

The time evolution of $Q$ under the flow $\phi_H^t$ given by $q_0(t):=Q\circ \phi_H^t$ is described by the equation
\begin{equation}\label{1}
\frac{d}{dt} q_0(t)=\{H,Q\}\circ \phi_H^t, \quad q_0(t){_{|t=0} }= Q,
\end{equation}
where $\{H,Q\}$ denotes the Poisson bracket of $H,Q$ defined by 
$$
\{H,Q\}:=\partial_{\xi}H.\partial_xQ-\partial_xH . \partial_{\xi}Q.
$$
Let $H^w:= H^w(x,\h D_x)$ and $Q^w:= Q^w(x,\h D_x)$ be the self-adjoint operators in $L^2(\mathbb R^n)$ associated to $H$ and $Q$, respectively. Here we use the $\h$-Weyl quantization (see formula \eqref{Weyl}). Let $e^{-\frac{it}{\h}H^w}$ be the unitary operator solution of the evolution equation
$$
i\h \partial_t u(t) = H^w  u(t).
$$
The time evolution of the quantum observable $Q^w$ under $e^{-\frac{it}{\h}H^w}$ given by $Q(t):= e^{\frac{it}{\h}H^w}Q^w e^{-\frac{it}{\h}H^w}$ is described by the quantum analogous of (\ref{1}) called the Heisenberg equation of motion 
\begin{equation}\label{2}
\frac{d}{dt} Q(t)= \frac{i}{\h} [H^w,Q(t)], \quad Q(t)_{|t=0}=Q^w,
\end{equation}
where $[H^w,Q(t)]:=H^w Q(t)-Q(t)H^w$ denotes the commutator of $H^w,Q(t)$.

Under suitable growth assumptions on $H$, the semiclassical Egorov theorem (see \cite[Theorem 1.2]{Rob} for the precise statement) states that for every fixed $t$, $Q(t)$ is an $\h$-pseudodifferential operator with principal symbol $q_0(t,x,\xi)=Q\circ \phi_H^{t}(x,\xi)$. More precisely, there exists a family of symbols $(q_j(t,.,.))_{j\geq 1}$ with $\text{supp}(q_j(t,.,.)) \subset \phi_H^{-t}(K)$, where $K$ is the union of the supports of the $q_j$, such that for every finite time $\bar{t}>0$ and for all $N\in \mathbb N$, the following estimate
\begin{equation}\label{caldd}
\bigg\Vert  Q(t) - \sum_{j=0}^N \h^j \big( q_j(t)\big)^w \bigg\Vert_{\mathcal{L}(L^2(\mathbb R^n))} \leq C_N \h^{N+1},  
\end{equation}
holds uniformly for $|t|\leq \bar{t}$. The time-dependent family of operators $\big((q_j(t))^w\big)_{j\geq 0}$ is called semiclassical approximation of $Q(t)$. The construction of such approximation is obtained in a formal way by solving the Heisenberg equation (\ref{2}) in the space of formal power series of $\h$. If one assumes that $Q(t)$ is an $\h$-pseudodifferential operator with Weyl symbol $q(t)$, where for now we look at $q(t)$ as a formal power series of $\h$, $q(t)\sim \sum_{j\geq 0} \h^j q_j(t)$, by (\ref{2}) it fulfils the equation
\begin{equation}\label{3}
\displaystyle{\frac{d}{dt}} q(t)=\frac{i}{\h}[H,q(t)]_{\#}, \quad q(t)_{|t=0}=Q,
\end{equation}
where $[H,q(t)]_{\#}:=H\#q(t)-q(t)\#H$, with $H\#q(t)$ denotes the Moyal product of $H,q(t)$ defined as the symbol of $H^w\circ Q(t)$ (see appendix \ref{semi-classical background}).

By expanding the symbol $\frac{i}{\h}[H,q(t)]_{\#}$ in powers of $\h$ and then equating equal powers of $\h$ in both sides, the Cauchy problem (\ref{3}) implies similar Cauchy problems for the symbols $q_j(t)$ given by
\begin{equation}\label{Cauchy problems} (\mathcal{C}_j)
\left\{   
\begin{array}{rcl}
\displaystyle{\frac{d}{dt}} q_j(t)&=& \bigg(  \frac{i}{\h}[H,q(t)]_{\#} \bigg)_j, \\
q_0(t)_{|t=0}&=& Q, \quad q_j(t)_{|t=0}= 0, \forall j\geq 1,
\end{array} 
\right.
\end{equation}
where $\bigg(  \frac{i}{\h}[H,q(t)]_{\#} \bigg)_j$ is the $j$-th term (i.e. the coefficient of $\h^j$) in the asymptotic expansion of $\frac{i}{\h}[H,q(t)]_{\#}$.

For $j=0$, the principal symbol of $[H,q(t)]_{\#}$ given by the commutator $[H,q_0(t)]$ vanishes. Therefore, the term of order $\h^0$ in the asymptotic expansion of $\frac{i}{\h}[H,q(t)]_{\#}$ coincides with the sub-principal symbol of $i[H,q(t)]_{\#}$ which is equal to $\{H_0,q_0(t)\}$. Consequently, $(\mathcal{C}_0)$ reads 
$$
\frac{d}{dt}q_0(t)= \{H, q_0(t)\}, \quad q_0(t)_{|t=0}=Q,
$$
and we get $q_0(t)=Q \circ \phi_H^t$. The symbols $q_j(t)$ for $j\geq 1$ are obtained in the same manner by following the above algorithm (see Remark \ref{les symboles dans le cas scalaire} for the general formula of the symbols $q_j(t)$).

The semiclassical approximation given by the semiclassical Egorov theorem is limited to the evolutions in finite time intervals. Several works was devoted to the investigation of its validity for large times which may depend on $\h$ (see \cite{Combescure} for the time evolution of coherent states, and \cite{Bamb}, \cite{Rob} for the time evolution of observables). These investigations was based on a conjecture going back to the physicists Chirikov and Zaslavski (\cite{Chiri},\cite{Zas}) which claim that the semiclassical approximation remains valid in a large time interval of length $T(\h)\simeq \log(\h^{-1})$ known as the Ehrenfest time. The optimal time for which this claim was proved has been obtained by Bouzouina and Robert \cite{Rob}. They showed that the $L^2$-operator norm of the remainder term in the asymptotic expansion of $Q(t)$ given by the left hand side of (\ref{caldd}) is uniformly dominated, at any order, by an exponential term whose argument is linear in time. In particular, this allows them to recover the Ehrenfest time for the validity of the semiclassical approximation. 

The purpose of this paper is to study the extension of the above results to the case of matrix-valued observables. Given a bounded quantum observable $Q^w(x,\h D_x;\h)$ associated to a ($m\times m$) matrix-valued symbol $Q(x,\xi;\h)\sim \sum_{j\geq 0}\h^j Q_j(x,\xi)$, we study the time evolution $Q(t):= e^{\frac{it}{\h}H^w} Q^w e^{-\frac{it}{\h}H^w}$ generated by a semiclassical $(m \times m)$ hermitian-valued Hamiltonian $H(x,\xi;\h)$. We establish a long time matrix-valued version for the semiclassical Egorov theorem by giving a semiclassical approximation for $Q(t)$ valid in a large time interval of Ehrenfest type. To the best of our knowledge, there is no results concerning the large time behaviour of the semiclassical approximation in the matricial case.

Matrix-valued version for Egorov's theorem has been discussed several times in the literature (\cite{Corde}, \cite{Pan}, \cite{Brum}, \cite{bolte}, \cite{iv}). Brummelhuis and Nourrigat \cite{Brum} have studied the particular case of matrix-valued Hamiltonian with scalar principal symbol and proved an extension of the semiclassical Egorov theorem valid for evolutions in finite time intervals. 
This result has been extended to the general case by Bolte and Glaser \cite{bolte} under an assumption on the gap between the eigenvalues of the principal symbol of the Hamiltonian (see assumption \textbf{(A1)} in the next section). However, their result is again only valid for finite time. Here we require the same assumption as in \cite{bolte} and we are concerned with the large time behaviour of the approximation. Some ideas from \cite{bolte} are still present here.

Let us explain the main difficulties arising from the matrix structure of the problem. By going back to the Cauchy problem $(\mathcal{C}_0)$ satisfied by the principal symbol $q_0(t)$, one immediately sees that in the case where $H$ and $Q$ are matrix-valued functions, the principal symbol of the Moyal commutator $[H,q(t)]_{\#}$ which is equal to the matrix commutator $[H_0,q_0(t)]$ is no longer zero. Here $H_0$ denotes the principal symbol of $H$. Then, at leading semiclassical order, we have an equation of the type
\begin{equation}\label{leading semiclassical order}
\frac{d}{dt}q_0(t)= \frac{i}{\h}[H_0,q_0(t)]+\mathcal{O}(\h^0), \quad \h \searrow 0.
\end{equation}
In order to get a solvable equation for $q_0(t)$, the factor $\h^{-1}$ forces us to restrict ourself to a class of observables for which the commutativity between $H_0$ and $q_0(t)$ is preserved under the time evolution. For $t=0$, this is equivalent to a block-diagonal form of $Q_0$ with respect to the eigenprojectors of $H_0$. Under a non-crossing assumption on the eigenvalues of $H_0$, we use an idea due to Helffer and Sj\"ostrand \cite{Hel} which consists in decomposing the Hilbert space $L^2(\mathbb R^n)\otimes \mathbb C^m$ into almost invariant subspaces with respect to the time evolution generated by $H^w$. By considering a class of observables that are block-diagonal with respect to the projections onto these almost invariant subspaces, we reduce the study of $Q(t)$ to that of a family of block-diagonal Heisenberg observables for each of them we construct a formal asymptotic expansion in powers of $\h$ by solving the corresponding symbolic Heisenberg problem. This reduction is modulo $\mathcal{O}(\h^{\infty})$ in $\mathcal{L}(L^2(\mathbb R^n)\otimes \mathbb C^m)$ uniformly in time in large time intervals which cover the Ehrenfest time. Then, to justify this asymptotic expansion, we control the remainder term at any order by giving a uniform exponential estimate with linear argument in time. In particular, this estimate allows us to recover the Ehrenfest time for the validity of the semiclassical approximation. 

Another difficulty related to the matrix structure of $H$ and $Q$ lies in the fact that the time evolution of the symbols of the constructed approximation will be governed not only by the Hamiltonian flows (generated by the eigenvalues of $H_0$), but also by a conjugation by a family of transport matrices that we have to control the behaviour of theirs derivatives uniformly in time.

The paper is organised as follows. In section \ref{sec}, after introducing the classes of symbols that we shall use through the paper, we state our main results. Section \ref{Sc} will be devoted to the study of the particular case of Hamiltonian with scalar principal symbol (and matrix-valued sub-principal symbol). In section \ref{Generalization}, we generalize the results of section \ref{Sc} to the case of matrix-valued principal symbol without crossing eigenvalues. The appendix \ref{semi-classical background} contains a short background on some basic results of $\h$-pseudodifferential calculus in the context of operators with matrix-valued symbols. In appendices \ref{Ann Cauchy problem general} and \ref{projjj}, we give the proofs of some technical results.


\textbf{Some notations :} Let $M_m(\mathbb C)$ be the space of ($m\times m$) complex-valued matrices endowed with the operator norm denoted by $\Vert \cdot \Vert$. We denote $I_m$ the corresponding identity matrix. 

In this paper, three types of commutators appear : for $P,Q$ two matrix-valued functions in some suitable classes of symbols, $[P,Q]:=P Q-Q P$ is the usual matrix commutator. We use the same notation for the standard operators commutator $[P^w,Q^w]:= P^w Q^w-Q^wP^w$. Finally, the symbol of $[P^w,Q^w]$ will be denoted $[P,Q]_{\#}:=P\#Q-Q\#P$ and called the Moyal commutator of $P$ and $Q$.
 
Through the paper smooth means $C^{\infty}$. For $A\in C^{\infty}(\mathbb R^{2n})\otimes M_m(\mathbb C)$ and $\alpha,\beta \in \mathbb N^n$, we introduce the notation
\begin{equation*}
A_{(\beta)}^{(\alpha)}(x,\xi):=\partial_{\xi}^{\beta} \partial_x ^{\alpha} A(x,\xi).
\end{equation*}
Given a function $f_{\h}$ depending on the semiclassical parameter $\h\in(0,1]$, the asymptotic relation $f_{\h}=\mathcal{O}(\h^{\infty})$ means that $f_{\h}=\mathcal{O}(\h^N)$, for all $N\in \mathbb N$.

The identity operator on $L^2(\mathbb R^n)\otimes \mathbb C^m$ will be denoted $\text{id}_{L^2(\mathbb R^n)\otimes \mathbb C^m)}$ . For $\zeta=(\zeta_1,\cdots,\zeta_{2n})\in \mathbb R^{2n}$, we use the standard notation $\langle \zeta \rangle:= (1+|\zeta|^2)^{\frac{1}{2}}=(1+|\zeta_1|^2+\cdots+|\zeta_{2n}|^2)^{\frac{1}{2}}$. Finally, our convention for the Poisson bracket of matrix-valued functions $A,B\in C^{\infty}(\mathbb R^{2n})\otimes M_m(\mathbb C)$ is 
$$
\{A,B\}:= \partial_{\xi}A \partial_x B- \partial_x A \partial_{\xi}B.
$$
Notice that in general $\{A,B\}\neq -\{B,A\}$.

\section{Assumptions and main results}\label{sec}



Let us begin by recalling some notions about semiclassical classes of symbols. We refer to \cite[ch. 7]{dim} and \cite[ch. 4]{Zwo} for more details. For the context of operators with matrix-valued symbols see \cite[ch. 1]{iv}.

In this paper we use the standard $\h$-Weyl quantization defined for $A \in \mathscr{S}(\mathbb R^{2n})\otimes M_m(\mathbb C)$ (the space of Schwartz functions on $\mathbb R^{2n}$ with values in $M_m(\mathbb C)$) by the formula 
\begin{equation}\label{Weyl}
A^w(x,\h D_x)u(x) := \frac{1}{(2\pi \h)^n} \int \int_{\mathbb R^{2n}} e^{\frac{i}{\h}\langle x-y,\xi\rangle} A\bigg(\frac{x+y}{2},\xi\bigg) u(y)dyd\xi, \quad u\in \mathscr{S}(\mathbb R^{n})\otimes \mathbb C^m.
\end{equation}
For short, we shall sometimes simply write $A^w$.
 
Let $g : \mathbb R^{2n}\rightarrow [1,+\infty[$ be an order function, i.e. $g$ satisfies : there exists $C,N>0$ such that 
\begin{equation*}
g(v)\leq C \langle v-w \rangle^N g(w), \quad \forall v,w \in \mathbb R^{2n}.
\end{equation*}
The typical example is $g(x,\xi)=\langle (x,\xi)\rangle^a$, $a\geq 0$.

\begin{definition}
\begin{itemize}
\item[(i)] We denote by $S(g;\mathbb R^{2n},M_m(\mathbb C))$ the set of smooth functions on $\mathbb R^{2n}$ with values in $M_m(\mathbb C)$ satisfying : for all multi-index $\gamma \in \mathbb N^{2n}$, there exists a constant $C_{\gamma}>0$ such that for all $(x,\xi)\in \mathbb R^{2n}$,
\begin{equation}\label{seminorms}
\Vert \partial_{(x,\xi)}^{\gamma} A(x,\xi)\Vert \leq C_{\gamma} g(x,\xi).
\end{equation}
We will write it simply $S(g)$ when no confusion can arise. Symbols in $S(g)$ may depend on the semiclassical parameter $\h\in (0,1]$. In this case, we say that $A\in S(g)$ if $A(.,.;\h)$ is uniformly bounded in $S(g)$ when $\h$ varies in $(0,1]$.

For $r\in \mathbb R$, we define the classes
$$
S^r(g):= \h^{-r}S(g), \quad S^{-\infty}(g):= \bigcap_{r\in \mathbb R} S^r(g).
$$
\item[(ii)] $A\in S(g)$ is said to be elliptic if $A^{-1}(x,\xi)$ exists for all $(x,\xi)\in \mathbb R^{2n}$ and belongs to $S(g^{-1})$.
\item[(iii)] We say that $A$ admits an asymptotic expansion in powers of $\h$ in $S(g)$ if there exists $\h_0 \in ]0,1]$ and a sequence of $\h$-independent symbols $(A_j=A_j(x,\xi))_{j\in \mathbb N}\subset S(g)$ such that $A$ is a map from $]0,\h_0]$ into $S(g)$ satisfying 
\begin{equation}\label{definition du dev}
\h^{-(N+1)}\bigg(A(x,\xi;\h) - \sum_{j=0}^N \h^j A_j(x,\xi) \bigg) \in S(g), \quad \forall N\in \mathbb N.
\end{equation}

If (\ref{definition du dev}) holds, we write $A(x,\xi;\h)\sim \sum_{j\geq 0}\h^j A_j(x,\xi)$ in $S(g)$. $A_0$ is called the principal symbol and $A_1$ is called the sub-principal symbol of $A$.

Elements of $S(g)$ which admit an asymptotic expansion in powers of $\h$ are called semiclassical symbols and the corresponding Weyl operators via formula (\ref{Weyl}) will be denoted $A^w(x,\h D_x;\h)$ and called $\h$-pseudodifferential operators. We denote $S_{\text{sc}}(g)$ the set of semiclassical symbols in $S(g)$.
\item[(iv)] Let $P$ and $Q$ be two symbols in some suitable classes of symbols. The Moyal bracket of $P,Q$ denoted $\{P,Q\}^*$ is defined as the Weyl symbol of the operator $i\h^{-1}[P^w,Q^w]$. In the following, when $\{P,Q\}^*$ admits an asymptotic expansion in powers of $\h$, the coefficient of $\h^j$ will be denoted $\{P,Q\}^*_j$. 

The notion of the Moyal bracket will play an important role in this paper. We refer to the appendix \ref{semi-classical background} for more details.
\end{itemize}
\end{definition}

Let $H(x,\xi;\h) \sim \sum_{j\geq 0} \h^j H_j(x,\xi)$ in $S(g)$ be a $(m\times m)$ semiclassical Hamiltonian. To simplify the presentation and without any loss of generality, we suppose that $H(x,\xi;\h)= H_0(x,\xi) + \h H_1(x,\xi)$. We assume that 

\textbf{(A0).} $H_0$ and $H_1$ are hermitian-valued and $(H_0+i)$ is elliptic, i.e. there exists a constant $C>0$ such that 
$$
\Vert H_0(x,\xi) + i \Vert \geq C g(x,\xi),  \quad \forall (x,\xi)\in \mathbb R^{2n}.
$$

Under this assumption, $H^w(x,\h D_x;\h)$ is essentially self-adjoint in $L^{2}(\mathbb R^n)\otimes \mathbb C^m$ for $\h$ small enough (see \cite[Proposition 8.5]{dim} for the case $m=1$). By Stone's theorem (see e.g. \cite[p. 74]{Nelson}), the corresponding Schr\"odinger equation 
$$
i \h \partial_t u(t) =  H^w(x,\h D_x;\h)u(t)
$$
generates a one parameter group of unitary operators $U_H(t):= e^{-\frac{it}{\h}H^w}$ defined for all $t\in \mathbb R$.





Let $Q(x,\xi;\h) \sim \sum_{j\geq 0}\h^j Q_j(x,\xi)$ in $S(1)$ be a $(m\times m)$ semiclassical observable and we consider the time evolution of $Q^w(x,\h D_x;\h)$ in the Heisenberg picture given by
$$
Q(t):= U_{H}(-t) Q^w(x,\h D_x;\h) U_H(t), \quad  t\in \mathbb R.
$$
By the Calder\'on-Vaillancourt theorem (Theorem \ref{Cal}), $Q^w(x,\h D_x;\h)$ is bounded on $L^2(\mathbb R^n)\otimes \mathbb C^m$ and then $Q(t)$ is uniformly bounded on $L^2(\mathbb R^n)\otimes \mathbb C^m$ with respect to $t\in \mathbb R$. Moreover, $Q(t)$ satisfies the following Heisenberg equation of motion 
\begin{equation}\label{Heisenberg1}
\frac{d}{dt} Q(t)= \frac{i}{\h} [H^w,Q(t)], \quad Q(t)_{|t=0}=Q^w(x,\h D_x;\h).
\end{equation}
As indicated in the introduction, the first step in the semiclassical approximation of $Q(t)$ consists in the construction of a formal asymptotic expansion in powers of $\h$ for $Q(t)$ by solving the following Cauchy problems arising from (\ref{Heisenberg1}) if one assumes that $Q(t)$ admits a Weyl symbol $q(t)\sim \sum_{j\geq 0}\h^j q_j(t)$
\begin{equation} (\mathcal{C}_j) \label{Cau inks}
\left\{   
\begin{array}{rcl}
\displaystyle{\frac{d}{dt}} q_j(t)&=&  \{H,q(t)\}^*_j, \\
q_j(t)_{|t=0}&=& Q_j.
\end{array} 
\right.
\end{equation}

For $j=0$, according to (\ref{leading semiclassical order}) it is necessary to ensure the following  commutativity property
\begin{equation}\label{commutativity}
[H_0,q_0(t)]=0, \quad \forall t\in \mathbb R.
\end{equation}
For $t=0$, since $q_0(t)_{|t=0}=Q_0$, (\ref{commutativity}) is equivalent to a block-diagonal form of $Q_0$ with respect to the eigenprojectors of $H_0$. However, if one restricts to such observable, nothing ensures that this block-diagonal form will be respected by the time evolution. 


\subsection{Hamiltonian with scalar principal symbol}
We begin with a particular but an important case where the principal symbol $H_0$ is a scalar multiple of the identity, that is 

\textbf{(A1').} $H_0(x,\xi)=\lambda(x,\xi)I_m$, for a scalar real-valued symbol $\lambda$.

This case allows us to understand the contribution of the sub-principal symbol $H_1$ in the time evolution. It will be clear from Theorem \ref{main T1} below that this case is different from the scalar one studied by Bouzouina-Robert \cite{Rob}. In particular, this case cannot be deduced from the results of \cite{Rob}.

We assume that


\textbf{(A2').} For all $\gamma\in \mathbb N^{2n}$ and $j\in \{0,1\}$,
$$
\partial_{(x,\xi)}^{\gamma} H_j \in L^{\infty}(\mathbb R^{2n}), \quad \text{for}\;\;|\gamma|+j\geq 2.
$$ 
Let $\phi_{\lambda}^t$ be the Hamiltonian flow generated by $\lambda$. Under the above assumption, the correspondant vector field $\mathcal{X}_{\lambda}:=(\partial_{\xi}\lambda,-\partial_{x}\lambda)$ grows at most linearly at infinity. Therefore a trajectory $\phi^t_{\lambda}(x,\xi)$ cannot blow up at finite times so that, for all $(x,\xi)\in \mathbb R^{2n}$, $\phi_{\lambda}^t(x,\xi)$ exists for all $t\in \mathbb R$. 


Put
\begin{equation}\label{upper bound Gamma}
\Gamma:= \big\Vert J \nabla_{(x,\xi)}^{(2)}\lambda(x,\xi) \big\Vert_{L^{\infty}(\mathbb R^{2n})},
\end{equation}
where $\nabla_{(x,\xi)}^{(2)}\lambda$ is the Hessian matrix of $\lambda$ and $J$ is the $(2n\times 2n)$ matrix associated to the canonical symplectic form on $\mathbb R^{2n}$ (see (\ref{symplectic form})). 

\begin{theorem}\label{main T1}
Assume \textbf{(A0)}, \textbf{(A1')} and \textbf{(A2')}, and let $Q\in S_{\text{sc}}(1)$. There exists a sequence of $(m\times m)$ matrix-valued $\h$-pseudodifferential operators $\left(( q_j(t) )^w(x,\h D_x)\right)_{j\geq 0}$ such that for all $N\in \mathbb N$, there exists $C_N>0$ such that for all $t\in \mathbb R$, the following estimate holds
\begin{equation}\label{remainder estimate T1}
\bigg\Vert Q(t)-\sum_{j=0}^N \h^j \big(q_j(t)\big)^w(x,\h D_x)  \bigg\Vert_{\mathcal{L}(L^2(\mathbb R^{n})\otimes \mathbb C^m)} \leq C_{N} \h^{N+1} \exp\bigg( \big( 4N+\delta_n\big)\Gamma |t| \bigg),
\end{equation}
where $\delta_n$ is an integer depending only on the dimension $n$. The symbols $q_j(t)$, $j\geq 0$, are defined by formula (\ref{symbole j multiple scalaire}) and satisfy estimates \eqref{As0} and \eqref{Asj}. In particular, the principal symbol $q_0(t)$ is given by 
$$
q_0(t,x,\xi)=T^{-1}(t,x,\xi)Q_0\big(\phi_{\lambda}^t(x,\xi)\big) T(t,x,\xi), \quad t\in \mathbb R, (x,\xi)\in \mathbb R^{2n},
$$
where $T$ is the unitary $(m\times m)$ matrix-valued function solution of the system 
\begin{equation}
\frac{d}{dt} T(t,x,\xi)= -i  H_1\big(\phi_{\lambda}^t(x,\xi)\big)T(t,x,\xi) , \quad T(0,x,\xi)=I_m.
\end{equation}
\end{theorem}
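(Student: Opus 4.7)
The plan is to adapt the Ehrenfest-time approach of Bouzouina--Robert \cite{Rob} to the matrix setting, the new ingredient being the conjugation by the transport matrix $T$ arising from the sub-principal symbol $H_1$. I would split the proof into three stages: (i) iterative construction of the symbols $q_j(t)$ by solving the Cauchy problems $(\mathcal{C}_j)$ in \eqref{Cau inks}; (ii) derivative estimates on the $q_j(t)$ with explicit control of their growth in $|t|$; (iii) Duhamel's formula plus Calder\'on--Vaillancourt to bound the remainder.

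For step (i), since $H_0 = \lambda I_m$ is a scalar multiple of the identity, only odd powers of $\hbar$ survive in the Moyal expansion of $[H_0, q]_\#$, so the $\hbar^0$ coefficient of $\{H,q\}^*$ reduces to $\{\lambda, q_0\} + i[H_1, q_0]$ and $(\mathcal{C}_0)$ reads
\begin{equation*}
\dot q_0(t) = \{\lambda, q_0(t)\} + i[H_1, q_0(t)], \qquad q_0(0) = Q_0.
\end{equation*}
I would verify by direct differentiation that the announced formula $q_0(t,x,\xi) = T^{-1}(t,x,\xi) Q_0(\phi_\lambda^t(x,\xi)) T(t,x,\xi)$ solves it; the key identity is
\begin{equation*}
\{\lambda, T\} = i\bigl(T H_1 - H_1(\phi_\lambda^t) T\bigr),
\end{equation*}
which follows because the difference $F := \{\lambda, T\} - i(T H_1 - H_1(\phi_\lambda^t) T)$ satisfies $\partial_t F = -i H_1(\phi_\lambda^t) F$ with $F(0) = 0$, hence $F \equiv 0$ by Gronwall. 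Unitarity of $T$ is automatic from $H_1$ being hermitian. For $j \geq 1$, $(\mathcal{C}_j)$ is inhomogeneous with a source depending on $q_0, \dots, q_{j-1}$ (coming from the higher-order terms in the Moyal expansion of $\{H, q\}^*$), and is solved by Duhamel's formula on top of the same transport-plus-conjugation structure.

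For step (ii), assumption \textbf{(A2')} together with \eqref{upper bound Gamma} yields $\|\nabla^{(2)} \lambda\|_\infty \leq \Gamma$; Gronwall applied to the variational equation of $\phi_\lambda^t$ gives $\|\partial^\alpha \phi_\lambda^t\|_\infty \leq C_\alpha (1+|t|)^{|\alpha|-1} e^{\Gamma |t|}$ for $|\alpha|\geq 1$, with higher derivatives handled by iterated Gronwall. Fa\`a di Bruno then propagates this into bounds for $\partial^\gamma(Q_0 \circ \phi_\lambda^t)$, and similarly for $\partial^\gamma T(t,\cdot)$ since $T$ is the time-ordered exponential of $-iH_1 \circ \phi_\lambda^t$. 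Composing, one obtains $\|\partial^\gamma q_j(t)\|_\infty \leq C_{\gamma,j} (1+|t|)^{a(\gamma,j)} e^{c(|\gamma|,j)\,\Gamma |t|}$ with $c(|\gamma|, j)$ depending linearly on $|\gamma|$ and $j$.

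For step (iii), writing $q^{(N)}(t) := \sum_{j=0}^N \hbar^j q_j(t)$ and $R_N(t) := Q(t) - (q^{(N)}(t))^w$, the construction of the $q_j$ yields
\begin{equation*}
\frac{d}{dt} R_N(t) = \frac{i}{\hbar}\bigl[H^w, R_N(t)\bigr] + \hbar^{N+1} (F_N(t))^w,
\end{equation*}
with $F_N(t) \in S(1)$ an explicit finite combination of derivatives of the $q_j$'s arising from the $\hbar^{N+1}$-tail of $\{H, q^{(N)}\}^*$. Duhamel's formula combined with the unitarity of $U_H(t)$ gives
\begin{equation*}
\|R_N(t)\|_{\mathcal{L}(L^2(\mathbb R^n)\otimes \mathbb C^m)} \leq \|R_N(0)\| + \hbar^{N+1}\int_0^{|t|}\|(F_N(s))^w\|\, ds,
\end{equation*}
and the Calder\'on--Vaillancourt theorem (Theorem \ref{Cal}) dominates $\|(F_N(s))^w\|$ by finitely many $L^\infty$ seminorms of $F_N(s)$. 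Inserting the bounds from step (ii) produces the claimed exponential factor. The main obstacle is the sharp bookkeeping of the exponent $4N+\delta_n$: each order of the Moyal expansion eats one extra derivative, each derivative of the symbols contributes one factor $e^{\Gamma|t|}$ via the flow, and the conjugation by the non-commutative $T$ prevents directly invoking the scalar estimates of \cite{Rob} --- one has to verify that these losses compound to exactly $4N$ (plus the fixed Calder\'on--Vaillancourt number $\delta_n$) and no more.
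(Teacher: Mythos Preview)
Your three-stage plan coincides with the paper's: formal construction via the transport-plus-conjugation Cauchy problems (Section~\ref{formal Sc} and Appendix~\ref{Ann Cauchy problem general}), exponential derivative bounds on the $q_j(t)$ (Proposition~\ref{u5}), and Duhamel combined with the Moyal-remainder estimate (Theorem~\ref{est ress}) and Calder\'on--Vaillancourt (Section~\ref{prf T1}).

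One correction that bears directly on the bookkeeping you flag as the obstacle: the flow bound $\|\partial^\alpha\phi_\lambda^t\|_\infty\leq C_\alpha(1+|t|)^{|\alpha|-1}e^{\Gamma|t|}$ is too optimistic. Iterating Duhamel on the higher variation equations, the source at order $|\alpha|$ involves products of lower-order derivatives and is already of size $e^{|\alpha|\Gamma|t|}$, so the correct estimate is $\|\partial^\gamma\phi_\lambda^t\|\leq C_\gamma e^{|\gamma|\Gamma|t|}$ (Lemma~\ref{flot1}, taken from \cite{Rob}). With this in hand the paper's induction in Proposition~\ref{u5} yields the precise exponents $\|\partial^\gamma q_0(t)\|\leq C e^{|\gamma|\Gamma|t|}$ and $\|\partial^\gamma q_j(t)\|\leq C e^{(2|\gamma|+4j-3)\Gamma|t|}$ for $j\geq1$; the coefficient $2$ in front of $|\gamma|$ (absent in the scalar case of \cite{Rob}) is caused by the extra conjugation by $T(-s,\phi_\lambda^t)$ inside the Duhamel integral for $q_j$, as explained in Remark~\ref{explanation}. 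Feeding these into Theorem~\ref{est ress} at orders $N+1-j$ and summing over $j=0,\dots,N$ then gives exactly the exponent $4N+\delta_n$.
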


As a consequence of estimate \eqref{remainder estimate T1}, we get the following corollary about the Ehrenfest time for the validity of the semiclassical approximation.

\begin{corollary}\label{first coro}
Under the assumptions of Theorem \ref{main T1}, for all $N \geq 1$, there exists $C_N>0$ such that for every $\varepsilon>0$, we have 
\begin{equation}
\bigg\Vert Q(t)-\sum_{j=0}^N \h^j \big(q_{j}(t)\big)^w(x,\h D_x)  \bigg\Vert_{\mathcal{L}(L^2(\mathbb R^{n})\otimes \mathbb C^m)}\leq C_N \h^{\varepsilon N+1} \h^{\frac{(\varepsilon-1)}{4}\delta_n} ,
\end{equation}
uniformly for $|t|\leq \displaystyle{\frac{(1-\varepsilon)}{4\Gamma}\log(\h^{-1})}$. 
\end{corollary}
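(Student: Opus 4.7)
The plan is to obtain the corollary as a direct consequence of the uniform estimate \eqref{remainder estimate T1} of Theorem \ref{main T1}, by substituting the Ehrenfest-type bound $|t|\leq T_{\varepsilon}(\h):=\frac{1-\varepsilon}{4\Gamma}\log(\h^{-1})$ into the exponential factor and converting it into a power of $\h$. First I would fix $N\geq 1$ and $\varepsilon>0$, and start from the bound
$$
\bigg\Vert Q(t)-\sum_{j=0}^N \h^j \big(q_j(t)\big)^w\bigg\Vert_{\mathcal{L}(L^2(\mathbb R^n)\otimes \mathbb C^m)} \leq C_N \h^{N+1}\exp\big((4N+\delta_n)\Gamma |t|\big),\quad t\in \mathbb R,
$$
provided by Theorem \ref{main T1}.

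Next I would specialize to $|t|\leq T_{\varepsilon}(\h)$. Using $\exp\big(\alpha \log(\h^{-1})\big)=\h^{-\alpha}$, the exponential factor is estimated by
$$
\exp\Big(\tfrac{(4N+\delta_n)(1-\varepsilon)}{4}\log(\h^{-1})\Big)=\h^{-(4N+\delta_n)(1-\varepsilon)/4}.
$$
Combining with the prefactor $\h^{N+1}$ and collecting exponents, I would use the elementary identity
$$
N+1-\frac{(4N+\delta_n)(1-\varepsilon)}{4}=\varepsilon N+1+\frac{(\varepsilon-1)\delta_n}{4},
$$
which gives the overall bound $C_N\,\h^{\varepsilon N+1}\,\h^{(\varepsilon-1)\delta_n/4}$, matching exactly the right-hand side of the corollary.

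I do not expect any genuine obstacle: all the analytic substance, namely the control of the remainder by an exponential whose argument is \emph{linear} in $|t|$, is carried by Theorem \ref{main T1}. The corollary is a cosmetic rephrasing at the specific time scale $T_{\varepsilon}(\h)$; the only care required is the bookkeeping of exponents above. Note in particular that for every $\varepsilon\in (0,1)$ the right-hand side tends to $0$ as $\h\searrow 0$ (since $\varepsilon N+1+(\varepsilon-1)\delta_n/4>0$ for $\h$ small enough, as $N\geq 1$ and $\varepsilon$ can be chosen so that the positive $\varepsilon N +1$ dominates the negative correction), which is what makes the Ehrenfest-type time $T_\varepsilon(\h)\simeq \log(\h^{-1})$ a meaningful range of validity for the semiclassical approximation.
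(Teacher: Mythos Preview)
Your proposal is correct and is exactly the intended derivation: the paper does not give a separate proof of this corollary, treating it as an immediate consequence of estimate \eqref{remainder estimate T1} obtained by substituting $|t|\leq \frac{(1-\varepsilon)}{4\Gamma}\log(\h^{-1})$ into the exponential. Your bookkeeping of exponents is accurate; the only minor slip is in the final remark, where the positivity condition $\varepsilon N+1+(\varepsilon-1)\delta_n/4>0$ is a condition on $\varepsilon$ and $N$ (not on $\h$), but this is commentary rather than part of the proof.
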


\begin{remark}\label{première remarque}
\begin{itemize}
\item[(i)] The upper bound $\Gamma$ is used to control the exponential growth of the flow $\phi_{\lambda}^t$ at infinity (see Lemma \ref{flot1}).
\item[(ii)] The constant $\delta_n$ is related to the universal constant in the Calder\'on-Vaillancourt Theorem (Theorem \ref{Cal}). See the end of the proof of Theorem \ref{main T1}.
\item[(iii)] Notice that for $m\geq 2$, our estimate on the remainder term \eqref{remainder estimate T1} is different from the one proved in the scalar case (see \cite[Theorem 1.4]{Rob}) where the argument in the exponential term was $2N + \delta'_n$ with $\delta'_n$ a universal constant. In particular, the constant $\displaystyle{\frac{1}{4\Gamma}}$ in the Ehrenfest time up to which the semiclassical approximation remains valid is half of the one proved in \cite{Rob}. This is due to the matrix structure of the sub-principal symbol $H_1$ (see Remark \ref{explanation} for more details). 
\end{itemize}
\end{remark}




\subsection{General case}
Now we drop the assumption \textbf{(A1')}. We assume that

{\textbf{(A1).} There exists $l\in \{1,...,m\}$ and $r_1,...,r_l\in \mathbb N^*$ with $r_1+\cdots+r_l=m$ such that $H_0(x,\xi)$ admits exactly $l$ distinct eigenvalues $\lambda_1(x,\xi)<\cdots<\lambda_l(x,\xi)$ with constant multiplicities on $\mathbb R^{2n}$ given by $r_1,...,r_l$ respectively, satisfying : there exists a constant $\rho >0$ such that for all $1\leq \mu \neq \nu \leq l$,
\begin{equation}\label{Gap}
|\lambda_{\mu}(x,\xi) - \lambda_{\nu}(x,\xi)| \geq \rho g(x,\xi), \quad \text{for}\;\; |x|+|\xi| \geq c>0.
\end{equation}

\textbf{(A2).} For all $\gamma\in \mathbb N^{2n}$ and $j\in \{0,1\}$,
$$
\partial_{(x,\xi)}^{\gamma} H_j \in L^{\infty}(\mathbb R^{2n}), \quad \text{for}\;\;|\gamma|+j\geq 1.
$$ 

For $\nu \in \{1,...,l\}$, let $P_{\nu,0}(x,\xi)$ be the eigenprojector associated to the eigenvalue $\lambda_{\nu}(x,\xi)$. The assumption \textbf{(A1)} ensures that the functions $(x,\xi)\mapsto \lambda_{\nu}(x,\xi)$ and $(x,\xi)\mapsto P_{\nu,0}(x,\xi)$ are smooth in $\mathbb R^{2n}$. Moreover, in Lemma \ref{smoothness}, we show that $P_{\nu,0}\in S(1)$ and $\lambda_{\nu}\in S(g)$, for all $1\leq \nu \leq l$.

As in  \cite{bolte}  see also \cite{Hel, Sor} and  Theorem \ref{projections}, 
we construct  $l$ $\h$-pseudodifferential operators $P_1^w(x,\h D_x;\h)$, ..., $P_l^w(x,\h D_x;\h)$ satisfying  
\begin{equation*}
\big(P_{\nu}^w\big)^2 = \big(P_{\nu}^w\big)^* = P_{\nu}^w,
\end{equation*}
and 
\begin{equation*}\label{properties}
[H^w,P_{\nu}^w] = 0, \quad \sum_{\nu=1}^l P_{\nu}^w = \text{id}_{L^2(\mathbb R^n)\otimes \mathbb C^m}, \quad P_{\nu}^w P_{\mu}^w = 0 , \quad \forall 1 \leq \nu \neq \mu \leq l,
\end{equation*}
modulo $\mathcal{O}(\h^{\infty})$ in norm $\mathcal{L}(L^2(\mathbb R^n)\otimes \mathbb C^m)$. For $\nu\in \{1,...,l\}$, the principal symbol of $P_{\nu}^w(x,\h D_x;\h)$ coincides with the eigenprojector $P_{\nu,0}$. The operators $\big(P_{\nu}^w(x,\h D_x;\h)\big)_{1\leq \nu \leq l}$ are called semiclassical projections associated to $H^w(x,\h D_x;\h)$.

As indicated in \cite[Proposition 3.2]{bolte} (see also Remark \ref{la classe Q de 1}), to construct a complete asympotic expansion in powers of $\h$ for $Q(t)$, some restrictions on the initial observable $Q$ are necessary. We introduce the class $\mathcal{Q}(1)$ of observables $Q\in S_{\text{sc}}(1)$ that are "semiclassically" block-diagonal with respect to the semiclassical projections $P_{\nu}$, $1\leq \nu \leq l$, i.e.
$$
\mathcal{Q}(1) := \left\{ Q \in S_{\text{sc}}(1) \: ;\; Q \sim \sum_{\nu=1}^l P_{\nu} \# Q \# P_{\nu}\;\; \text{in} \;\; S(1) \right\}.
$$
In particular, using formula \eqref{deduc2}, one sees that if $Q\in \mathcal{Q}(1)$ then $Q_0$ is block diagonal with respect to the eigenprojectors $P_{\nu,0}$, i.e. 
$$
Q_0(x,\xi) = \sum_{\nu=1}^l P_{\nu,0}(x,\xi) Q_0(x,\xi) P_{\nu,0}(x,\xi), \quad \forall (x,\xi)\in \mathbb R^{2n}.
$$
Let $\phi_{\nu}^t$ be the Hamiltonian flow generated by the eigenvalue $\lambda_{\nu}$. The assumption \textbf{(A2)} ensures that $\phi_{\nu}^t(x,\xi)$ exists globally on $\mathbb R$, for all $(x,\xi)\in \mathbb R^{2n}$, $1\leq \nu \leq l$.

Put
\begin{equation}\label{upper bounds}
\Gamma_{\nu}:= \Vert J \nabla_{(x,\xi)}^{(2)} \lambda_{\nu}(x,\xi) \Vert_{L^{\infty}(\mathbb R^{2n})}, \quad \Gamma_{\text{max}} := \max_{1\leq \nu \leq l} \Gamma_{\nu} ,
\end{equation}
where $\nabla_{(x,\xi)}^{(2)}\lambda_{\nu}$ denotes the Hessian matrix of $\lambda_{\nu}$, $1\leq \nu \leq l$.

Our main result of this paper is the following 
\begin{theorem}\label{main T}
Assume \textbf{(A0-2)} and let $Q\in \mathcal{Q}(1)$. There exists a sequence $\big( (q_j(t))^w(x,\h D_x) \big)_{j\geq 0}$ of $(m\times m)$ matrix-valued $\h$-pseudodifferential operators such that for all $N\in \mathbb N$, there exists $C_{N}>0$ such that for all $t\in \mathbb R$, the following estimate holds
\begin{align}\label{semmm}
\bigg\Vert Q(t)-\sum_{j=0}^N \h^j \big(q_{j}(t)\big)^w(x,\h D_x)  \bigg\Vert_{\mathcal{L}(L^2(\mathbb R^{n})\otimes \mathbb C^m)} & \leq C_{N} \h^{N+1} \exp\bigg( (4N+\tilde{\delta}_n)\Gamma_{\text{max}} |t| \bigg) ,
\end{align}
where $\tilde{\delta}_n$ is an integer depending only on the dimension $n$. The symbols $q_{j}(t,x,\xi)$ are defined for $t\in \mathbb R$ and $(x,\xi)\in \mathbb R^{2n}$ by 
$$
q_j(t,x,\xi) := \sum_{\nu=1}^l q_{\nu,j}(t,x,\xi), \quad j\geq 0,
$$
where $q_{\nu,j}(t)$ are given by the general formula (\ref{general formula for the solution}) and satisfy estimates \eqref{AG0} and \eqref{AGj}. In particular, the principal symbol $q_0(t)$ is given by
\begin{equation}\label{symbole principal th}
q_0(t,x,\xi) = \sum_{\nu=1}^l T_{\nu}^{-1}(t,x,\xi) \big(P_{\nu,0} Q_0 P_{\nu,0}\big)\big( \phi_{\nu}^t(x,\xi) \big) T_{\nu}(t,x,\xi),
\end{equation}
where $T_{\nu}$ is the unitary $(m\times m)$ matrix-valued function solution of the system

\begin{equation}
\frac{d}{dt} T_{\nu}(t,x,\xi) = -i \tilde{H}_{\nu,1}\big( \phi_{\nu}^t(x,\xi) \big) T_{\nu}(t,x,\xi) \quad T_{\nu}(0,x,\xi) = I_m.
\end{equation}
Here $\tilde{H}_{\nu,1}$ is the $(m\times m)$ hermitian-valued function defined by 
\begin{equation}
\tilde{H}_{\nu,1}= \frac{1}{2i} P_{\nu,0} \big\{P_{\nu,0},H_0 \big\}P_{\nu,0}  - i \big[ P_{\nu,0},\{\lambda_{\nu},P_{\nu,0}\} \big] + P_{\nu,0}H_1 P_{\nu,0}.
\end{equation}
\end{theorem}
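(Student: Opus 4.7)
The plan is to reduce the matrix-valued problem, via the semiclassical projections $P_{\nu}^w$, to $l$ independent block problems that each fall exactly under the scalar-principal-symbol setting of Theorem \ref{main T1}, and then to paste the resulting block expansions together. Concretely, since $Q \in \mathcal{Q}(1)$, the definition of this class together with Calderón-Vaillancourt gives $Q^w = \sum_{\nu=1}^l P_\nu^w Q^w P_\nu^w + \mathcal{O}(\h^\infty)$ in $\mathcal{L}(L^2(\mathbb R^n)\otimes\mathbb C^m)$. Because $[H^w,P_\nu^w]=\mathcal{O}(\h^\infty)$, Duhamel's formula yields $[U_H(t),P_\nu^w]=\mathcal{O}(\h^\infty |t|)$, which remains $\mathcal{O}(\h^\infty)$ on any logarithmic interval $|t|\leq C\log(\h^{-1})$. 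This lets me sandwich the Heisenberg evolution between projectors and replace, on $\mathrm{Ran}\,P_\nu^w$, the full Hamiltonian $H^w$ by the effective block Hamiltonian $P_\nu^w H^w P_\nu^w$ modulo $\mathcal{O}(\h^\infty)$, uniformly on Ehrenfest intervals.

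A Moyal-product expansion of $P_\nu \# H \# P_\nu$ to order $\h$ then produces an effective symbol whose principal part is the scalar $\lambda_\nu P_{\nu,0}$ and whose subprincipal part is exactly the hermitian matrix $\tilde H_{\nu,1}$ stated in the theorem. The two non-obvious terms $\frac{1}{2i}P_{\nu,0}\{P_{\nu,0},H_0\}P_{\nu,0}$ and $-i[P_{\nu,0},\{\lambda_\nu,P_{\nu,0}\}]$ come out by differentiating the eigenvalue identity $H_0 P_{\nu,0}=\lambda_\nu P_{\nu,0}$ to trade derivatives of $H_0$ for derivatives of $\lambda_\nu$ and $P_{\nu,0}$, and then regrouping so that $\tilde H_{\nu,1}$ is manifestly hermitian. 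Each block now fits the framework of assumption \textbf{(A1')}, and applying Theorem \ref{main T1} to it yields inductively the sequence $(q_{\nu,j}(t))_{j\geq 0}$ solving the Cauchy problems $(\mathcal{C}_j)$ of \eqref{Cau inks}: the principal symbol is the conjugation formula \eqref{symbole principal th} built from the unitary cocycle $T_\nu$, and the higher-order symbols are obtained by integrating Moyal-bracket source terms along the flow $\phi_\nu^t$, giving formula \eqref{general formula for the solution}. Setting $q_j(t):=\sum_\nu q_{\nu,j}(t)$ assembles the block expansions.

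The core analytic work is the uniform-in-time remainder estimate \eqref{semmm}. I would first apply Gronwall to the variational equation for $\phi_\nu^t$, using assumption \textbf{(A2)} and the definition \eqref{upper bounds} of $\Gamma_\nu$, to obtain $\|\partial^\gamma \phi_\nu^t\|_\infty \leq C_\gamma e^{|\gamma|\Gamma_\nu|t|}$; a second Gronwall on the ODE satisfied by $T_\nu$ gives an analogous exponential bound for the transport matrices and their $(x,\xi)$-derivatives. Feeding these into the inductive construction of the $q_{\nu,j}(t)$ yields the semi-norm estimates \eqref{AG0} and \eqref{AGj}. The remainder $R_N(t):=Q(t)-\sum_{j=0}^N \h^j (q_j(t))^w$ then satisfies a Heisenberg equation whose forcing is the Weyl quantization of a symbol of size $\h^{N+1}$ with seminorms bounded by a constant times $e^{(4N+\tilde\delta_n)\Gamma_{\max}|s|}$; integrating via Duhamel, bounding the quantized symbol by Calderón-Vaillancourt (which contributes the dimensional constant $\tilde\delta_n$), and using the unitarity of $U_H$ yields \eqref{semmm}, and hence the corollary for Ehrenfest times $|t|\leq \frac{1-\varepsilon}{4\Gamma_{\max}}\log(\h^{-1})$.

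The main obstacle is twofold. First, one has to verify that the block reduction is genuinely consistent with the scalar-principal-symbol framework, i.e. that the various bracket terms involving derivatives of $P_{\nu,0}$ assemble into the hermitian matrix $\tilde H_{\nu,1}$ and do not survive with the wrong sign or asymmetry — this is where the Helffer-Sjöstrand geometry of the almost-invariant subspaces really enters. Second, and more quantitatively, this is precisely the step where the exponent becomes $4N$ rather than the $2N$ of the scalar Bouzouina-Robert estimate: every differentiation of the matrix conjugation $T_\nu^{-1}(\cdot)T_\nu$ costs an additional factor $e^{\Gamma_{\max}|t|}$, effectively doubling the Lyapunov rate at every order, as already anticipated in Remark \ref{première remarque}(iii). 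Keeping this doubling under control uniformly across the inductive construction, while simultaneously tracking the $\mathcal O(\h^\infty |t|)$ errors coming from the commutation of $U_H(t)$ with $P_\nu^w$, is the technical bulk of the argument.
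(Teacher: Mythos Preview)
Your overall architecture --- block-diagonalize via the semiclassical projections $P_\nu^w$, use Duhamel and $[H^w,P_\nu^w]=\mathcal O(\h^\infty)$ to replace $U_H(t)$ by $e^{-\frac{it}{\h}P_\nu^w H^w P_\nu^w}$ on each block up to $\mathcal O((1+|t|)\h^\infty)$, and then estimate each block remainder by Calder\'on--Vaillancourt --- matches the paper's proof (Proposition~\ref{proposition réduction}, Lemma~\ref{jus}, Lemma~\ref{rem est geni}). The analytic part of your sketch (Gronwall on $\phi_\nu^t$ and $T_\nu$, inductive seminorm bounds, Duhamel for the remainder) is also essentially what is done.

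The genuine gap is the sentence ``Each block now fits the framework of assumption \textbf{(A1')}, and applying Theorem~\ref{main T1} to it yields \ldots''. The principal symbol of the block Hamiltonian $H_\nu=P_\nu\#H\#P_\nu$ is $H_{\nu,0}=\lambda_\nu P_{\nu,0}$, which is \emph{not} a scalar multiple of $I_m$ but a scalar times a rank-$r_\nu$ projector; hypothesis \textbf{(A1')} fails and Theorem~\ref{main T1} does not apply as stated. Concretely, in the Cauchy problem $(\mathcal C_0)$ for the block, the term $\{H_{\nu,0},q_{\nu,0}(t)\}=\{\lambda_\nu P_{\nu,0},q_{\nu,0}(t)\}$ is not simply $\{\lambda_\nu,q_{\nu,0}(t)\}$: derivatives of $P_{\nu,0}$ appear and must be absorbed somewhere. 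The paper does \emph{not} invoke Theorem~\ref{main T1} here; instead it seeks the solution in the structured form $q_\nu(t)\sim P_\nu\#\big(\sum_k\h^k\tilde q_{\nu,k}(t)\big)\#P_\nu$ (formula~\eqref{forme de la solution}) and uses a purely algebraic identity (Lemma~\ref{Lemma Spohn}, from Spohn) to rewrite
\[
\tfrac{1}{2}P_{\nu,0}\big(\{\lambda_\nu P_{\nu,0},W\}-\{W,\lambda_\nu P_{\nu,0}\}\big)P_{\nu,0}
=\{\lambda_\nu,P_{\nu,0}WP_{\nu,0}\}+i[\tilde H_{\nu,1}',P_{\nu,0}WP_{\nu,0}],
\]
which is exactly what converts the block equation into one of scalar-principal type with effective subprincipal term $\tilde H_{\nu,1}$ (Proposition~\ref{main proposition construction}). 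In particular, $\tilde H_{\nu,1}$ is \emph{not} just ``the subprincipal symbol of $P_\nu\#H\#P_\nu$'' as you suggest: the term $-i[P_{\nu,0},\{\lambda_\nu,P_{\nu,0}\}]$ does not come from the Moyal expansion of $H_\nu$ but from this Spohn-type rearrangement of the Poisson bracket with $\lambda_\nu P_{\nu,0}$. One also needs Lemma~\ref{Lemma remove} to guarantee that the constraint $\tilde q_{\nu,j}(t)=P_{\nu,0}\tilde q_{\nu,j}(t)P_{\nu,0}$ is propagated in time, so that the ansatz is self-consistent. Without these two ingredients the block reduction does not close, and your appeal to Theorem~\ref{main T1} is circular.
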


As a consequence we get the following corollary.
\begin{corollary}\label{second coro}
Under the assumptions of Theorem \ref{main T}, for all $N\geq 1$ there exists $C_N>0$ such that for all $\varepsilon>0$, we have 
\begin{equation}\label{uniform limit2}
\bigg\Vert Q(t)-\sum_{j=0}^N \h^j (q_j(t))^w(x,\h D_x) \bigg\Vert_{\mathcal{L}(L^2(\mathbb R^{n})\otimes \mathbb C^m)} \leq C_N \h^{\varepsilon N+1} \h^{\frac{(\varepsilon-1)}{4} \tilde{\delta}_n},
\end{equation}
uniformly for $|t|\leq \displaystyle{\frac{(1-\varepsilon)}{4 \Gamma_{\text{max}}}} \log( \h^{-1})$.
\end{corollary}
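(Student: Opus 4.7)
The plan is to derive Corollary \ref{second coro} directly from the exponential-in-time remainder estimate \eqref{semmm} of Theorem \ref{main T}; there is essentially no new ingredient, only a calibration of the time window against the rate $\Gamma_{\text{max}}$ so that the exponential factor is absorbed into a controlled power of $\h$.

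First, I would fix $N\geq 1$, $\varepsilon>0$, and set $T_\varepsilon(\h):= \frac{(1-\varepsilon)}{4\Gamma_{\text{max}}}\log(\h^{-1})$. For $|t|\leq T_\varepsilon(\h)$, the exponential factor appearing on the right-hand side of \eqref{semmm} satisfies
\begin{equation*}
\exp\bigl((4N+\tilde{\delta}_n)\Gamma_{\text{max}}|t|\bigr)
\leq \exp\!\Bigl(\tfrac{(1-\varepsilon)(4N+\tilde{\delta}_n)}{4}\log(\h^{-1})\Bigr)
= \h^{-(1-\varepsilon)N}\,\h^{-\frac{(1-\varepsilon)\tilde{\delta}_n}{4}}.
\end{equation*}
Rewriting the last exponent as $\h^{\frac{(\varepsilon-1)\tilde{\delta}_n}{4}}$, I would multiply by the prefactor $\h^{N+1}$ already present in \eqref{semmm} and collect powers:
\begin{equation*}
\h^{N+1}\cdot \h^{-(1-\varepsilon)N}\cdot \h^{\frac{(\varepsilon-1)\tilde{\delta}_n}{4}}
= \h^{\varepsilon N+1}\,\h^{\frac{(\varepsilon-1)\tilde{\delta}_n}{4}},
\end{equation*}
which is exactly the bound \eqref{uniform limit2}. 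The constant $C_N$ is the one furnished by Theorem \ref{main T} and does not depend on $\varepsilon$ since the time localization is used only to estimate the exponential factor pointwise in $t$.

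Since the proof is pure algebra on top of \eqref{semmm}, there is no substantive obstacle; the only point that deserves a word of care is that the estimate of Theorem \ref{main T} holds for every $t\in\mathbb R$ with a uniform constant $C_N$, so one is entitled to plug in the $\h$-dependent upper bound $T_\varepsilon(\h)$ for $|t|$ without having to revisit the construction of the symbols $q_j(t)$ or the remainder analysis. In this sense the corollary merely makes explicit the Ehrenfest-type threshold $\frac{1}{4\Gamma_{\text{max}}}\log(\h^{-1})$ that is already implicit in the exponential rate of \eqref{semmm}.
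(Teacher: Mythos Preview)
Your proof is correct and is exactly the intended derivation: the paper does not spell out a proof of Corollary \ref{second coro} (it is stated simply ``As a consequence we get the following corollary''), and the obvious route---substituting the time bound into the exponential estimate \eqref{semmm} of Theorem \ref{main T} and collecting powers of $\h$---is precisely what you carry out. The algebra and the remark on the $\varepsilon$-independence of $C_N$ are both accurate.
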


If we only look for the principal symbol of $Q(t)$, the assumption $Q\in \mathcal{Q}(1)$ can be relaxed and we have the following result.

\begin{corollary}\label{sec coro}
Let $H$ be a semiclassical Hamiltonian satisfying the assumptions of Theorem \ref{main T} and let $Q\in S_{\text{sc}}(1)$. We assume that $Q_0(x,\xi) = \sum_{\nu = 1}^l P_{\nu,0}(x,\xi) \tilde{Q}(x,\xi) P_{\nu,0}(x,\xi)$ for some $\tilde{Q}\in S(1)$. There exists $C>0$ such that for all $t\in \mathbb R$, the following estimate holds
\begin{align*}
\bigg\Vert Q(t)- \big( q_{0}(t) \big)^w(x,\h D_x)  \bigg\Vert_{\mathcal{L}(L^2(\mathbb R^{n})\otimes \mathbb C^m)} & \leq C \h \exp\big( \tilde{\delta}_n\Gamma_{\text{max}} |t| \big),
\end{align*} 
where $q_0(t)$ is given by \eqref{symbole principal th}.
\end{corollary}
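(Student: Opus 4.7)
The plan is to reduce Corollary \ref{sec coro} to Theorem \ref{main T} by replacing $Q$ with a nearby observable that actually lies in the class $\mathcal{Q}(1)$. The obstruction to applying Theorem \ref{main T} directly is only at orders $\hbar^j$ for $j \geq 1$; since here we retain only the principal term of the expansion, these higher-order defects can be absorbed into the $\mathcal{O}(\hbar)$ remainder.

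First I would introduce the semiclassically block-diagonalized observable
$$
R := \sum_{\nu=1}^{l} P_\nu \# Q \# P_\nu \in S_{\text{sc}}(1).
$$
Using the near-orthogonality relations $P_\mu \# P_\nu \sim \delta_{\mu\nu}\, P_\nu$ in $S(1)$ satisfied by the semiclassical projections, a direct Moyal-calculus computation gives $\sum_\mu P_\mu \# R \# P_\mu \sim R$ in $S(1)$, so $R \in \mathcal{Q}(1)$. Moreover, the principal symbol of $R$ equals $\sum_\nu P_{\nu,0} Q_0 P_{\nu,0}$, which by the block-diagonal hypothesis on $Q_0$ (together with $P_{\nu,0} P_{\mu,0} = \delta_{\mu\nu} P_{\nu,0}$ pointwise) coincides with $Q_0$ itself. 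Therefore $Q - R \in \hbar\, S_{\text{sc}}(1)$, and the Calder\'on-Vaillancourt theorem yields $\|(Q-R)^w\|_{\mathcal{L}(L^2(\mathbb R^n)\otimes \mathbb C^m)} = \mathcal{O}(\hbar)$.

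Next I would apply Theorem \ref{main T} to $R$ at order $N=0$, which produces
$$
\bigl\| R(t) - (r_0(t))^w \bigr\|_{\mathcal{L}(L^2(\mathbb R^n)\otimes \mathbb C^m)} \leq C\,\hbar \exp\bigl( \tilde{\delta}_n\, \Gamma_{\text{max}}\, |t| \bigr),
$$
where $r_0(t)$ is the symbol produced by formula \eqref{symbole principal th} with initial data $R_0$. Since $r_0(t)$ depends on $R_0$ only through the projected combinations $P_{\nu,0} R_0 P_{\nu,0} = P_{\nu,0} Q_0 P_{\nu,0}$, we have $r_0(t) = q_0(t)$ identically.

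Finally, the unitarity of $U_H(t)$ on $L^2(\mathbb R^n) \otimes \mathbb C^m$ gives $\|Q(t)-R(t)\| = \|(Q-R)^w\| = \mathcal{O}(\hbar)$ uniformly in $t$; combining this with the previous estimate through the triangle inequality yields the stated bound. The step I expect to require the most care is the verification that $R \in \mathcal{Q}(1)$, which amounts to propagating $P_\mu \# P_\nu \sim \delta_{\mu\nu} P_\nu$ through two Moyal products and controlling the resulting errors in every seminorm of $S(1)$; this is essentially routine given the explicit construction of the semiclassical projections.
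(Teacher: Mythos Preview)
Your argument is correct and takes a genuinely different route from the paper. The paper does not introduce the auxiliary observable $R=\sum_\nu P_\nu\# Q\# P_\nu$ at all; instead it reopens the proof of Theorem \ref{main T} and invokes the internal block-diagonalization estimate (Proposition \ref{proposition réduction} (ii)), which under the hypothesis $[Q_0,P_{\nu,0}]=0$ gives $\big\Vert Q(t)-\sum_\nu Q_\nu(t)\big\Vert=\mathcal{O}\big((1+|t|)\h\big)$, and then combines this with Lemma \ref{rem est geni} at $N=0$ for each block $Q_\nu(t)$. Your approach is more modular: by passing to $R\in\mathcal{Q}(1)$ you can apply Theorem \ref{main T} as a black box, and since $Q(t)-R(t)$ is a unitary conjugate of $(Q-R)^w$ the comparison error is $\mathcal{O}(\h)$ \emph{uniformly} in $t$, with no $(1+|t|)$ prefactor. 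The paper's route has the advantage of not requiring the verification that $R\in\mathcal{Q}(1)$ (which, as you note, needs $P_\mu\# P_\nu\sim\delta_{\mu\nu}P_\nu$ in every seminorm of $S(1)$, not merely in operator norm); your route has the advantage of being a clean reduction to the already-proved theorem.
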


\begin{remark}\label{la classe Q de 1}
In \cite[Proposition 3.4]{bolte}, it was shown that the class $\mathcal{Q}(1)$ exhausts all symbols $Q\in S_{\text{sc}}(1)$ such that the corresponding Heisenberg observable $Q(t)$ is an $\h$-pseudodifferential operator with symbol $q(t)\in S_{\text{sc}}(1)$, for all finite time $t$. More explicitly, if $H$ satisfies the assumptions of Theorem \ref{main T}, then we have 
$$
\bigg\{ Q\in S_{\text{sc}}(1) \: ; \;\; \forall |t|\leq \overline{t}< \infty, \; Q(t) =  \big(q(t)\big)^w(x,\h D_x;\h), \;\; \text{with}\; q(t)\in S_{\text{sc}}(1)  \bigg\} = \mathcal{Q}(1).
$$
\end{remark}

\section{Hamiltonian with scalar principal symbol}\label{Sc}

In this section, we study the particular case where the principal symbol $H_0$ is a scalar multiple of the identity in $M_m(\mathbb C)$. The proof of Theorem \ref{main T1} relies essentially on the following steps. In the next paragraph, using assumption \textbf{(A1')}, we construct a formal asymptotic expansion in powers of $\h$ for $Q(t)$ by solving the Cauchy problems $(\mathcal{C}_j)_{j\geq 0}$ (see \eqref{Cau inks}). The constructed matrix-valued functions $(q_j(t,x,\xi))_{j\geq 0}$ are defined by forumla \eqref{symbole j multiple scalaire}. Since we are interested in the semiclassical approximation for $Q(t)$ up to times of Ehrenfest type, we give in Proposition \ref{u5} uniform (in time) estimates on the derivatives with respect to $(x,\xi)$ of the symbols $(q_j(t,x,\xi))_{j\geq 0}$. Then, using these estimates, we prove \eqref{remainder estimate T1} by following the method of Bouzouina-Robert \cite{Rob}.

\subsection{Formal asymptotic expansion}\label{formal Sc}
Let $H(x,\xi;\h) = H_0(x,\xi) + \h H_1(x,\xi)$ be a semiclassical Hamiltonian and suppose that $H_0$ satisfies \textbf{(A1')}. According to this assumption, the principal symbol of $[H,q(t)]_{\#}$ given by the commutator $[H_0,q_0(t)]$ vanishes for all $t\in \mathbb R$. Consequently, using the rule of asymptotic expansion of the product of symbols (formula (\ref{développement asymptotique})), the symbol $\{H,q(t)\}^*$ can be expended in a power series of $\h$ (see formula (\ref{rMp})) and then the Cauchy problems $(\mathcal{C}_j)_{j\geq 0}$ become
\begin{align}(\mathcal{C}_j)\label{Cj}
\left\{   
\begin{array}{rcl}
\displaystyle{\frac{d}{dt}q_j(t)} &=& \sum_{|\alpha|+|\beta|+k+p=j+1} \tilde{\gamma}(\alpha,\beta) \bigg( {H_k}_{(\alpha)}^{(\beta)}{q_p(t)}_{(\beta)}^{(\alpha)}-(-1)^{|\alpha|-|\beta|}{q_p(t)}_{(\beta)}^{(\alpha)}{H_k}_{(\alpha)}^{(\beta)} \bigg)\\
\\
{q_j(t)}_{|t=0}&=& Q_j, 
\end{array} 
\right.
\end{align}
with $\tilde{\gamma}(\alpha,\beta):= \displaystyle{\frac{i(-1)^{| \beta |}}{(2i)^{| \alpha | + | \beta |}\fact{\alpha}\fact{\beta}} }$.

Thanks to assumption \textbf{(A1')} again, for $p=j+1$, the right hand side of \eqref{Cj} is equal to $i[H_0,q_{j+1}(t)]$ which vanishes for all $t\in \mathbb R$. Then, $(\mathcal{C}_j)$ can be rewritten in the following form
\begin{align}
\displaystyle{\frac{d}{dt}q_j(t)} &=\sum\limits_{\underset{0\leq p \leq j}{|\alpha|+|\beta|+k+p=j+1}} \tilde{\gamma}(\alpha,\beta) \bigg( {H_k}_{(\alpha)}^{(\beta)}{q_p(t)}_{(\beta)}^{(\alpha)}-(-1)^{|\alpha|-|\beta|}{q_p(t)}_{(\beta)}^{(\alpha)}{H_k}_{(\alpha)}^{(\beta)} \bigg) \nonumber \\
&= \{\lambda,q_j(t)\}+i[H_1,q_j(t)]+ \sum\limits_{\underset{0\leq p \leq j-1}{|\alpha|+|\beta|+k+p=j+1}} \tilde{\gamma}(\alpha,\beta) \bigg( {H_k}_{(\alpha)}^{(\beta)}{q_p(t)}_{(\beta)}^{(\alpha)}-(-1)^{|\alpha|-|\beta|}{q_p(t)}_{(\beta)}^{(\alpha)}{H_k}_{(\alpha)}^{(\beta)} \bigg). \label{rewritten}
\end{align}
For $j\geq 0$, we set 
\begin{equation}\label{expression de B1}
B_j(t,x,\xi):= \sum\limits_{\underset{0\leq p \leq j-1}{|\alpha|+|\beta|+k+p=j+1}} \tilde{\gamma}(\alpha,\beta) \bigg( {H_k}_{(\alpha)}^{(\beta)}{q_p(t)}_{(\beta)}^{(\alpha)}-(-1)^{|\alpha|-|\beta|}{q_p(t)}_{(\beta)}^{(\alpha)}{H_k}_{(\alpha)}^{(\beta)} \bigg)(x,\xi),
\end{equation}
with the convention $B_0=0$ since the sum is empty. Before giving the solution of (\ref{rewritten}), let us make the following remark concerning the case where the sub-principal symbol $H_1$ is also scalar-valued.
\begin{remark}\label{les symboles dans le cas scalaire} Suppose that $H_1$ is a scalar real-valued symbol. Then, $[H_1,q_j(t)]$ vanishes and one can easily verify that equation \eqref{rewritten} is equivalent to the following one 
$$
\frac{d}{dt} \bigg( q_j\big(t,\phi_{\lambda}^{-t}(x,\xi)\big) \bigg) = B_j\big(t,\phi_{\lambda}^{-t}(x,\xi)\big), \quad j\geq 0,
$$
where $B_j$ can be rewritten in the simpler form 
\begin{equation}\label{Les Bj}
B_j(t,x,\xi) = \sum\limits_{\underset{0\leq p \leq j-1}{|\alpha|+|\beta|+k+p=j+1}} \frac{i((-1)^{|\beta|}-(-1)^{|\alpha|})}{(2i)^{|\alpha|+|\beta|}\fact{\alpha}\fact{\beta}}  \left({H_k}_{(\alpha)}^{(\beta)}(x,\xi) {q_p(t)}_{(\beta)}^{(\alpha)}(x,\xi)\right).
\end{equation}
Consequently, the solutions $q_{j,\text{sca}}(t)$, $j\geq 0$, are given by
\begin{equation}\label{symboles dans le cas scalaire}
q_{j,\text{sca}}(t,x,\xi)= q_{j,\text{sca}}\big(0,\phi_{\lambda}^t(x,\xi)\big) + \int_0^t B_j\big(s,\phi_{\lambda}^{t-s}(x,\xi)\big) ds, \quad  t\in \mathbb R, (x,\xi)\in \mathbb R^{2n},
\end{equation}
where we introduced the index "sca" to precise that we are in the case where $H_0$ and $H_1$ are scalar-valued. In particular, 
$$
q_{0,\text{sca}}(t,x,\xi)=Q_0 \circ \phi_{\lambda}^t(x,\xi) \quad \text{and} \quad q_{1,\text{sca}}(t,x,\xi) = Q_1\big(\phi_{\lambda}^t(x,\xi)\big) + \int_0^t \big\{H_1, Q_0(\phi^s_{\lambda}) \big\}\circ \phi_{\lambda}^{t-s}(x,\xi) \; ds.
$$
\end{remark}

\begin{flushright}
$\square$
\end{flushright}

Turn now to the resolution of \eqref{rewritten}. Applying the results of Appendix \ref{Ann Cauchy problem general} with $\Lambda=\lambda$, $A=H_1$ and $B(t)= B_j(t)$, we get the solution for all $j\geq 0$
\begin{equation}\label{symbole j multiple scalaire}
q_j(t,x,\xi)=T^{-1}(t,x,\xi)\bigg( Q_j\big(\phi_{\lambda}^t(x,\xi) \big) +  \int_0^t T^{-1}\big(-s,\phi_{\lambda}^t(x,\xi)\big) B_j\big(s,\phi_{\lambda}^{t-s}(x,\xi)\big) T\big(-s,\phi_{\lambda}^t(x,\xi)\big) \; ds \bigg) T(t,x,\xi), 
\end{equation}
defined for all $t\in \mathbb R$ and $(x,\xi)\in \mathbb R^{2n}$, where $T$ and $T^{-1}$ are the unitary $(m\times m)$ matrix-valued functions solutions of the following systems
\begin{equation}\label{transport}
\frac{d}{dt} T(t,x,\xi)= -i  H_1\big(\phi_{\lambda}^t(x,\xi)\big)T(t,x,\xi) , \quad T(0,x,\xi)=I_m,
\end{equation}
\begin{equation}\label{transport inverse}
\frac{d}{dt} T^{-1}(t,x,\xi)= i T^{-1}(t,x,\xi) H_1\big(\phi_{\lambda}^t(x,\xi)\big), \quad T^{-1}(0,x,\xi)=I_m.
\end{equation}
In particular, the principal symbol $q_0(t)$ is given by 
\begin{equation}\label{symbole principal multiple scalaire}
q_0(t,x,\xi)=T^{-1}(t,x,\xi)Q_0\big(\phi_{\lambda}^t(x,\xi)\big) T(t,x,\xi), \quad \forall t\in \mathbb R, (x,\xi)\in \mathbb R^{2n}.
\end{equation}
\subsection{Uniform estimates}\label{results} 
Let $\Gamma$ be the upper bound defined by \eqref{upper bound Gamma}.
\begin{proposition}\label{u5}
Assume \textbf{(A2')}. For all $\gamma \in \mathbb N^{2n}$ and all $j\geq 0$, there exists $C_{\gamma,j}>0$ such that for all $t\in \mathbb R$ and all $(x,\xi)\in \mathbb R^{2n}$, we have 
\begin{equation}\label{As0}
\big\Vert \partial_{(x,\xi)}^{\gamma} q_0(t,x,\xi) \big\Vert \leq C_{\gamma,0} \exp\bigg( |\gamma|\Gamma |t|\bigg),
\end{equation}
and for $j\geq 1$,
\begin{equation}\label{Asj}
\big\Vert \partial_{(x,\xi)}^{\gamma} q_j(t,x,\xi) \big\Vert \leq C_{\gamma,j} \exp\bigg( \big(2|\gamma|+4j-3 \big)\Gamma |t|\bigg).
\end{equation}
\end{proposition}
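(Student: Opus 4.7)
The proof proceeds by induction on $j$, relying on two preliminary derivative estimates.

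First, the standard Gronwall argument applied to the variational equations of $\phi_\lambda^t$ yields $\|\partial_{(x,\xi)}^{\gamma}\phi_\lambda^t(x,\xi)\|\leq C_{\gamma}\exp(|\gamma|\Gamma|t|)$ for $|\gamma|\geq 1$; this is where the constant $\Gamma$ enters, via the uniform bound on the linearization of the Hamiltonian vector field. Second, by differentiating the transport equations \eqref{transport}--\eqref{transport inverse} in $(x,\xi)$, solving the resulting inhomogeneous linear ODE via variation of parameters against the unitary propagator of $-iH_1\circ\phi_\lambda^t$, and inducting on $|\gamma|$ via Fa\`a di Bruno together with \textbf{(A2')} (which gives $\|\partial^\beta[H_1\circ\phi_\lambda^t]\|\leq C\exp(|\beta|\Gamma|t|)$ for $|\beta|\geq 1$), one obtains the analogous bound $\|\partial^\gamma T^{\pm 1}(t,x,\xi)\|\leq C_\gamma \exp(|\gamma|\Gamma|t|)$ for $|\gamma|\geq 1$, with $\|T^{\pm 1}\|=1$ by unitarity.

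For the base case $j=0$, applying Leibniz to \eqref{symbole principal multiple scalaire} and Fa\`a di Bruno to $Q_0\circ\phi_\lambda^t$ yields \eqref{As0} directly: each factor $T^{-1}$, $Q_0\circ\phi_\lambda^t$, $T$, differentiated $|\gamma_i|$ times, is bounded by $C\exp(|\gamma_i|\Gamma|t|)$ (using that $Q_0\in S(1)$ has uniformly bounded derivatives), and the constraint $|\gamma_1|+|\gamma_2|+|\gamma_3|=|\gamma|$ ensures the exponents sum to $|\gamma|\Gamma|t|$. For the inductive step, assume \eqref{Asj} for all $p\in\{1,\ldots,j-1\}$. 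Differentiating \eqref{symbole j multiple scalaire} and distributing via Leibniz, the bracket splits into $Q_j\circ\phi_\lambda^t$, handled as in the base case, and the integral over $s\in[0,t]$. The integrand contains the \emph{double compositions} $T^{\pm 1}(-s,\phi_\lambda^t(x,\xi))$, whose derivatives are bounded via Fa\`a di Bruno by $C\exp(|\gamma|\Gamma(|s|+|t|))$, and the term $B_j(s,\phi_\lambda^{t-s}(x,\xi))$. To bound the latter, I would first estimate $\partial^{\gamma'}_y B_j(s,y)$: by \eqref{expression de B1} each summand is a bounded derivative of $H_k$ multiplied by a derivative of $q_p(s)$ of total order at most $|\gamma'|+j+1-k-p$, and the inductive hypothesis maximized over $p\in\{0,\ldots,j-1\}$ and $k\in\{0,1\}$ yields the required exponential rate. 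Composition with $\phi_\lambda^{t-s}$ contributes an extra chain-rule factor, and the $s$-integration is handled by $\int_0^t e^{a\Gamma s}\,ds\leq (a\Gamma)^{-1}e^{a\Gamma t}$.

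The main obstacle is the careful bookkeeping that produces the precise exponent $2|\gamma|+4j-3$. The doubled coefficient $2|\gamma|$ is structural: it records the two independent compositions with Hamiltonian flows inside the integrand, one from $T^{\pm 1}(-s,\phi_\lambda^t)$ via the outer flow $\phi_\lambda^t$ and one from $B_j(s,\phi_\lambda^{t-s})$ via $\phi_\lambda^{t-s}$, each contributing an $\exp(|\gamma|\Gamma|\cdot|)$ factor through Fa\`a di Bruno whose $|s|$ and $|t-s|$ dependence cannot be combined into a single $|t|$. The linear-in-$j$ shift $4j-3$ tracks the inductive inflation: the formula \eqref{expression de B1} demands derivatives of $q_p$ of extra order $|\alpha|+|\beta|=j+1-k-p$, and combined with the inductive exponent $2(|\gamma'|+|\alpha|+|\beta|)+4p-3$, the maximum over the constraints yields the claimed rate after absorbing the integration factor and the outer $T^{\pm 1}(t)$ multiplications. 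Verifying this delicate balance---in particular that no sub-leading degradation creeps in across the finite sums in \eqref{expression de B1}, and that the constants in Fa\`a di Bruno depend only on $\gamma$ and $j$---is the core technical content of the proof.
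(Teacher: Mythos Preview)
Your proposal is correct and follows essentially the same route as the paper: the same preliminary lemmas (flow bounds via Gronwall, transport-matrix bounds by differentiating \eqref{transport}--\eqref{transport inverse} and inducting on $|\gamma|$, and the composed bound for $T^{\pm1}(-s,\phi_\lambda^t)$), the same Leibniz/Fa\`a di Bruno treatment of $q_0$, and the same induction on $j$ through the integral representation \eqref{symbole j multiple scalaire}. Your identification of the mechanism behind the exponent $2|\gamma|+4j-3$---the doubling from the two flow compositions in the integrand and the $4j-3$ from maximizing $2(|\alpha|+|\beta|)+4p-3$ over the constraint $|\alpha|+|\beta|+k+p=j+1$ at $p=j-1$, $k=0$---matches the paper's computation exactly.
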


To prove this proposition we need to recall the multivariate Fa\'a Di Bruno formula used for computing arbitrary partial derivatives of a function composition. In the following, this formula will be used wherever we have to estimate the derivatives of observables moving along the Hamiltonian flow. In the literature, one can found several forms to this formula (see for instance \cite{Leip}, \cite{Const}). As in \cite{Rob}, we use the following one :
\begin{lemma}
Let $F = (F_{ij})_{1\leq i,j \leq m}: \mathbb R^{2n}\rightarrow M_m(\mathbb C)$ and $G = (G_1,...,G_{2n}) : \mathbb R^{2n}\rightarrow \mathbb R^{2n}$ be smooth functions. For all $\gamma \in \mathbb N^{2n}$, we have 
\begin{equation*}
\partial_{(x,\xi)}^{\gamma} \big(F\circ G \big) = \bigg(\partial_{(x,\xi)}^{\gamma} \big(F_{ij}\circ G \big) \bigg)_{1\leq i,j \leq m}
\end{equation*}
where
\begin{equation}\label{Faa de Bruno}
\partial_{(x,\xi)}^{\gamma} \big(F_{ij}\circ G \big) = \sum_{\substack{\beta\in \mathbb N^{2n}\\ 0 \neq \beta \leq \gamma}} \big( \partial_{(x,\xi)}^{\beta}F_{ij}\big)\circ G \;.\; \mathcal{A}_{\gamma,\beta}(G),
\end{equation}
with
\begin{equation}\label{expl. Faa}
\mathcal{A}_{\gamma,\beta}(G) = \fact{\gamma} \sum\limits_{\underset{\sum_{\alpha}\alpha|\eta|=\gamma}{\sum_{\alpha} \alpha= \beta}} \prod_{\alpha \in \mathbb N^{2n}\setminus \{0\} } \frac{1}{\fact{\eta}} \bigg( \frac{\partial_{(x,\xi)}^{\alpha}G_1}{\fact{\alpha}} \bigg)^{\eta_1}...\;\bigg( \frac{\partial_{(x,\xi)}^{\alpha}G_{2n}}{\fact{\alpha}} \bigg)^{\eta_{2n}}.
\end{equation}
Here we use the usual rules for multi-indices (see \cite{Const}).
\end{lemma}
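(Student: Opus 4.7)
The plan is to reduce \eqref{Faa de Bruno} to the scalar case and then derive it by matching coefficients in two Taylor expansions of $F\circ G$. Since partial differentiation acts entrywise on matrices and $(F\circ G)_{ij}=F_{ij}\circ G$, the matrix identity follows immediately from the scalar identity applied to each entry. Hence I fix $1\le i,j\le m$, set $f:=F_{ij}:\mathbb{R}^{2n}\to\mathbb{C}$, and work from now on with the scalar composition $f\circ G$.

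For a fixed base point $(x,\xi)\in\mathbb{R}^{2n}$ and a formal increment $h\in\mathbb{R}^{2n}$, I would compute $f(G((x,\xi)+h))$ in two ways and match coefficients of $h^\gamma$. On the one hand, the direct Taylor expansion of $f\circ G$ around $(x,\xi)$ is
\[
f(G((x,\xi)+h))=\sum_\gamma\frac{h^\gamma}{\gamma!}\,\partial_{(x,\xi)}^\gamma(f\circ G)(x,\xi),
\]
so the coefficient of $h^\gamma/\gamma!$ is precisely the quantity appearing on the left-hand side of \eqref{Faa de Bruno}. On the other hand, I would first Taylor-expand $f$ around $G(x,\xi)$ in powers of $G((x,\xi)+h)-G(x,\xi)$, then componentwise Taylor-expand each increment $G_k((x,\xi)+h)-G_k(x,\xi)=\sum_{\alpha\neq 0}\frac{h^\alpha}{\alpha!}\partial^\alpha G_k(x,\xi)$, and finally apply the multinomial theorem to each $\beta_k$-th power. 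The resulting multiple sum is indexed by $\beta\in\mathbb{N}^{2n}$ together with, for each $k$, a family of nonnegative integers $(\eta_{\alpha,k})_{\alpha\neq 0}$ satisfying $\sum_\alpha\eta_{\alpha,k}=\beta_k$. Gathering powers of $h$, each monomial produced is $h^{\sum_\alpha\alpha|\eta_\alpha|}$ with $\eta_\alpha:=(\eta_{\alpha,1},\ldots,\eta_{\alpha,2n})$ and $|\eta_\alpha|:=\sum_k\eta_{\alpha,k}$. Identifying coefficients of $h^\gamma$ enforces the two constraints $\sum_\alpha\eta_\alpha=\beta$ and $\sum_\alpha\alpha|\eta_\alpha|=\gamma$ appearing in \eqref{expl. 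Faa}, and produces the claimed sum of $(\partial^\beta f)\circ G$ with the multiplicity $\mathcal{A}_{\gamma,\beta}(G)$.

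The only delicate point is the combinatorial consolidation: the multinomial coefficients $\prod_k \beta_k!/\prod_\alpha\eta_{\alpha,k}!$ from the multinomial theorem, together with the factor $\gamma!$ from the left-hand side, the factor $1/\beta!$ from the outer Taylor series and the factors $\prod_\alpha(\alpha!)^{-|\eta_\alpha|}$ from the inner series, must collapse to the coefficient $\gamma!\prod_\alpha\frac{1}{\eta_\alpha!\,(\alpha!)^{|\eta_\alpha|}}$ displayed in \eqref{expl. Faa}, with the convention $\eta_\alpha!:=\prod_k\eta_{\alpha,k}!$. I expect this multiindex bookkeeping to be the main obstacle; a combinatorially equivalent alternative is an induction on $|\gamma|$, treating the base case $|\gamma|=1$ by the ordinary chain rule and performing the inductive step by differentiating \eqref{Faa de Bruno} once more via Leibniz's rule applied to each summand. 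Either route terminates in the same standard multiindex identity, which is carried out in detail in \cite{Const}.
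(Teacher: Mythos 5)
The paper itself gives no proof of this lemma; it is quoted as a known result and the reader is referred to \cite{Const} (and \cite{Leip}) for derivations. Your proposal therefore cannot be compared to the paper's argument, but it can be assessed on its own merits.

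Your route is sound and is, in fact, essentially the argument carried out in \cite{Const}. The entrywise reduction to the scalar case is immediate since $\partial^\gamma$ acts componentwise on $M_m(\mathbb C)$-valued maps. The double Taylor expansion and coefficient matching is the standard derivation, and you carry out the multiindex bookkeeping correctly: the multinomial coefficients $\beta_k!/\prod_\alpha \eta_{\alpha,k}!$ cancel against the $1/\beta!=1/\prod_k\beta_k!$ from the outer expansion, leaving $\prod_\alpha 1/\eta_\alpha!$; the inner-expansion factors aggregate to $\prod_\alpha(\alpha!)^{-|\eta_\alpha|}$; and the overall $\gamma!$ appears when one isolates the coefficient of $h^\gamma/\gamma!$. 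This reproduces exactly the constant $\gamma!\prod_\alpha \frac{1}{\eta_\alpha!\,(\alpha!)^{|\eta_\alpha|}}$ in \eqref{expl. Faa}, with the two side constraints $\sum_\alpha\eta_\alpha=\beta$ and $\sum_\alpha\alpha|\eta_\alpha|=\gamma$ emerging precisely as you describe. One small point worth making explicit: since $F$ and $G$ are only assumed smooth, the Taylor series need not converge, so the coefficient-matching should be understood at the level of finite-order jets (truncate both expansions past order $|\gamma|$ with Taylor remainders; the coefficient of $h^\gamma$ is unaffected and determines $\partial^\gamma(f\circ G)$). This is routine, and your fallback induction on $|\gamma|$ via the chain rule and Leibniz's rule sidesteps the issue entirely, so there is no genuine gap.
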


The proof of Proposition \ref{u5} is based on the three following lemmas. The first one gives exponential estimate on the derivatives (with respect to $(x,\xi)$) of the Hamiltonian flow associated to $\lambda$. This result can be proved by induction on $|\gamma|$ using the Jacobi stability equation 
\begin{equation}\label{Jacobi}
\frac{d}{dt}\nabla_{(x,\xi)}\phi_{\lambda}^t(x,\xi)=J\nabla_{(x,\xi)}^{(2)}\lambda(\phi_{\lambda}^t(x,\xi))\nabla_{(x,\xi)}\phi_{\lambda}^t(x,\xi),
\end{equation} 

where $\nabla_{(x,\xi)}\phi_{\lambda}^t:=(\partial_x \phi_{\lambda}^t, \partial_{\xi}\phi_{\lambda}^t)$. 

The following lemma is proved in  \cite[Lemma 2.2]{Rob}.
\begin{lemma}\label{flot1}
We assume that 
$$ \partial_{(x,\xi)}^{\gamma}\lambda \in L^{\infty}(\mathbb R^{2n}), \quad \forall \gamma\in \mathbb N^{2n};|\gamma|\geq 2 .
$$ 
Then, for all $\gamma \in \mathbb N^{2n}\setminus \{0\}$, there exists $C_{\gamma}> 0$ such that for all $t\in \mathbb R$ and all $(x,\xi)\in \mathbb R^{2n}$,
\begin{equation}\label{floww1}
\big\Vert\partial_{(x,\xi)}^{\gamma}\phi_{\lambda}^t(x,\xi)\big\Vert\leq C_{\gamma} \exp\big(|\gamma|\Gamma|t|\big).
\end{equation}
\end{lemma}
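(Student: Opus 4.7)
The plan is to prove (\ref{floww1}) by induction on $k:=|\gamma|\geq 1$, following the strategy indicated after the statement: the Jacobi stability equation (\ref{Jacobi}) is the fundamental linear ODE, and higher-order derivatives are obtained by differentiating the flow equation and controlling the extra terms via the Faà di Bruno formula (\ref{Faa de Bruno}).

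For the base case $k=1$, set $M(t,x,\xi):=\nabla_{(x,\xi)}\phi_{\lambda}^t(x,\xi)$ and $A(t,x,\xi):=J\nabla_{(x,\xi)}^{(2)}\lambda(\phi_{\lambda}^t(x,\xi))$. Equation (\ref{Jacobi}) reads $\dot{M}(t)=A(t)M(t)$ with $M(0)=I_{2n}$, and by the definition (\ref{upper bound Gamma}) of $\Gamma$ we have $\Vert A(t,x,\xi)\Vert\leq \Gamma$ pointwise in $(x,\xi)$ and uniformly in $t\in\mathbb{R}$. Gronwall's lemma applied to $M(t)=I_{2n}+\int_0^t A(s)M(s)\,ds$ yields $\Vert M(t)\Vert\leq \exp(\Gamma|t|)$, which is (\ref{floww1}) for $|\gamma|=1$.

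For the inductive step, fix $|\gamma|\geq 2$ and assume (\ref{floww1}) for every multi-index of length strictly less than $|\gamma|$. Differentiating $\dot{\phi}_{\lambda}^t=J\nabla\lambda(\phi_{\lambda}^t)$ componentwise by $\partial_{(x,\xi)}^{\gamma}$ and applying (\ref{Faa de Bruno}), the unique contribution with $|\beta|=1$ is exactly $A(t)\partial_{(x,\xi)}^{\gamma}\phi_{\lambda}^t$; collecting all remaining summands into a remainder $R_{\gamma}(t,x,\xi)$ gives
\begin{equation*}
\frac{d}{dt}\partial_{(x,\xi)}^{\gamma}\phi_{\lambda}^t = A(t)\partial_{(x,\xi)}^{\gamma}\phi_{\lambda}^t + R_{\gamma}(t), \qquad \partial_{(x,\xi)}^{\gamma}\phi_{\lambda}^t\big|_{t=0}=0,
\end{equation*}
the initial datum vanishing because $\phi_{\lambda}^0=\mathrm{id}$. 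Each summand of $R_{\gamma}$ is, up to a combinatorial constant, a prefactor $(\partial^{\beta}J\nabla\lambda)(\phi_{\lambda}^t)$ with $2\leq|\beta|\leq|\gamma|$ times a product of $|\beta|$ factors $\partial^{\alpha^{(k)}}\phi_{\lambda}^t$ satisfying $\sum_{k=1}^{|\beta|}|\alpha^{(k)}|=|\gamma|$; since $|\beta|\geq 2$, each $|\alpha^{(k)}|\leq|\gamma|-1$. Assumption \textbf{(A2')} bounds the prefactor uniformly, and the induction hypothesis yields $\Vert R_{\gamma}(t)\Vert\leq C_{\gamma}\exp(|\gamma|\Gamma|t|)$.

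Letting $M(t,s)$ denote the resolvent of $\dot{X}=A(t)X$, the base case (applied between two arbitrary instants) gives $\Vert M(t,s)\Vert\leq\exp(\Gamma|t-s|)$. Variation of constants produces
\begin{equation*}
\partial_{(x,\xi)}^{\gamma}\phi_{\lambda}^t(x,\xi)=\int_0^t M(t,s)R_{\gamma}(s,x,\xi)\,ds,
\end{equation*}
and a direct computation of the resulting exponential integral delivers $\Vert\partial_{(x,\xi)}^{\gamma}\phi_{\lambda}^t\Vert\leq C_{\gamma}\exp(|\gamma|\Gamma|t|)$, closing the induction. The main obstacle lies precisely in the bookkeeping of the Faà di Bruno sum: one must verify that, once the linear contribution is extracted, every remaining summand involves only derivatives of order strictly less than $|\gamma|$ whose total weight is exactly $|\gamma|$, so that the sharp exponent $|\gamma|$ is preserved in the final estimate rather than being degraded to $|\gamma|+1$.
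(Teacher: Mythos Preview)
Your proof is correct and follows exactly the strategy the paper indicates: the paper does not give its own proof but refers to \cite[Lemma 2.2]{Rob} and sketches that the result is obtained by induction on $|\gamma|$ via the Jacobi stability equation \eqref{Jacobi}, which is precisely what you carry out. The extraction of the $|\beta|=1$ contribution as the linear term $A(t)\partial_{(x,\xi)}^{\gamma}\phi_{\lambda}^t$, the bound $\Vert R_{\gamma}(t)\Vert\leq C_{\gamma}\exp(|\gamma|\Gamma|t|)$ coming from the Fa\`a di Bruno weight constraint $\sum|\alpha^{(k)}|=|\gamma|$, and the variation-of-constants conclusion are all sound.
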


In the next lemma, we prove similar estimate on the derivatives of the matrix-valued function $T$ (see (\ref{transport})).
\begin{lemma}\label{Transport1}
Assume \textbf{(A2')}. For all $\gamma \in \mathbb N^{2n}\setminus \{0\}$, there exists $C_{\gamma}>0$ (independent of $t\in \mathbb R$ and $(x,\xi)\in \mathbb R^{2n}$) such that
\begin{equation}\label{estimation de T1}
\big\Vert\partial_{(x,\xi)}^{\gamma}T(t,x,\xi)\big\Vert\leq C_{\gamma} \exp \big(|\gamma|\Gamma |t|\big).
\end{equation}
\end{lemma}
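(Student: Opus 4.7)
The idea is to differentiate \eqref{transport} in $(x,\xi)$ and exploit the fact that the propagator generated by the hermitian matrix $-iH_1(\phi_\lambda^t(x,\xi))$ is unitary. Applying $\partial^\gamma := \partial_{(x,\xi)}^\gamma$ and Leibniz's formula, one isolates the leading contribution where all derivatives fall on $T$:
\[
\frac{d}{dt}\partial^\gamma T(t,x,\xi) \;=\; -iH_1\bigl(\phi_\lambda^t(x,\xi)\bigr)\,\partial^\gamma T(t,x,\xi) \;+\; F_\gamma(t,x,\xi),
\]
with
\[
F_\gamma(t,x,\xi) := -i \sum_{0\neq \beta \leq \gamma}\binom{\gamma}{\beta}\,\partial^\beta\bigl[H_1\circ \phi_\lambda^t\bigr](x,\xi)\,\partial^{\gamma-\beta}T(t,x,\xi).
\]
Since $T(0,x,\xi)=I_m$ is constant in $(x,\xi)$, I have $\partial^\gamma T(0,x,\xi)=0$ for $\gamma\neq 0$, so Duhamel's formula yields
\[
\partial^\gamma T(t,x,\xi) \;=\; T(t,x,\xi)\int_0^t T^{-1}(s,x,\xi)\,F_\gamma(s,x,\xi)\,ds.
\]
Unitarity ($\|T(t,x,\xi)\|=\|T^{-1}(t,x,\xi)\|=1$) then reduces the problem to an $L^1$-in-$s$ estimate of $\|F_\gamma(s,x,\xi)\|$.

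\textbf{Induction on $|\gamma|$.} For every multi-index $\beta$ with $0\neq \beta \leq \gamma$, the Faà di Bruno formula \eqref{Faa de Bruno} expands $\partial^\beta[H_1\circ \phi_\lambda^s]$ as a finite linear combination of terms of the form $(\partial^\alpha H_1)(\phi_\lambda^s)\cdot \mathcal{A}_{\beta,\alpha}(\phi_\lambda^s)$ with $1\leq |\alpha|\leq |\beta|$. Assumption \textbf{(A2')} gives $\sup_{\mathbb R^{2n}}\|\partial^\alpha H_1\|<\infty$ whenever $|\alpha|\geq 1$, while Lemma \ref{flot1} controls every derivative $\partial^\nu \phi_\lambda^s$ appearing in $\mathcal{A}_{\beta,\alpha}$ by $C\exp(|\nu|\Gamma|s|)$. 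The constraint $\sum_\alpha \alpha|\eta^{(\alpha)}|=\beta$ in \eqref{expl. Faa} ensures that, in any single monomial, the total order of differentiation distributed over the flow factors equals $|\beta|$, and hence the monomial is bounded by $C\exp(|\beta|\Gamma|s|)$. Summing,
\[
\|\partial^\beta[H_1\circ \phi_\lambda^s](x,\xi)\|\leq C_\beta \exp(|\beta|\Gamma|s|).
\]
For the factor $\partial^{\gamma-\beta}T$: when $\beta=\gamma$, one simply uses $\|T(s)\|=1$; when $\beta\neq \gamma$, the induction hypothesis provides $\|\partial^{\gamma-\beta}T(s,x,\xi)\|\leq C_{\gamma-\beta}\exp(|\gamma-\beta|\Gamma|s|)$. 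Multiplying and summing term by term,
\[
\|F_\gamma(s,x,\xi)\|\leq \widetilde{C}_\gamma \exp(|\gamma|\Gamma|s|),
\]
and integrating against the Duhamel formula produces the sought bound $\|\partial^\gamma T(t,x,\xi)\|\leq C_\gamma \exp(|\gamma|\Gamma|t|)$, the $(|\gamma|\Gamma)^{-1}$ factor from the integration being absorbed into $C_\gamma$.

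\textbf{Main difficulty.} The delicate point is obtaining the sharp exponent $|\gamma|\Gamma|t|$, rather than the larger one that a direct Gronwall argument on $\|\partial^\gamma T\|$ would produce (with the same $H_1$-bound feeding back into itself and doubling the rate). Representing $\partial^\gamma T$ via Duhamel against the unitary propagator $T$ itself, instead of iterating a differential inequality, circumvents this entirely: no additional constant is accumulated in the exponent. The only non-trivial bookkeeping is to verify that, inside each Faà di Bruno term, the orders $|\alpha|$ on $H_1$ and the multiplicities $|\eta^{(\alpha)}|$ combine so that the exponential weights coming from Lemma \ref{flot1} multiply to exactly $\exp(|\beta|\Gamma|s|)$; this is precisely what the constraint $\sum_\alpha \alpha|\eta^{(\alpha)}|=\beta$ guarantees, and it is what allows the induction to close at the correct rate.
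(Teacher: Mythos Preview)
Your proof is correct and follows essentially the same approach as the paper: both differentiate the transport equation, use Leibniz to isolate a source term $F_\gamma$ involving only lower-order derivatives of $T$, and exploit unitarity of $T$ to integrate (the paper phrases this as computing $\frac{d}{dt}(T^{-1}\partial^\gamma T)$, which is exactly your Duhamel formula). The Fa\`a di Bruno bookkeeping for $\partial^\beta[H_1\circ\phi_\lambda^s]$ and the induction on $|\gamma|$ are identical in both arguments.
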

\begin{proof} 
Without any loss of generality, we assume that $t\geq 0$ (the proof for $t\leq 0$ is similar). We proceed by induction on $|\gamma|$. Let us check (\ref{estimation de T1}) for the first order derivative of $T$ with respect to $x_1$. A straightforward computation using equations (\ref{transport}) and (\ref{transport inverse}) yields 
\begin{eqnarray*}
\frac{d}{dt} \big( T^{-1}(t,x,\xi) \partial_{x_1} T(t,x,\xi) \big) &=& \partial_t T^{-1}(t,x,\xi) \partial_{x_1} T(t,x,\xi) + T^{-1}(t,x,\xi) \partial_t \partial_{x_1} T(t,x,\xi) \\
&=& - i T^{-1}(t,x,\xi) (\partial_{x_1} H_1)(\phi_{\lambda}^t(x,\xi))\partial_{x_1} \phi_{\lambda}^t(x,\xi)  T(t,x,\xi).
\end{eqnarray*}
Therefore
$$
T^{-1}(t,x,\xi) \partial_{x_1} T(t,x,\xi) = -i \int_0^t T^{-1}(s,x,\xi) (\partial_{x_1} H_1)(\phi_{\lambda}^s(x,\xi)) \partial_{x_1} \phi_{\lambda}^s(x,\xi) T(s,x,\xi) \; ds,
$$
since $\partial_{x_1}T(0,x,\xi)=0$ (we recall that $T(0,x,\xi)=I_m$). Taking into account the fact that $T^{-1}$ is unitary and using estimate (\ref{floww1}), we obtain 
\begin{equation*}
\big\Vert \partial_{x_1} T(t,x,\xi) \big\Vert \leq  \int_0^{t}  \big\Vert \partial_{x_1} \phi_{\lambda}^s(x,\xi) \big\Vert . \big\Vert  \partial_{x_1} H_1 \big\Vert_{L^{\infty}(\mathbb R^{2n})} \; ds \leq C \exp(\Gamma t),
\end{equation*}
uniformly for $t\geq 0$ and $(x,\xi)\in \mathbb R^{2n}$. This gives the proof for $\gamma=(1,0,...,0)$. The same proof holds for $|\gamma|=1$.

Let us now assume that (\ref{estimation de T1}) holds for all $\gamma \in \mathbb N^{2n}$ with $|\gamma|<r$, $r\geq 2$, and take $|\gamma|=r$. Computing derivatives with respect to $(x,\xi)$ in (\ref{transport}) using Leibniz formula, we get
$$
\frac{d}{dt} \partial_{(x,\xi)}^{\gamma} T(t,x,\xi) = -i H_{1}\big(\phi_{\lambda}^{t}(x,\xi)\big) \partial_{(x,\xi)}^{\gamma}T(t,x,\xi) -i \sum_{1\leq |\beta|\leq r} \binom{\beta}{\gamma}\partial_{(x,\xi)}^{\beta}\big( H_{1}(\phi_{\lambda}^{t}(x,\xi)) \big)\partial_{(x,\xi)}^{\gamma-\beta}T(t,x,\xi).
$$
Therefore
\begin{eqnarray*}
\frac{d}{dt} \big(  T^{-1}(t,x,\xi) \partial_{(x,\xi)}^{\gamma}T(t,x,\xi)\big) = -i T^{-1}(t,x,\xi) \sum_{1\leq |\beta|\leq r} \binom{\beta}{\gamma}\partial_{(x,\xi)}^{\beta}\big( H_{1}(\phi_{\lambda}^{t}(x,\xi)) \big)\partial_{(x,\xi)}^{\gamma-\beta}T(t,x,\xi).
\end{eqnarray*}
According to assumption \textbf{(A2')}, for all $\beta\in \mathbb N^{2n}$ with $|\beta|\geq 1$, we have $\partial_{(x,\xi)}^{\beta}H_1 \in L^{\infty}(\mathbb R^{2n})$. Consequently, using Fa\'a Di Bruno's formula (\ref{Faa de Bruno}) and estimate (\ref{floww1}), we obtain 
\begin{equation}\label{hij1}
\big\Vert \partial_{(x,\xi)}^{\beta}\big( H_{1} \circ \phi_{\lambda}^{t}(x,\xi) \big) \big\Vert \leq C_{\beta} \exp \big(|\beta|\Gamma t \big),
\end{equation}
uniformly with respect to $t\geq 0$ and $(x,\xi)\in \mathbb R^{2n}$.

On the other hand, by the induction hypothesis, there exists $C_{\gamma,\beta}>0$ such that for all $t\geq 0$ and all $(x,\xi)\in \mathbb R^{2n}$, we have 
\begin{equation}\label{hij2}
\big\Vert \partial_{(x,\xi)}^{\gamma-\beta}T(t,x,\xi) \big\Vert \leq C_{\gamma,\beta} \exp \big( (r-|\beta|)\Gamma t\big).
\end{equation}

Putting together (\ref{hij1}) and (\ref{hij2}) and taking into account the fact that $\partial_{(x,\xi)}^{\gamma} T(0,x,\xi)=0$, we get
\begin{eqnarray*}
\big\Vert \partial_{(x,\xi)}^{\gamma}T(t,x,\xi) \big\Vert &\leq &  \sum_{1\leq |\beta|\leq r} C_{\gamma,\beta} \int_0^{t}  \big\Vert \partial_{(x,\xi)}^{\beta}\big( H_{1}\big(\phi_{\lambda}^{s}(x,\xi)\big) \big)\big\Vert . \big\Vert \partial_{(x,\xi)}^{\gamma-\beta}T(s,x,\xi) \big\Vert \; ds \\
&\leq & \sum_{1\leq |\beta|\leq r} C'_{\gamma,\beta} \int_0^{t}  \exp\big(|\beta|\Gamma s\big) \exp\big((r-|\beta|)\Gamma s\big) \; ds \\
&\leq & C_{\gamma} \exp(r \Gamma t).
\end{eqnarray*}
Hence (\ref{estimation de T1}) holds for $|\gamma|=r$. This ends the proof.

Notice that the same proof can be repeated for $T^{-1}$ and then estimate (\ref{estimation de T1}) remains valid for the derivatives of $T^{-1}$.
\end{proof}
The following lemma is a consequence of the two previous lemmas and the Fa\'a Di Bruno formula (\ref{Faa de Bruno}).
\begin{lemma}\label{transport avec le flot}
Under assumption \textbf{(A2')}, for all $\gamma\in \mathbb N^{2n}\setminus \{0\}$, there exists $C_{\gamma}>0$ such that for all $(x,\xi)\in \mathbb R^{2n}$ and all $t,s\in \mathbb R$, we have 
\begin{equation}\label{est transport avec le flot}
\big\Vert \partial_{(x,\xi)}^{\gamma} \big( T(s,\phi_{\lambda}^t(x,\xi)) \big) \big\Vert \leq C_{\gamma} \exp\bigg( |\gamma|\Gamma(|t|+|s|)  \bigg).
\end{equation}
Furthermore, the same estimate holds for the derivatives of $T^{-1}(s,\phi_{\lambda}^t(x,\xi))$.
\end{lemma}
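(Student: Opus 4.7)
The plan is to view $T(s, \phi_\lambda^t(x,\xi))$ as a composition of the smooth map $(x,\xi) \mapsto T(s,(x,\xi))$ with the smooth map $(x,\xi) \mapsto \phi_\lambda^t(x,\xi)$, and apply the multivariate Faà di Bruno formula (\ref{Faa de Bruno}) entrywise. First, I would write, for $\gamma \in \mathbb{N}^{2n} \setminus \{0\}$,
$$
\partial_{(x,\xi)}^{\gamma}\bigl(T(s,\phi_\lambda^t(x,\xi))\bigr) = \sum_{\substack{\beta \in \mathbb{N}^{2n} \\ 0 \neq \beta \leq \gamma}} \bigl(\partial^{\beta}_{(x,\xi)} T\bigr)\bigl(s,\phi_\lambda^t(x,\xi)\bigr) \cdot \mathcal{A}_{\gamma,\beta}(\phi_\lambda^t)(x,\xi),
$$
where $\mathcal{A}_{\gamma,\beta}(\phi_\lambda^t)$ is the scalar coefficient built from partial derivatives of the components of $\phi_\lambda^t$ according to the explicit formula (\ref{expl. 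Faa}).

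Next I would estimate the two pieces separately. The outer factor $\partial^{\beta}_{(x,\xi)} T$ evaluated at $(s,\phi_\lambda^t(x,\xi))$ is controlled by Lemma \ref{Transport1}, which yields
$$
\bigl\Vert \bigl(\partial^{\beta}_{(x,\xi)} T\bigr)(s,\phi_\lambda^t(x,\xi)) \bigr\Vert \leq C_{\beta} \exp(|\beta|\Gamma|s|),
$$
uniformly in $(x,\xi)$ and $s$. The inner factor $\mathcal{A}_{\gamma,\beta}(\phi_\lambda^t)(x,\xi)$ is a finite sum of products of the form $\prod_{\alpha} \bigl(\partial^{\alpha}_{(x,\xi)} (\phi_\lambda^t)_i\bigr)^{\eta_{i}(\alpha)}$ subject to the combinatorial constraints $\sum_\alpha \eta(\alpha) = \beta$ and $\sum_\alpha |\eta(\alpha)|\,\alpha = \gamma$. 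Applying Lemma \ref{flot1} factor by factor and multiplying the resulting exponentials produces a total exponent
$$
\Gamma|t| \sum_{\alpha} |\eta(\alpha)|\,|\alpha| = \Gamma|t| \cdot |\gamma|,
$$
where the identity follows from the second constraint by taking absolute values. Hence $|\mathcal{A}_{\gamma,\beta}(\phi_\lambda^t)(x,\xi)| \leq C'_{\gamma,\beta} \exp(|\gamma|\Gamma|t|)$.

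Multiplying these two bounds and summing over the finitely many $\beta$ with $0 \neq \beta \leq \gamma$ (so $|\beta| \leq |\gamma|$), I obtain
$$
\bigl\Vert \partial_{(x,\xi)}^{\gamma}\bigl(T(s,\phi_\lambda^t(x,\xi))\bigr)\bigr\Vert \leq C_\gamma \exp\bigl(|\gamma|\Gamma|t| + |\beta|\Gamma|s|\bigr) \leq C_\gamma \exp\bigl(|\gamma|\Gamma(|t|+|s|)\bigr),
$$
which is the desired estimate. The identical argument applies verbatim to $T^{-1}$, since Lemma \ref{Transport1} and its proof yield the same bound for $\partial^{\beta}_{(x,\xi)} T^{-1}$. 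I do not anticipate any real obstacle here; the only point that merits care is verifying the exponent accounting in the Faà di Bruno sum, and this is settled by directly reading the constraint $\sum_\alpha |\eta(\alpha)|\alpha = \gamma$ in (\ref{expl. Faa}).
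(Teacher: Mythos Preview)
Your proposal is correct and follows exactly the approach the paper indicates: the paper simply states that this lemma ``is a consequence of the two previous lemmas and the Fa\'a Di Bruno formula (\ref{Faa de Bruno}),'' and you have correctly filled in those details, including the key exponent accounting $\sum_\alpha |\eta(\alpha)|\,|\alpha| = |\gamma|$ from the constraint in (\ref{expl. Faa}) and the observation $|\beta|\leq |\gamma|$.
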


With Lemmas \ref{flot1}, \ref{Transport1} and \ref{transport avec le flot} at hand, we are now ready to prove Proposition \ref{u5}.

\begin{proof}
\begin{itemize}
\item[(i)] We start by proving estimate (\ref{As0}). Using formula (\ref{Faa de Bruno}) and estimate (\ref{floww1}), one can easily verify that for all $\gamma \in \mathbb N^{2n}$, there exists $C_{\gamma}>0$ such that for all $t\in \mathbb R$ and all $(x,\xi)\in \mathbb R^{2n}$, 
\begin{equation}\label{toute pre est}
\big\Vert \partial_{(x,\xi)}^{\gamma} \big( Q_0\circ \phi_{\lambda}^t(x,\xi)\big) \big\Vert \leq C_{\gamma} \exp\big(|\gamma|\Gamma |t| \big).
\end{equation}
Consequently, by differentiating $q_0(t)$ $|\gamma|$-times with respect to $(x,\xi)$ using the Leibniz formula, we obtain \begin{eqnarray*}
\big\Vert \partial_{(x,\xi)}^{\gamma}q_{0}(t,x,\xi) \big\Vert &\leq &\sum_{\beta\leq \gamma, \alpha\leq \beta}\binom{\beta}{\gamma}\binom{\alpha}{\beta} \big\Vert \partial_{(x,\xi)}^{\alpha}T^{-1}(t,x,\xi)\big\Vert \big\Vert \partial_{(x,\xi)}^{\beta-\alpha}\bigg(Q_0\big(\phi_{\lambda}^t(x,\xi) \big) \bigg)\big\Vert \big\Vert\partial_{(x,\xi)}^{\gamma-\beta}T(t,x,\xi)\big\Vert \\
&\leq & \sum_{\beta\leq \gamma, \alpha\leq \beta} C_{\alpha,\beta,\gamma} \exp\big( (|\gamma|+|\alpha|-|\beta|)\Gamma |t|\big) \exp\big( (|\beta|-|\alpha|)\Gamma |t| \big) \\
&\leq & C_{\gamma} \exp\big(|\gamma|\Gamma |t| \big),
\end{eqnarray*} 
uniformly for $(t,x,\xi)\in \mathbb R \times \mathbb R^{2n}$. Hence (\ref{As0}) holds.
\item[(ii)] We shall prove (\ref{Asj}) by induction with respect to $j\geq 1$. We give the proof only for $t\geq 0$, the case $t\leq 0$ is similar. Recall the expression of $q_j(t,x,\xi)$ 
$$
q_j(t,x,\xi)=T^{-1}(t,x,\xi)\bigg( Q_j\big( \phi_{\lambda}^t(x,\xi) \big) + \int_0^t T^{-1}\big(-s,\phi_{\lambda}^t(x,\xi)\big) B_j\big(s,\phi_{\lambda}^{t-s}(x,\xi)\big) T\big(-s,\phi_{\lambda}^t(x,\xi)\big) \; ds \bigg) T(t,x,\xi), 
$$
with 
\begin{equation*}
B_j(t,x,\xi):= \sum\limits_{\underset{0\leq p \leq j-1}{|\alpha|+|\beta|+k+p=j+1}} \tilde{\gamma}(\alpha,\beta) \bigg( {H_k}_{(\alpha)}^{(\beta)}{q_p(t)}_{(\beta)}^{(\alpha)}-(-1)^{|\alpha|-|\beta|}{q_p(t)}_{(\beta)}^{(\alpha)}{H_k}_{(\alpha)}^{(\beta)} \bigg)(x,\xi).
\end{equation*}
For $j=1$, we have 
\begin{equation*}
B_1(t,x,\xi)= \sum_{|\alpha|+|\beta|+k=2} \tilde{\gamma}(\alpha,\beta) \bigg( {H_k}_{(\alpha)}^{(\beta)}{q_0(t)}_{(\beta)}^{(\alpha)}-(-1)^{|\alpha|-|\beta|}{q_0(t)}_{(\beta)}^{(\alpha)}{H_k}_{(\alpha)}^{(\beta)} \bigg)(x,\xi).
\end{equation*}
Since $H_0$ is scalar according to assumption \textbf{(A1')}, for $k=0$ the previous sum vanishes and then $B_1$ can be rewritten as
\begin{eqnarray}\label{commutativité à l'ordre 1}
B_1(s,x,\xi) &=&  \sum_{|\alpha|+|\beta|=1} \tilde{\gamma}(\alpha,\beta) \bigg( {H_1}_{(\alpha)}^{(\beta)}{q_0(s)}_{(\beta)}^{(\alpha)}-(-1)^{|\alpha|-|\beta|}{q_0(s)}_{(\beta)}^{(\alpha)}{H_1}_{(\alpha)}^{(\beta)} \bigg)(x,\xi) \nonumber\\
&=& \frac{1}{2}\bigg( \{H_1,q_0(s)\}(x,\xi)-\{q_0(s),H_1\}(x,\xi)  \bigg).
\end{eqnarray}
Let $\gamma \in \mathbb N^{2n}$. Using assumption \textbf{(A2')} with $j=1$ and estimate (\ref{As0}), we get
\begin{equation*}
\big\Vert \partial_{(x,\xi)}^{\gamma}B_1(s,x,\xi) \big\Vert \leq C_{\gamma} \exp\bigg( (|\gamma|+1)\Gamma s \bigg),
\end{equation*}
uniformly for $s\geq 0$ and $(x,\xi)\in \mathbb R^{2n}$. Now, computing the derivatives of $B_1\circ \phi_{\lambda}^{t-s}$ by means of the Fa\'a Di Bruno formula (\ref{Faa de Bruno}) and combining the above estimate with (\ref{floww1}), we obtain 
\begin{equation}\label{huy1}
\big\Vert \partial_{(x,\xi)}^{\gamma}B_1\big(s,\phi_{\lambda}^{t-s}(x,\xi)\big) \big\Vert \leq C_{\gamma} \exp\bigg( \big( |\gamma|t+s\big) \Gamma \bigg),
\end{equation}
uniformly for $0\leq s \leq t$ and $(x,\xi)\in \mathbb R^{2n}$.

Put 
$$
A_1(t,s,x,\xi):= T^{-1}\big(-s,\phi_{\lambda}^t(x,\xi)\big) B_1\big(s,\phi_{\lambda}^{t-s}(x,\xi)\big) T\big(-s,\phi_{\lambda}^t(x,\xi)\big),
$$
and 
$$
\tilde{A}_1(t,x,\xi) := Q_1\big( \phi_{\lambda}^t(x,\xi) \big) + \int_0^t A_1(t,s,x,\xi) ds.
$$

Using Leibniz formula, estimates (\ref{huy1}) and (\ref{est transport avec le flot}) imply
\begin{align*}
\big\Vert \partial_{(x,\xi)}^{\gamma} A_1(t,s,x,\xi) \big\Vert \leq  \sum_{\beta \leq \gamma,\alpha\leq \beta} \binom{\beta}{\gamma}\binom{\alpha}{\beta} \big\Vert &\partial_{(x,\xi)}^{\alpha} \bigg( T^{-1}\big(-s,\phi_{\lambda}^t(x,\xi)\big) \bigg) \big\Vert  \\ 
&\times \big\Vert \partial_{(x,\xi)}^{\gamma -\beta} \bigg( T\big(-s,\phi_{\lambda}^t(x,\xi)\big) \bigg) \big\Vert  \big\Vert \partial_{(x,\xi)}^{\beta-\alpha} \bigg( B_1\big(s,\phi_{\lambda}^{t-s}(x,\xi)\big) \bigg) \big\Vert   
\end{align*}
\begin{eqnarray}
&\leq & \sum_{\beta \leq \gamma,\alpha\leq \beta}C_{\alpha,\beta,\gamma} \exp \bigg( \big( |\gamma|+|\alpha|-|\beta|\big)\big( t+s \big)  \Gamma \bigg) \exp\bigg( \big( (|\beta|-|\alpha|) t +s \big) \Gamma \bigg) \nonumber\\
&\leq & C_{\gamma} \exp\bigg( \big( |\gamma|t + (|\gamma|+1)s\big) \Gamma \bigg) \label{explain},
\end{eqnarray}
uniformly for $0\leq s \leq t$ and $(x,\xi)\in \mathbb R^{2n}$. Therefore, 
\begin{equation}\label{tbm}
\bigg\Vert \int_0^t \partial_{(x,\xi)}^{\gamma} A_1(t,s,x,\xi) ds \bigg\Vert \leq  C_{\gamma} \exp \big( \Gamma |\gamma| t \big) \int_0^{t} \exp\big( \Gamma (|\gamma|+1) s\big)  ds \leq C'_{\gamma} \exp \big( (2|\gamma|+1)\Gamma t\big).
\end{equation}
Combining this estimate with the fact that $Q_1\circ \phi_{\lambda}^t$ satisfies estimate \eqref{toute pre est}, we get 
$$
\big\Vert \partial_{(x,\xi)}^{\gamma} \tilde{A}_1(t,x,\xi) \big\Vert \leq C_{\gamma} \exp \big( (2|\gamma|+1)\Gamma t\big).
$$
Finally, we use Leibniz formula again to compute derivatives with respect to $(x,\xi)$ of $q_1(t,x,\xi)$. The above estimate together with estimate (\ref{estimation de T1}) give 
\begin{eqnarray*}
\big\Vert \partial_{(x,\xi)}^{\gamma} q_1(t,x,\xi) \big\Vert &\leq & \sum_{\beta \leq \gamma,\alpha\leq \beta} \binom{\beta}{\gamma}\binom{\alpha}{\beta} \big\Vert \partial_{(x,\xi)}^{\alpha} T^{-1}(t,x,\xi) \big\Vert \big\Vert \partial_{(x,\xi)}^{\gamma -\beta} T(t,x,\xi) \big\Vert  \big\Vert  \partial_{(x,\xi)}^{\beta-\alpha} \tilde{A}_1(t,x,\xi) \big\Vert \\
&\leq & C_{\gamma} \exp \big( (2|\gamma|+1)\Gamma t\big),
\end{eqnarray*}
uniformly for $t\geq 0$ and $(x,\xi)\in \mathbb R^{2n}$. Thus we proved (\ref{Asj}) for $j=1$.

Now, suppose that (\ref{Asj}) holds for all $j<r$. For $\gamma \in \mathbb N^{2n}$, we have 
\begin{equation}\label{hutn}
\partial_{(x,\xi)}^{\gamma} B_r(s,x,\xi) =  \sum\limits_{\underset{0\leq p \leq r-1}{|\alpha|+|\beta|+k+p=r+1}} \tilde{\gamma}(\alpha,\beta) \partial_{(x,\xi)}^{\gamma}  \bigg(  {H_k}_{(\alpha)}^{(\beta)}{q_p(s)}_{(\beta)}^{(\alpha)}  -(-1)^{|\alpha|-|\beta|} {q_p(s)}_{(\beta)}^{(\alpha)}{H_k}_{(\alpha)}^{(\beta)}\bigg)(x,\xi).
\end{equation}
We shall only focus on the first term of the above difference since the other term can be estimated similarly. Applying Leibniz formula, we get 
$$
 \partial_{(x,\xi)}^{\gamma}  \bigg( {H_k}_{(\alpha)}^{(\beta)}{q_p(s)}_{(\beta)}^{(\alpha)} \bigg)(x,\xi) = \sum_{\eta \leq \gamma} \binom{\eta}{\gamma} \partial_{(x,\xi)}^{\eta} {H_k}_{(\alpha)}^{(\beta)}(x,\xi) \partial_{(x,\xi)}^{\gamma-\eta}{q_p(s)}_{(\beta)}^{(\alpha)}(x,\xi).
$$
Firstly, since the sum in (\ref{hutn}) is over $((\alpha,\beta),k)\in \mathbb N^{2n}\times \{0,1\}$ such that $|\alpha|+|\beta|+k\geq 2$, then by assumption \textbf{(A2')} we have  $\partial_{(x,\xi)}^{\eta} {H_k}_{(\alpha)}^{(\beta)}\in L^{\infty}(\mathbb R^{2n})$, for all $\eta \in \mathbb N^{2n}$.

On the other hand, by the induction hypothesis, there exists a constant $C=C(\gamma,\eta,\alpha,\beta)>0$ such that for all $s\geq 0$ and $(x,\xi) \in \mathbb R^{2n}$, we have 
\begin{equation}
\big\Vert \partial_{(x,\xi)}^{\gamma-\eta} {q_p(s)}_{(\beta)}^{(\alpha)}(x,\xi) \big\Vert \leq C \exp\bigg( \big(2(|\gamma|-|\eta|+|\alpha|+|\beta|)+4p-3 \big)\Gamma s \bigg).
\end{equation}
Thus, taking the supremum over $0\leq |\eta|\leq |\gamma|$ and $|\alpha|+|\beta|=r+1-p$ with $0\leq p \leq r-1$, we obtain 
$$
\big\Vert \partial_{(x,\xi)}^{\gamma} B_r(s,x,\xi) \big\Vert \leq C_{\gamma} \exp\big( (2|\gamma|+4r-3)\Gamma s \big),
$$
uniformly with respect to $s\geq 0$ and $(x,\xi)\in \mathbb R^{2n}$.

Consequently, by applying Fa\'a Di Bruno's formula (\ref{Faa de Bruno}) and using estimate the flow (\ref{floww1}), we get 
\begin{equation}\label{huy2}
\big\Vert \partial_{(x,\xi)}^{\gamma}B_r\big(s,\phi_{\lambda}^{t-s}(x,\xi)\big) \big\Vert \leq C_{\gamma} \exp\bigg( (2|\gamma|+4r-3)\Gamma s + |\gamma|\Gamma (t-s) \bigg),
\end{equation}
uniformly for $0\leq s \leq t$ and $(x,\xi)\in \mathbb R^{2n}$. Put 
$$
A_r(t,s,x,\xi):= T^{-1}\big(-s,\phi_{\lambda}^t(x,\xi)\big) B_r\big(s,\phi_{\lambda}^{t-s}(x,\xi)\big) T\big(-s,\phi_{\lambda}^t(x,\xi)\big) ,
$$
and 
$$
\tilde{A}_r(t,x,\xi) := Q_r\big( \phi_{\lambda}^t(x,\xi) \big) + \int_0^t A_r(t,s,x,\xi) ds .
$$
Performing a similar computation as for $A_1$ and using estimates \eqref{huy2} and (\ref{transport avec le flot}), we obtain 
$$
\big\Vert \int_0^t \partial_{(x,\xi)}^{\gamma}A_r(t,s,x,\xi) ds \big\Vert \leq C_{\gamma} \exp\bigg((2|\gamma|+4r-3 ) \Gamma t \bigg),
$$
uniformly for $t\geq 0$ and $(x,\xi)\in \mathbb R^{2n}$. Consequently, using the fact that $Q_r \circ \phi_{\lambda}^t$ satisfies the estimate \eqref{toute pre est}, we get 
$$
\big\Vert \partial_{(x,\xi)}^{\gamma} \tilde{A}_r(t,x,\xi) \big\Vert \leq C_{\gamma} \exp\bigg((2|\gamma|+4r-3 ) \Gamma t \bigg).
$$

Finally, using the Leibniz formula and (\ref{estimation de T1}), we conclude 
\begin{eqnarray*}
\big\Vert \partial_{(x,\xi)}^{\gamma} q_r(t,x,\xi) \big\Vert &\leq & \sum_{\beta \leq \gamma,\alpha\leq \beta} \binom{\beta}{\gamma}\binom{\alpha}{\beta} \big\Vert \partial_{(x,\xi)}^{\alpha} T^{-1}(t,x,\xi) \big\Vert \big\Vert \partial_{(x,\xi)}^{\gamma -\beta} T(t,x,\xi) \big\Vert  \big\Vert \partial_{(x,\xi)}^{\beta - \alpha} \tilde{A}_r(t,x,\xi) \big\Vert \\
&\leq & C_{\gamma} \exp \bigg( (2|\gamma|+4r-3 ) \Gamma t \bigg),
\end{eqnarray*}
uniformly for $t\geq 0$ and $(x,\xi)\in \mathbb R^{2n}$. Hence (\ref{Asj}) holds for $j=r$. This ends the proof of Proposition \ref{u5}.
\end{itemize}
\end{proof}

\begin{remark}\label{explanation}
Notice that estimate \eqref{Asj} on the derivatives of the symbols $q_j(t,x,\xi)$, $j\geq 1$, is different from the one proved in the scalar case (see \cite[Theorem 1.4]{Rob}). This is caused by the derivatives of the term $T(-s,\phi_{\lambda}^t(x,\xi))$ appearing in the expression \eqref{symbole j multiple scalaire} of $q_j(t)$ which does not exist in the scalar case. Assume that $Q$ is classical, i.e. $Q(x,\xi)= Q_0(x,\xi)$ (as in \cite{Rob}) and let us explain this difference at the level of sub-principal symbols, i.e. for $j=1$. We have shown in Remark \ref{les symboles dans le cas scalaire} that in the case where $H_1$ is also scalar-valued, the sub-principal symbol $q_{1,\text{sca}}(t,x,\xi)$ is given by 
$$
q_{1,\text{sca}}(t,x,\xi) =  \int_0^t B_1\big(s,\phi_{\lambda}^{t-s}(x,\xi)\big) ds ,
$$
where $B_1$ is defined by \eqref{Les Bj}. Using estimate \eqref{huy1} on the derivatives of $B_1\big(s,\phi_{\lambda}^{t-s}(x,\xi)\big)$, one obtains
\begin{equation}\label{estim scalaire}
\big| \partial_{(x,\xi)}^{\gamma}q_{1,\text{sca}}(t,x,\xi) \big| \leq C_{\gamma} \exp\big( (|\gamma|+1)\Gamma |t| \big), \quad \forall \gamma\in \mathbb N^{2n}, t\in \mathbb R, (x,\xi)\in \mathbb R^{2n},
\end{equation}
which is the estimate proved in \cite[Theorem 1.4]{Rob}. Now, in the case where $H_1$ is matrix-valued, according to \eqref{symbole j multiple scalaire} (taking into account the fact that $Q_1=0$), $q_1(t,x,\xi)$ is given by 
$$
q_1(t,x,\xi) = T^{-1}(t,x,\xi)\bigg( \int_0^t T^{-1}\big(-s,\phi_{\lambda}^t(x,\xi)\big) B_1\big(s,\phi_{\lambda}^{t-s}(x,\xi)\big) T\big(-s,\phi_{\lambda}^t(x,\xi)\big) \; ds \bigg) T(t,x,\xi).
$$
By going back to estimate \eqref{explain}, one sees that due to the term $T\big(-s,\phi_{\lambda}^t(x,\xi)\big)$, when differentiating $q_1(t,x,\xi)$ $|\gamma|$- times with respect to $(x,\xi)$, there is a loss of $ \exp\big(|\gamma|\Gamma|t|\big)$ compared to \eqref{estim scalaire}, i.e. we have 
$$
\big\Vert\partial_{(x,\xi)}^{\gamma}q_1(t,x,\xi) \big\Vert \leq C_{\gamma} \exp\big( (2|\gamma|+1)\Gamma |t| \big), \quad \forall \gamma\in \mathbb N^{2n}, t\in \mathbb R, (x,\xi)\in \mathbb R^{2n}.
$$
As pointed out in (iii) of Remark \ref{première remarque}, this explain the fact that our estimate on the remainder term \eqref{remainder estimate T1} is different from the one obtained in the scalar case.


\end{remark}

\subsection{Proof of Theorem \ref{main T1}}\label{prf T1} 

The proof of estimate \eqref{remainder estimate T1} is based on estimates \eqref{As0} and \eqref{Asj} and the control of the remainder terms in the composition formula of $\h$-pseudodifferential operators \eqref{dév asy annexe}. We follow the method of Bouzouina-Robert \cite{Rob}.

For $A,B$ two semiclassical symbols in suitable classes of symbols and $k\in \mathbb N$, we define
\begin{equation}\label{la fonction tilde R}
\tilde{R}_{k}(A,B):=i\h^{-(k+1)}\big(R_k(A,B)-R_k(B,A) \big),
\end{equation}
where $R_k(A,B,x,\xi;\h):= A \# B(x,\xi;\h) - \sum_{j=0}^k \h^j (A\#B)_j$ denotes the remainder term of order $k$ in the asymptotic expansion of the symbol $A\#B$ (see appendix \ref{semi-classical background}). 


For $N\in \mathbb N$, we set 
$$
Q_N(t) := Q(t) - \sum_{j=0}^N \h^j \big(q_j(t)\big)^w(x,\h D_x).
$$
The first step in the proof of estimate (\ref{remainder estimate T1}) is the following lemma.
\begin{lemma}
For all $N\in \mathbb N$, the following estimate holds
\begin{align}\label{main est T1}
\big\Vert  Q_N(t) \big\Vert_{\mathcal{L}(L^2(\mathbb R^n)\otimes \mathbb C^m)} \leq \h^{N+1} \bigg\Vert \int_0^t U_H(-s) \big(R^{(N+1)}(t-s)\big)^w U_H(s) ds\bigg\Vert_{\mathcal{L}(L^2(\mathbb R^n)\otimes \mathbb C^m)} +\mathcal{O}(\h^{N+1}),
\end{align}
uniformly for $t\in \mathbb R$, with 
\begin{eqnarray}\label{writn}
R^{(N+1)}(t):= \tilde{R}_{N+1}(H,q_0(t)) + \tilde{R}_{N}(H,q_1(t)) + \cdots + \tilde{R}_1(H,q_N(t)) = \sum_{j=0}^N \tilde{R}_{N+1-j}(H,q_j(t)).
\end{eqnarray}
\end{lemma}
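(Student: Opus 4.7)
My plan is to turn $Q_N(t)$ into the solution of an inhomogeneous Heisenberg equation with source $\hbar^{N+1}(R^{(N+1)}(t))^w$, then integrate by Duhamel, and finally estimate the initial data $Q_N(0)$ via Calder\'on--Vaillancourt.

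First I differentiate $Q_N(t)$ in $t$ using the Heisenberg equation \eqref{Heisenberg1} and the symbolic Cauchy problems $(\mathcal{C}_j)$; adding and subtracting $\frac{i}{\hbar}\bigl[H^w,\sum_{j=0}^N \hbar^j (q_j(t))^w\bigr] = \sum_{j=0}^N \hbar^j (\{H,q_j(t)\}^*)^w$, I get
$$
\frac{d}{dt} Q_N(t) = \frac{i}{\hbar}[H^w, Q_N(t)] + \bigg(\sum_{j=0}^N \hbar^j \{H, q_j(t)\}^* - \sum_{j=0}^N \hbar^j \{H, q(t)\}^*_j\bigg)^w.
$$
The algebraic core is to recognise the bracketed symbol as $\hbar^{N+1} R^{(N+1)}(t)$. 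Using the elementary identity $\{H, q(t)\}^*_j = \sum_{p+k=j}\{H, q_p(t)\}^*_k$ and swapping summations, the bracket equals $\sum_{p=0}^N \hbar^p\bigl(\{H,q_p(t)\}^* - \sum_{k=0}^{N-p}\hbar^k\{H,q_p(t)\}^*_k\bigr)$. Writing $\{H, q_p\}^* = i\hbar^{-1}[H, q_p]_\#$ and invoking \textbf{(A1')} so that the leading matrix commutator $[H_0, q_p]$ vanishes, each $p$-summand rearranges to $\hbar^{N+1-p}\tilde{R}_{N+1-p}(H, q_p(t))$ by the very definition \eqref{la fonction tilde R}; summing in $p$ reproduces $\hbar^{N+1}R^{(N+1)}(t)$ as in \eqref{writn}.

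With the ODE $\partial_t Q_N(t) = \frac{i}{\hbar}[H^w, Q_N(t)] + \hbar^{N+1}(R^{(N+1)}(t))^w$ in hand, I set $V(t) := U_H(t)Q_N(t)U_H(-t)$; since $H^w$ commutes with $U_H(\pm t)$ the two commutator terms cancel, leaving $\partial_t V(t) = \hbar^{N+1} U_H(t)(R^{(N+1)}(t))^w U_H(-t)$. Integrating from $0$ to $t$ and undoing the conjugation via the change of variable $s \leftrightarrow t-s$ yields
$$
Q_N(t) = U_H(-t)\, Q_N(0)\, U_H(t) + \hbar^{N+1} \int_0^t U_H(-s) \bigl(R^{(N+1)}(t-s)\bigr)^w U_H(s)\, ds.
$$
Finally, $Q \in S_{\text{sc}}(1)$ implies that the symbol $Q(\cdot;\hbar) - \sum_{j=0}^N \hbar^j Q_j$ lies in $\hbar^{N+1}S(1)$ uniformly in $\hbar$, so the Calder\'on--Vaillancourt theorem (Theorem \ref{Cal}) gives $\|Q_N(0)\|_{\mathcal{L}(L^2(\mathbb R^n)\otimes \mathbb C^m)} = \mathcal{O}(\hbar^{N+1})$; unitarity of $U_H$ transfers this bound to $U_H(-t)Q_N(0)U_H(t)$ uniformly in $t \in \mathbb R$, and the triangle inequality yields \eqref{main est T1}.

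The main obstacle is the algebraic identification in the second paragraph: one must carefully align the Moyal remainders $R_k(H, q_p) - R_k(q_p, H)$ with the truncated Moyal-bracket series, keeping track of the $i\hbar^{-1}$ prefactor in $\{\cdot,\cdot\}^* = i\hbar^{-1}[\cdot,\cdot]_\#$ and crucially exploiting the vanishing of $[H_0, q_p]$ under \textbf{(A1')} in order to shift the index by one and produce $\tilde{R}_{N+1-p}$ exactly. Everything else is formal (Duhamel, Calder\'on--Vaillancourt, unitarity); the estimation of the integrated remainder itself is postponed to a subsequent step that will use Proposition \ref{u5}.
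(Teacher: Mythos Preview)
Your proof is correct and follows essentially the same Duhamel strategy as the paper. The only cosmetic difference is that the paper differentiates the interaction-picture quantity $s\mapsto U_H(-s)\,Q_N(t-s)\,U_H(s)$ directly and reads off the source term from the identity $\{H,q^{(N)}(t)\}^*=\frac{d}{dt}q^{(N)}(t)+\h^{N+1}R^{(N+1)}(t)$, whereas you first derive the inhomogeneous Heisenberg equation for $Q_N(t)$ and then conjugate; after your change of variables both computations yield the same integral representation and the same $\mathcal{O}(\h^{N+1})$ bound on the initial term.
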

\begin{proof}
Let $N\in \mathbb N$ and define 
$$
q^{(N)}(t,x,\xi;\h):= \sum_{j=0}^N \h^j q_j(t,x,\xi).
$$
According to the Cauchy problems $(\mathcal{C}_j)_{j\geq 0}$ satisfied by the symbols $q_j(t)$, for all $j\geq 0$, we have 
\begin{equation}
\frac{d}{dt} q_j(t) = \{H,q_0(t)\}^*_j + \{H,q_1(t)\}^*_{j-1} + \{H,q_2(t)\}^*_{j-2} + \cdots + \{H,q_j(t)\}^*_0, 
\end{equation}
where we recall that for $0\leq k \leq j$, $\{H,q_k(t)\}^*_{j-k}$ denotes the coefficient of $\h^{j-k}$ in the asymptotic expansion of the Moyal bracket $\{H,q_k(t)\}^*$ (see appendix \ref{semi-classical background}). Then
$$
\frac{d}{dt} q^{(N)}(t) = \sum_{j=0}^N \h^j \frac{d}{dt} q_j(t) = \sum_{j=0}^N \h^j \{H,q_0(t)\}^*_j + \h \sum_{j=0}^{N-1} \h^j \{H,q_1(t)\}^*_j + \cdots + \h^N \{H,q_N(t)\}^*_0.
$$
Using the formula of asymptotic expansion of the Moyal bracket (\ref{rMp}), we obtain 
\begin{equation}\label{canbe}
\{H,q^{(N)}(t)\}^* = \frac{d}{dt} q^{(N)}(t) + \h^{N+1} R^{(N+1)}(t),
\end{equation}
with $R^{(N+1)}(t)$ defined by (\ref{writn}).
A simple computation using (\ref{canbe}) yields 
\begin{eqnarray*}
\frac{d}{ds}\bigg( U_H(-s) Q_{N}(t-s) U_H(s) \bigg) &=& U_H(-s) \bigg( \frac{i}{h}[H^w,Q_{N}(t-s)]-\frac{d}{dt} Q_{N}(t-s)\bigg)U_H(s)    \\
&=& U_H(-s) \bigg(  \frac{d}{dt} \left( q^{(N)}(t-s) \right)^w - \frac{i}{\h} \big[ H^w, \left( q^{(N)}(t-s) \right)^w \big]  \bigg) U_H(s) \\
&=& - \h^{N+1}U_H(-s) \big(R^{(N+1)}(t-s)\big)^w U_H(s).  
\end{eqnarray*}
Therefore, by integrating in $s$ and using the fact that 
$$
\big\Vert Q_N(0) \big\Vert_{\mathcal{L}(L^2(\mathbb R^n)\otimes \mathbb C^m)} = \big\Vert Q^w(x,\h D_x;\h) - \sum_{j=0}^N \h^j Q_j^w(x,\h D_x) \big\Vert_{\mathcal{L}(L^2(\mathbb R^n)\otimes \mathbb C^m)} = \mathcal{O}(\h^{N+1}),
$$ 
we get
$$
\big\Vert Q_N(t)  \big\Vert_{\mathcal{L}(L^2(\mathbb R^n)\otimes \mathbb C^m)} \leq \h^{N+1} \bigg\Vert \int_0^t U_H(-s) \big( R^{(N+1)}(t-s)\big)^w U_H(s) ds \bigg\Vert_{\mathcal{L}(L^2(\mathbb R^n)\otimes \mathbb C^m)}  + \mathcal{O}(\h^{N+1}),
$$
uniformly for $t\in \mathbb R$. This ends the proof of the lemma.
\end{proof}

\textbf{End of the proof of Theorem \ref{main T1}.}

It remains now to estimate the $\mathcal{L}(L^{2}(\mathbb R^n)\otimes \mathbb C^m)$-norm of the operator $\big(R^{(N+1)}(t)\big)^{w}$. For that, we shall employ the Calder\'on-Vaillancourt theorem (Theorem \ref{Cal}). We shall therefore need estimates on the derivatives with respect to $(x,\xi)$ of the symbol $R^{(N+1)}(t,x,\xi;\h)$. 

Let $N\in \mathbb N$ and $0\leq j \leq N$. We have 
\begin{equation}\label{rema point es}
\tilde{R}_{N+1-j}(H,q_j(t)) = \tilde{R}_{N+1-j} (H_0,q_j(t))  +   \tilde{R}_{N-j} (H_1,q_j(t)).
\end{equation}

Let $k\in \{0,1\}$. According to Theorem \ref{est ress} (combined with Remark \ref{extension}), for all $\gamma \in \mathbb N^{2n}$, there exists a constant $C=C(n,N,j,\gamma,k)>0$ such that for all $t\in \mathbb R$ and all $u\in \mathbb R^{2n}$, we have 
\begin{align}
\big\Vert\partial_{u}^{\gamma}\tilde{R}_{N+1-j-k}(H_k,q_{j}(t);u) \big\Vert \leq C \sup_{(\ast)} \bigg( \big\Vert\partial_{v}^{(\alpha,\beta)+\eta}H_{k}(v+u)\big\Vert  \big\Vert\partial_{w}^{(\beta,\alpha)+\kappa}q_{j}(t,w+u)\big\Vert \bigg),
\end{align}
where $\sup_{(\ast)}$ is the supremum under the conditions
$$
(\ast): \quad v,w \in \mathbb R^{2n}, \quad \eta,\kappa\in \mathbb N^{2n};|\eta|+|\kappa|\leq 4n+1+|\gamma|, \quad \alpha,\beta \in \mathbb N^n;|\alpha|+|\beta|=N+2-j-k.
$$
Observe first that by assumption \textbf{(A2')}, for $k\in \{0,1\}$, for all multi-indices $((\alpha,\beta),\eta) \in \mathbb N^{2n}\times \mathbb N^{2n}$ and all $0\leq j\leq N$ with $|\alpha|+|\beta|=N+2-j-k$, we have 
$$
\partial_{(x,\xi)}^{(\alpha,\beta)+\eta}H_{k}\in L^{\infty}(\mathbb R^{2n}).
$$

On the other hand, using the estimates given by Proposition \ref{u5}, for all $((\alpha,\beta),\kappa)\in \mathbb N^{2n}\times \mathbb N^{2n}$ and all $j\geq 0$, there exists $C_j=C(\alpha,\beta,\kappa,j)>0$ such that 
$$
\big\Vert \partial_{(x,\xi)}^{(\beta,\alpha)+\kappa}q_0(t,x,\xi) \big\Vert \leq C_0 \exp\bigg( (|\alpha|+|\beta|+|\kappa|)\Gamma |t| \bigg)
$$
and for $j\geq 1$
$$
\big\Vert \partial_{(x,\xi)}^{(\beta,\alpha)+\kappa}q_j(t,x,\xi) \big\Vert \leq C_j \exp\bigg( (2 (|\alpha|+|\beta|+|\kappa|)+4j-3)\Gamma |t|\bigg),
$$ 
uniformly for $(t,x,\xi)\in \mathbb R \times \mathbb R^{2n}$.

Therefore, taking the supremum over ($\ast$), there exists $C=C(\gamma,N,n,j,k)>0$ such that for all $t\in \mathbb R$ and all $(x,\xi)\in \mathbb R^{2n}$, we have 
$$
\big\Vert\partial_{(x,\xi)}^{\gamma}\tilde{R}_{N+1-j-k}(H_k,q_{j}(t);x,\xi) \big\Vert \leq C \exp\bigg( \big(2|\gamma|+2N+8n+3+2j-2k \big)\Gamma |t|\bigg).
$$
Now, summing over $j=0,...,N$, we get 
\begin{equation}
\big\Vert  \partial_{(x,\xi)}^{\gamma}R^{(N+1)}(t,x,\xi) \big\Vert \leq C_{n,N,\gamma} \exp\bigg( \big(2|\gamma|+4N+8n+3 \big)\Gamma |t| \bigg)
\end{equation}
uniformly for $t\in \mathbb R$ and $(x,\xi)\in \mathbb R^{2n}$.

Consequently, using the Calder\'on-Vaillancourt theorem (Theorem \ref{Cal}), we deduce
$$
\big\Vert \big(R^{(N+1)}(t)\big)^w(x,\h D_x;\h) \big\Vert_{\mathcal{L}(L^2(\mathbb R^n)\otimes \mathbb C^m)} \leq C_{n,N} \exp\bigg( \big( 4N+\delta_n\big)\Gamma |t| \bigg),
$$
uniformly for $t\in \mathbb R$, where $\delta_n$ is an integer depending only on the dimension $n$. 

By going back to \eqref{main est T1}, we obtain 
\begin{eqnarray*}
\Vert Q_N(t) \Vert_{\mathcal{L}(L^2(\mathbb R^n)\otimes \mathbb C^m)} &\leq & \h^{N+1} \int_0^t \big\Vert \big( R^{N+1}(t-s) \big)^w \big\Vert_{\mathcal{L}(L^2(\mathbb R^n)\otimes \mathbb C^m)} ds + \mathcal{O}(\h^{N+1}) \\
&\leq & C_{N} \: \h^{N+1} \int_0^t \exp \bigg( \big( 4N + \delta_n \big) \Gamma (t-s) \bigg) ds + \mathcal{O}(\h^{N+1}) \\
&\leq & C'_{N} \: \h^{N+1} \exp \bigg( \big( 4N + \delta_n \big) \Gamma t \bigg), 
\end{eqnarray*}
uniformly for $t\geq 0$. Analogously, we prove the estimate for $t\leq 0$. This ends the proof of Theorem \ref{main T1}.

\begin{flushright}
$\square$
\end{flushright}

\section{General case}\label{Generalization}
We now turn to the study of the general case where the principal symbol of the Hamiltonian $H$ which generates the time evolution is no longer a scalar multiple of the identity in $M_m(\mathbb C)$. 

Let $H \in S(g)$ be an hermitian-valued semiclassical Hamiltonian satisfying \textbf{(A0)} and suppose that \textbf{(A1)} is fulfilled. We consider the time evolution of a bounded quantum observable $Q^w(x,\h D_x;\h)$ associated to a semiclassical observable $Q\in S_{\text{sc}}(1)$, given by 
$$
Q(t):= U_H(-t)Q^w(x,\h D_x;\h)U_H(t), \quad t\in \mathbb R.
$$
 
\subsection{Semiclassical projections}
As already mentioned in section \ref{sec}, the first step in the study of $Q(t)$ consists in the construction of the semiclassical projections associated to $H^w(x,\h D_x;\h)$.
\begin{theorem}\label{projections}
Let assumption \textbf{(A1)} be satisfied. For every $1\leq \nu \leq l$, there exists a semiclassical symbol $\tilde{P}_{\nu}(x,\xi;\h)\sim \sum_{j\geq 0} \h^j \tilde{P}_{\nu,j}(x,\xi)$ in $S(1, \mathbb R^{2n}, M_m(\mathbb C))$ (unique modulo $S^{-\infty}(1;\mathbb R^{2n},M_m(\mathbb C))$) such that modulo $\mathcal{O}(\h^{\infty})$ in $\mathcal{L}(L^2(\mathbb R^n)\otimes \mathbb C^m)$, we have  
 \begin{equation}\label{diag}
\tilde{P}_{\nu}^w \tilde{P}_{\nu}^w = \tilde{P}_{\nu}^w
\end{equation}
\begin{equation}\label{diag2}
 \tilde{P}_{\nu}^w = \big( \tilde{P}_{\nu}^w\big)^* ,
\end{equation}
\begin{equation}\label{diag3}
[\tilde{P}_{\nu}^w,H^w] = 0.
\end{equation}
\begin{equation}\label{différents indices}
\tilde{P}_{\mu}^w \tilde{P}_{\nu}^w= \tilde{P}_{\nu}^w \tilde{P}_{\mu}^w = 0, \quad \forall 1\leq \mu\neq \nu \leq l,
\end{equation}
\begin{equation}\label{identité}
\sum_{\nu=1}^l \tilde{P}_{\nu}^w = \text{id}_{L^2(\mathbb R^n) \otimes \mathbb C^m}. 
\end{equation}
In particular, $\tilde{P}_{\nu,0}(x,\xi) = P_{\nu,0}(x,\xi)$ is the orthogonal projector onto $\text{Ker}\: \big( H_0(x,\xi)-\lambda_{\nu}(x,\xi) \big)$.
\end{theorem}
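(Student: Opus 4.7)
The plan is to follow the Helffer--Sj\"ostrand / Nenciu--Sordoni construction, building $\tilde{P}_\nu \sim \sum_{j\geq 0}\h^j \tilde{P}_{\nu,j}$ order by order in $\h$ and then invoking Borel's lemma. Properties \eqref{diag}--\eqref{diag2} at leading order force $\tilde{P}_{\nu,0}(x,\xi)=P_{\nu,0}(x,\xi)$, the orthogonal spectral projector onto $\text{Ker}(H_0(x,\xi)-\lambda_\nu(x,\xi))$; by assumption \textbf{(A1)} and Lemma \ref{smoothness}, this is a smooth hermitian symbol in $S(1)$ commuting with $H_0$, so \eqref{diag}--\eqref{diag3} hold to principal order.

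For the induction, suppose $\tilde{P}_\nu^{(N)}=\sum_{j=0}^N \h^j \tilde{P}_{\nu,j}$ has been constructed with all $\tilde{P}_{\nu,j}\in S(1)$ hermitian and
\begin{equation*}
[H,\tilde{P}_\nu^{(N)}]_{\#}=\h^{N+1}R_N+\mathcal{O}(\h^{N+2}), \qquad \tilde{P}_\nu^{(N)}\#\tilde{P}_\nu^{(N)}-\tilde{P}_\nu^{(N)}=\h^{N+1}S_N+\mathcal{O}(\h^{N+2}).
\end{equation*}
Identifying coefficients of $\h^{N+1}$ reduces the step to finding a hermitian $\tilde{P}_{\nu,N+1}$ solving
\begin{equation*}
[H_0,\tilde{P}_{\nu,N+1}]=-R_N, \qquad P_{\nu,0}\,\tilde{P}_{\nu,N+1}+\tilde{P}_{\nu,N+1}\,P_{\nu,0}-\tilde{P}_{\nu,N+1}=S_N.
\end{equation*}
Decomposing matrices along the spectral blocks, $A_{\mu\mu'}:=P_{\mu,0}A P_{\mu',0}$, one computes $[H_0,A]_{\mu\mu'}=(\lambda_\mu-\lambda_{\mu'})A_{\mu\mu'}$. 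Hence the off-diagonal blocks $\mu\neq\mu'$ of $\tilde{P}_{\nu,N+1}$ are uniquely determined by division by $\lambda_\mu-\lambda_{\mu'}$, which is legitimate outside a compact set by the gap condition \eqref{Gap}; the diagonal blocks are then fixed explicitly by the projector equation. Taking the hermitian part $(\tilde{P}_{\nu,N+1}+\tilde{P}_{\nu,N+1}^*)/2$ preserves both equations (since $R_N,S_N$ inherit the appropriate symmetries from the inductive hypothesis and from $H_0,H_1$ being hermitian) and secures hermiticity.

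The main obstacle I anticipate lies in two intertwined verifications: (a) the compatibility of the two equations, namely that the diagonal blocks of $R_N$ and the blocks of $S_N$ touching the index $\nu$ vanish --- this follows algebraically by differentiating the order-$N$ identities $\tilde{P}_\nu^{(N)}\#\tilde{P}_\nu^{(N)}\sim\tilde{P}_\nu^{(N)}$ and $[H,\tilde{P}_\nu^{(N)}]_\#\sim 0$ and tracking how the new unknown appears; and (b) controlling the symbol class at each step. Each division by $\lambda_\mu-\lambda_{\mu'}$ costs a factor $g^{-1}$ via \eqref{Gap}, so one must verify that $R_N\in S(g)$ and $S_N\in S(1)$ with structure compatible with \eqref{Gap}. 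This requires a careful inductive bookkeeping of where $H\in S(g)$ versus the already-built $\tilde{P}_{\nu,j}\in S(1)$ enters each Moyal product, using that higher-order terms of the Moyal expansion carry additional derivatives that gain factors in $S(g)$. This symbol-class bookkeeping is the technical heart of the proof, essentially as performed in \cite{bolte} and \cite{Hel}.

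Once all $\tilde{P}_{\nu,j}$ are produced, Borel's lemma furnishes $\tilde{P}_\nu\in S(1)$ realizing the formal series, and \eqref{diag}--\eqref{diag3} then hold modulo $\mathcal{O}(\h^\infty)$ in $\mathcal{L}(L^2(\mathbb R^n)\otimes \mathbb C^m)$ by the Calder\'on--Vaillancourt theorem. Property \eqref{différents indices} for $\mu\neq\nu$ follows by applying the same inductive scheme to $\tilde{P}_\mu\#\tilde{P}_\nu$, whose principal symbol $P_{\mu,0}P_{\nu,0}$ vanishes and which satisfies the analogous commutation/projector relations to all orders. The identity \eqref{identité} is obtained similarly: $\text{id}_{L^2(\mathbb R^n)\otimes \mathbb C^m}-\sum_\nu \tilde{P}_\nu^w$ has vanishing principal symbol by the spectral resolution of $H_0$, and satisfies the same commutation/idempotence relations, hence is $\mathcal{O}(\h^\infty)$. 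Uniqueness modulo $S^{-\infty}$ follows by applying the same induction to the difference of two candidate symbols, whose principal part is zero.
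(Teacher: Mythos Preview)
Your proposal is correct in outline, but it takes a genuinely different route from the paper's proof. The paper itself acknowledges two methods: the iterative one you describe (attributed there to Brummelhuis--Nourrigat and Nenciu), and the Helffer--Sj\"ostrand Riesz-projector method, and it chooses the latter. Concretely, the paper builds a symbolic parametrix $B(x,\xi,z;\h)\sim\sum_j\h^jB_j(x,\xi,z)$ for $(H-z)$ with $B_0=(H_0-z)^{-1}$, defines $\tilde P_\nu=\frac{i}{2\pi}\int_{\gamma_\nu(x,\xi)}B(x,\xi,z;\h)\,dz$ over a contour encircling only $\lambda_\nu(x,\xi)$, and then reads off all the properties from resolvent-type identities: idempotence from $B(z)-B(\tilde z)\sim(z-\tilde z)B(z)\#B(\tilde z)$ plus a double-contour Cauchy argument, commutation with $H^w$ directly from the parametrix relation, orthogonality \eqref{différents indices} from disjoint contours, and \eqref{identité} from $\sum_\nu\tilde P_\nu\sim I_m$ at the symbol level. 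Symbol-class control comes for free since $(H_0-z)^{-1}\in S(g^{-1})$ uniformly for $z\in\gamma_\nu$.

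What each approach buys: the paper's contour method handles the algebraic compatibility and the symbol-class bookkeeping you flag as the ``technical heart'' in one stroke---no order-by-order verification that the diagonal blocks of $R_N$ vanish, no careful tracking of where factors of $g^{-1}$ come from. Your inductive route, by contrast, makes the individual correction terms $\tilde P_{\nu,j}$ explicit and is closer in spirit to what is actually needed later in the paper when one must estimate $P_{\nu,j}$ in specific classes (cf.\ Lemma \ref{tilde pnuj smoo}). Two minor points: your remark that division by $\lambda_\mu-\lambda_{\mu'}$ is ``legitimate outside a compact set'' undersells assumption \textbf{(A1)}---the eigenvalues are globally distinct, so the division is everywhere well defined, and \eqref{Gap} is what guarantees the quotient stays in $S(g^{-1})$; and your arguments for \eqref{différents indices} and \eqref{identité} would be cleanest if routed through the uniqueness statement you prove at the end, rather than through a separate induction.
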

There are at least two methods of proof for this result. The first one followed by Brummelhuis and Nourrigat \cite{Brum} (see also \cite{Nenciu}) consists in solving, in the space of formal power series of $\h$, the symbolic equations corresponding to (\ref{diag}), (\ref{diag2}) and (\ref{diag3}),
$$
\tilde{P}_{\nu}(x,\xi;\h)\# \tilde{P}_{\nu}(x,\xi;\h) \sim \tilde{P}_{\nu}(x,\xi;\h) \sim \big( \tilde{P}_{\nu}(x,\xi;\h)\big)^*, \quad [ \tilde{P}_{\nu}(x,\xi;\h),H(x,\xi;\h)]_{\#}\sim 0.
$$
The second method due to Helffer and Sj\"ostrand \cite{Hel} uses the Riesz projectors and the symbolic calculus of $\h$-pseudodifferential operators (see \cite[ch. 8]{dim}). For the reader's convenience, we give in the appendix \ref{projjj} an outline of the 
 proof of Theorem \ref{projections}  following the method in \cite{Hel}. 



For our next purposes, it is more convenient to work with exact projections, i.e. with operators which satisfy \eqref{diag} exactly, not only modulo $\mathcal{O}(\h^{\infty})$ in norm $\mathcal{L}(L^2(\mathbb R^n)\otimes \mathbb C^m)$. To do this, we follow an idea from \cite{Nenciu1} (see also \cite{Nenciu, Sor}) which consists in introducing the operators
$$
\mathcal{P}_{\nu} := \frac{i}{2\pi} \int_{|z-1| = \frac{1}{2}} (\tilde{P}_{\nu}^w - z)^{-1} dz, \quad 1\leq \nu \leq l.
$$
For $\h$ small enough, \eqref{diag} implies that the spectrum of $\tilde{P}^w_{\nu}$ is concentrated near $0$ and $1$ (see \cite{Nenciu1}), then $\mathcal{P}_{\nu}$ is well defined and satisfies
\begin{equation}\label{tildediag}
\mathcal{P}_{\nu} \mathcal{P}_{\nu} = \mathcal{P}_{\nu} = \mathcal{P}_{\nu}^*. 
\end{equation}
By a similar computation as in \cite[sec. III]{Nenciu1}, one gets  
\begin{equation}\label{rtuip}
\big\Vert \mathcal{P}_{\nu} - \tilde{P}_{\nu}^w \big\Vert_{\mathcal{L}(L^2(\mathbb R^n)\otimes \mathbb C^m)}= \mathcal{O}(\h^{\infty}).
\end{equation}
Then, by Beals's characterization of $\h$-pseudodifferential operators (see \cite[Proposition 8.3]{dim}), $\mathcal{P}_{\nu}$ is an $\h$-pseudodifferential operator with symbol $P_{\nu}(x,\xi;\h)\in S_{\text{sc}}(1)$, i.e. $\mathcal{P}_{\nu} = P_{\nu}^w(x,\h D_x;\h)$. Moreover, we have (see \cite{Nenciu1})
\begin{equation}\label{tildediag1}
\big\Vert [\mathcal{P}_{\nu},H^w] \big\Vert_{\mathcal{L}(L^2(\mathbb R^n)\otimes \mathbb C^m)} = \mathcal{O}\left( \big\Vert [\tilde{P}^w_{\nu},H^w] \big\Vert_{\mathcal{L}(L^2(\mathbb R^n)\otimes \mathbb C^m)} \right) = \mathcal{O}(\h^{\infty}).
\end{equation}

\subsection{Block-diagonalization}
In what follows, we shall use the notation $P_{\nu}^w$ for $\mathcal{P}_{\nu}$. We introduce the family of Heisenberg operators $Q_{\nu}(t)$ defined by  
\begin{equation}\label{les blocks Qnu}
Q_{\nu}(t) := e^{\frac{it}{\h}P_{\nu}^w H^w P_{\nu}^w} P_{\nu}^w Q^w P_{\nu}^w e^{-\frac{it}{\h}P_{\nu}^w H^w P_{\nu}^w}, \quad 1\leq \nu \leq l.
\end{equation}
The main result of this paragraph is the following.

\begin{proposition}\label{proposition réduction} Assume that $H$ satisfies the assumption of Theorem \ref{projections}. 
\begin{itemize}
\item[(i)] If $Q\in \mathcal{Q}(1)$, then the following estimate holds
\begin{equation}\label{réduction}
\big\Vert Q(t)-\sum_{\nu=1}^l Q_{\nu}(t) \big\Vert_{\mathcal{L}(L^2(\mathbb R^n)\otimes \mathbb C^m)} = \mathcal{O}\big( (1+|t|) \h^{\infty} \big),
\end{equation}
uniformly for $t\in \mathbb R$. 
\item[(ii)] Assume that $Q_0(x,\xi) = \sum_{\nu=1}^l P_{\nu,0}(x,\xi) \tilde{Q}(x,\xi) P_{\nu,0}(x,\xi)$ for some $\tilde{Q}\in S(1)$. Then, we have 
\begin{equation}\label{réduction modulo h}
\big\Vert Q(t) - \sum_{\nu=1}^l Q_{\nu}(t) \big\Vert_{\mathcal{L}(L^2(\mathbb R^n)\otimes \mathbb C^m)} = \mathcal{O}\big( (1+|t|) \h \big),
\end{equation}
uniformly for $t\in \mathbb R$.
\end{itemize}
\end{proposition}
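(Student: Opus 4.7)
The argument has two independent steps, combined by the triangle inequality; (i) and (ii) differ only in the quality of the first step.

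\emph{Step 1 (block-diagonal replacement of $Q^w$).} We first estimate $Q^w - \sum_{\nu=1}^l P_\nu^w Q^w P_\nu^w$ in operator norm. In case (i), $Q\in \mathcal{Q}(1)$ means precisely that every $S(1)$-seminorm of $Q - \sum_\nu P_\nu \# Q \# P_\nu$ is $\mathcal{O}(\h^\infty)$; since $(P_\nu \# Q \# P_\nu)^w = P_\nu^w Q^w P_\nu^w$, the Calder\'on-Vaillancourt theorem (Theorem \ref{Cal}) yields an $\mathcal{O}(\h^\infty)$ operator-norm bound. In case (ii), using $Q_0 = \sum_\nu P_{\nu,0}\tilde Q P_{\nu,0}$ together with the eigenprojector orthogonality $P_{\mu,0} P_{\nu,0} = \delta_{\mu\nu} P_{\nu,0}$ gives $Q_0 - \sum_\nu P_{\nu,0} Q_0 P_{\nu,0} = 0$; the principal symbol of $Q - \sum_\nu P_\nu \# Q \# P_\nu$ thus vanishes, so every $S(1)$-seminorm of this symbol is $\mathcal{O}(\h)$, and Calder\'on-Vaillancourt gives the $\mathcal{O}(\h)$ bound.

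\emph{Step 2 (almost invariance of the blocks via a Duhamel interpolation).} Set $K_\nu := P_\nu^w H^w P_\nu^w$. The exact identity $(P_\nu^w)^2 = P_\nu^w$ yields $K_\nu P_\nu^w = P_\nu^w K_\nu = K_\nu$, so $P_\nu^w$ commutes exactly with $e^{\pm isK_\nu/\h}$. For fixed $t\in \mathbb R$ and $\nu$, introduce the interpolant
\begin{equation*}
g_\nu(s) := e^{i(t-s)H^w/\h}\, e^{isK_\nu/\h}\,(P_\nu^w Q^w P_\nu^w)\, e^{-isK_\nu/\h}\, e^{-i(t-s)H^w/\h},
\end{equation*}
parametrized by $s$ running from $0$ to $t$, so that $g_\nu(0) = e^{itH^w/\h}(P_\nu^w Q^w P_\nu^w) e^{-itH^w/\h}$ and $g_\nu(t) = Q_\nu(t)$. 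A direct differentiation, recombining the $H^w$ and $K_\nu$ terms, gives
\begin{equation*}
g_\nu'(s) = \frac{i}{\h}\, e^{i(t-s)H^w/\h}\, \bigl[W_\nu(s),\, H^w - K_\nu \bigr]\, e^{-i(t-s)H^w/\h},
\end{equation*}
where $W_\nu(s) := e^{isK_\nu/\h}(P_\nu^w Q^w P_\nu^w) e^{-isK_\nu/\h}$, and the commutation of $P_\nu^w$ with $e^{\pm isK_\nu/\h}$ gives the crucial sandwiching $W_\nu(s) = P_\nu^w W_\nu(s) P_\nu^w$.

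\emph{Step 3 ($\mathcal{O}(\h^\infty)$ bound on the commutator and conclusion).} The decisive observation is that, although $H^w$ is unbounded, $[W_\nu(s), H^w - K_\nu]$ extends to a bounded operator of norm $\mathcal{O}(\h^\infty)$. Using $W_\nu(s) = P_\nu^w W_\nu(s) P_\nu^w$, it suffices to estimate $P_\nu^w(H^w - K_\nu)$ and $(H^w - K_\nu) P_\nu^w$; elementary algebra from $K_\nu = P_\nu^w H^w P_\nu^w$ and $[H^w, P_\nu^w] = H^w P_\nu^w - P_\nu^w H^w$ yields the identities
\begin{equation*}
P_\nu^w(H^w - K_\nu) = -[H^w, P_\nu^w](I - P_\nu^w), \qquad (H^w - K_\nu) P_\nu^w = (I - P_\nu^w)[H^w, P_\nu^w],
\end{equation*}
and by \eqref{tildediag1} both right-hand sides are bounded of norm $\mathcal{O}(\h^\infty)$. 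Hence $\|g_\nu'(s)\|_{\mathcal{L}(L^2(\mathbb R^n)\otimes \mathbb C^m)} = \mathcal{O}(\h^\infty)$ uniformly in $s$, and integration yields
\begin{equation*}
\bigl\|e^{itH^w/\h}(P_\nu^w Q^w P_\nu^w) e^{-itH^w/\h} - Q_\nu(t)\bigr\|_{\mathcal{L}(L^2(\mathbb R^n)\otimes \mathbb C^m)} = \mathcal{O}(|t|\h^\infty).
\end{equation*}
Summing over $\nu$ and combining with Step 1 gives \eqref{réduction} in case (i) and \eqref{réduction modulo h} in case (ii). The main obstacle is the unboundedness of $H^w$: the naive bound $\|[W_\nu, H^w - K_\nu]\| \leq 2\|W_\nu\|\,\|H^w - K_\nu\|$ is meaningless, and one must use the exact relations $(P_\nu^w)^2 = P_\nu^w$ and $K_\nu P_\nu^w = K_\nu$ (not merely modulo $\mathcal{O}(\h^\infty)$) to rewrite every off-diagonal occurrence of $H^w$ in terms of the genuinely bounded small operator $[H^w, P_\nu^w]$.
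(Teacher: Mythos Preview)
Your proof is correct and follows essentially the same strategy as the paper. The paper packages Steps~2--3 as a separate lemma (Lemma~\ref{jus}) establishing the propagator identity $e^{itH^w/\h}P_\nu^w = e^{itK_\nu/\h}P_\nu^w + \mathcal{O}(|t|\h^\infty)$ via Duhamel, then inserts $\sum_\nu P_\nu^w$ on both sides of $Q(t)$ and controls the resulting cross terms $P_\mu^w Q^w P_\nu^w$; your version reverses the order (first block-diagonalize $Q^w$, then interpolate each block's evolution) and carries out the Duhamel estimate directly on the observable rather than on the propagator, but the analytic core---rewriting $(H^w-K_\nu)P_\nu^w$ as $(I-P_\nu^w)[H^w,P_\nu^w]$ to exploit \eqref{tildediag1}---is identical.
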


The following lemma is the first step in the proof of the above proposition.

\begin{lemma}\label{jus}
For all $1\leq \nu \leq l$, we have 
\begin{equation}\label{le}
e^{\frac{it}{\h}H^w}P_{\nu}^w = e^{\frac{it}{\h} P_{\nu}^wH^wP_{\nu}^w}P_{\nu}^w+\mathcal{O}(|t| \h^{\infty}),
\end{equation}
uniformly for $t\in \mathbb R$.
\end{lemma}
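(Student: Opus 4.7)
The natural approach is a Duhamel-type interpolation between the two semigroups. Assume $t \ge 0$ (the case $t \le 0$ is symmetric). Define
\[
V(s) := e^{\frac{i(t-s)}{\hbar} H^w} \, e^{\frac{is}{\hbar} P_\nu^w H^w P_\nu^w} \, P_\nu^w, \qquad s \in [0,t],
\]
so that $V(0) = e^{\frac{it}{\hbar} H^w} P_\nu^w$ and $V(t) = e^{\frac{it}{\hbar} P_\nu^w H^w P_\nu^w} P_\nu^w$, i.e. $V(0) - V(t)$ is exactly the quantity to be bounded. Differentiating, and using that $H^w$ commutes with its own exponential, I get
\[
V'(s) = \frac{i}{\hbar}\, e^{\frac{i(t-s)}{\hbar} H^w} \bigl( P_\nu^w H^w P_\nu^w - H^w \bigr)\, e^{\frac{is}{\hbar} P_\nu^w H^w P_\nu^w} \, P_\nu^w.
\]

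The key algebraic step is that $P_\nu^w$ is an \emph{exact} orthogonal projection by \eqref{tildediag}, so it commutes with $P_\nu^w H^w P_\nu^w$ and hence with $e^{\frac{is}{\hbar} P_\nu^w H^w P_\nu^w}$. Moreover, exploiting $(P_\nu^w)^2 = P_\nu^w$ one has the exact identity
\[
\bigl(P_\nu^w H^w P_\nu^w - H^w\bigr) P_\nu^w \;=\; P_\nu^w H^w P_\nu^w - H^w P_\nu^w \;=\; -[H^w, P_\nu^w]\, P_\nu^w,
\]
which by \eqref{tildediag1} and the boundedness of $P_\nu^w$ is $\mathcal{O}(\hbar^\infty)$ in $\mathcal{L}(L^2(\mathbb{R}^n) \otimes \mathbb{C}^m)$. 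Since $e^{\frac{i(t-s)}{\hbar} H^w}$ is unitary and $e^{\frac{is}{\hbar} P_\nu^w H^w P_\nu^w}$ is unitary as well (because $P_\nu^w H^w P_\nu^w$ is self-adjoint), these flanking factors preserve operator norms. Combining this with the prefactor $\hbar^{-1}$, which is harmless against an $\mathcal{O}(\hbar^\infty)$ bound, I conclude
\[
\|V'(s)\|_{\mathcal{L}(L^2(\mathbb{R}^n)\otimes\mathbb{C}^m)} = \mathcal{O}(\hbar^\infty), \qquad \text{uniformly in } s \in [0,t].
\]

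Finally, writing $V(0) - V(t) = -\int_0^t V'(s)\,ds$ and applying the triangle inequality yields
\[
\bigl\| e^{\frac{it}{\hbar}H^w} P_\nu^w - e^{\frac{it}{\hbar} P_\nu^w H^w P_\nu^w} P_\nu^w \bigr\|_{\mathcal{L}(L^2(\mathbb{R}^n)\otimes\mathbb{C}^m)} \le \int_0^{t} \|V'(s)\|\, ds = \mathcal{O}(|t|\,\hbar^\infty),
\]
which is \eqref{le}. I do not anticipate a substantive technical obstacle: the argument is essentially algebraic once one has the identity $(P_\nu^w H^w P_\nu^w - H^w) P_\nu^w = -[H^w, P_\nu^w] P_\nu^w$. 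What makes the identity usable is that $P_\nu^w$ is a \emph{genuine} projection, not only an approximate one—this is precisely the reason the Riesz projections $\mathcal{P}_\nu$ were substituted for the $\tilde{P}_\nu^w$ in the paragraph preceding this lemma, via \eqref{tildediag} and \eqref{tildediag1}.
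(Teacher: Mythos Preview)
Your proof is correct and follows essentially the same Duhamel strategy as the paper: both hinge on the identity $(P_\nu^w H^w P_\nu^w - H^w)P_\nu^w = -[H^w,P_\nu^w]P_\nu^w = \mathcal{O}(\hbar^\infty)$, which in turn relies on $P_\nu^w$ being an exact projection and on \eqref{tildediag1}. Your direct interpolation $V(s)$ is a slightly cleaner packaging than the paper's version, which instead writes down two evolution equations (one exact, one with an $\mathcal{O}(\hbar^\infty)$ source term) and then applies Duhamel; the underlying mechanism is identical.
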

\begin{proof} Fix $\nu \in \{1,...,l\}$ and set 
$$
U(t):=e^{\frac{it}{\h}H^w}P_{\nu}^w, \quad V(t):= e^{\frac{it}{\h} P_{\nu}^wH^wP_{\nu}^w}P_{\nu}^w, \quad t\in \mathbb R.
$$
Obviously, $U(t)$ satisfies
\begin{equation}\label{equ1}
\left\{   
\begin{array}{rcl}
(\h D_t-H^w ) U(t) &=& 0 \\ 
U(0) &=& P_{\nu}^w.
\end{array} 
\right.
\end{equation}
Here we use the standard notation $D_t:=\frac{1}{i} \partial_t$. Let us prove that $V(t)$ satisfies 
\begin{equation}\label{equ2}
\left\{   
\begin{array}{rcl}
(\h D_t-H^w ) V(t)&=& I(t) \\
V(0)&=& P_{\nu}^w,
\end{array} 
\right.
\end{equation}
with $\Vert I(t) \Vert_{\mathcal{L}(L^2(\mathbb R^n)\otimes \mathbb C^m)} = \mathcal{O}(\h^{\infty})$, uniformly for $t\in \mathbb R$. Put
$$
R(t):=V(t)-P_{\nu}^w e^{\frac{it}{\h}P_{\nu}^w H^w P_{\nu}^w}.
$$
Using \eqref{tildediag}, we get  
\begin{eqnarray}
\h D_t R(t) &=& e^{\frac{it}{\h}P_{\nu}^w H^w P_{\nu}^w} P_{\nu}^w H^w (P_{\nu}^w)^2 - (P_{\nu}^w)^2 H^w P_{\nu}^w e^{\frac{it}{\h} P_{\nu}^w H^w P_{\nu}^w} \nonumber \\
&=& e^{\frac{it}{\h}P_{\nu}^w H^w P_{\nu}^w} P_{\nu}^w H^w P_{\nu}^w - P_{\nu}^w H^w P_{\nu}^w e^{\frac{it}{\h} P_{\nu}^w H^w P_{\nu}^w} \nonumber \\
&=& 0,
\end{eqnarray}
which together with $R(0)=0$ yields 
\begin{equation}
R(t)=0, \quad \forall t\in \mathbb R.
\end{equation}
Now, a simple computation gives 
\begin{eqnarray*}
(\h D_t - H^w) V(t) &=& (P_{\nu}^w H^w P_{\nu}^w - H^w)e^{\frac{it}{\h}P_{\nu}^w H^w P_{\nu}^w}P_{\nu}^w\\
&=& (P_{\nu}^w H^w P_{\nu}^w - H^w) P_{\nu}^w e^{\frac{it}{\h}P_{\nu}^w H^w P_{\nu}^w} +(P_{\nu}^w H^w P_{\nu}^w - H^w) R(t)\\
&=& (P_{\nu}^w H^w P_{\nu}^w - H^w) P_{\nu}^w e^{\frac{it}{\h}P_{\nu}^w H^w P_{\nu}^w} \\
&=:& I(t).
\end{eqnarray*}
According to \eqref{tildediag}, we have 
\begin{eqnarray}
I(t) = \bigg( P_{\nu}^w H^w (P_{\nu}^w)^2 - H^w P_{\nu}^w \bigg) e^{\frac{it}{\h}P_{\nu}^w H^w P_{\nu}^w} = \bigg( P_{\nu}^w H^w P_{\nu}^w - H^w P_{\nu}^w\bigg) e^{\frac{it}{\h}P_{\nu}^w H^w P_{\nu}^w}. \label{ma9}
\end{eqnarray}
From \eqref{tildediag1}, we have $ P_{\nu}^w H^w = H^w P_{\nu}^w + \mathcal{O}(\h^{\infty})$ which together with the fact that $\Vert P_{\nu}^w \Vert = \mathcal{O}(1)$ yields 
\begin{equation}\label{à la pl}
P_{\nu}^w H^w P_{\nu}^w =  H^w (P_{\nu}^w)^2 + \mathcal{O}(\h^{\infty}) =  H^w P_{\nu}^w + \mathcal{O}(\h^{\infty}),
\end{equation}
where in the last step, we used \eqref{tildediag} again. Putting together \eqref{à la pl} and \eqref{ma9}, we obtain
\begin{equation}
\Vert I(t) \Vert_{\mathcal{L}(L^2(\mathbb R^n)\otimes \mathbb C^m)} = \mathcal{O}(\h^{\infty}), \quad \text{uniformly for}\; t\in \mathbb R.
\end{equation}

Now, according to Duhamel's principle, we have
$$
V(t)-U(t)=\frac{1}{\h} \int_0^t U(t-s)\mathcal{O}(\h^{\infty})ds,
$$ 
which yields
\begin{equation}\label{ut}
U(t)-V(t)=\mathcal{O}(|t|\h^{\infty}), \quad \text{uniformly for}\; t\in \mathbb R.
\end{equation}
This ends the proof of the lemma.
\end{proof}
Turn now to the proof of Proposition \ref{proposition réduction}.

\noindent
\textbf{Proof of Proposition \ref{proposition réduction} :}

\noindent
By conjugating $Q(t)$ with $\sum_{\nu=1}^l P_{\nu}^w(x,\h D_x;\h)= \text{id}_{L^2(\mathbb R^n)\otimes \mathbb C^m}+ \mathcal{O}(\h^{\infty})$ and using the above lemma, we get
\begin{eqnarray}
Q(t)&=&\sum_{\mu,\nu=1}^l e^{\frac{it}{\h}P_{\mu}^wH^w P_{\mu}^w} P_{\mu}^w Q^w P_{\nu}^w e^{-\frac{it}{\h}P_{\nu}^wH^w P_{\nu}^w} +\mathcal{O}\big( (1+|t|) \h^{\infty} \big) \nonumber\\
&=& \sum_{\nu=1}^l Q_{\nu}(t) + \sum_{\mu\neq\nu=1}^l e^{\frac{it}{\h}P_{\mu}^wH^w P_{\mu}^w} P_{\mu}^w Q^w P_{\nu}^w e^{-\frac{it}{\h}P_{\nu}^wH^w P_{\nu}^w} +\mathcal{O}\big( (1+|t|)\h^{\infty} \big), \label{r}
\end{eqnarray}
uniformly for $t\in \mathbb R$. 

Passing from symbols to operators, the assumption $Q\in \mathcal{Q}(1)$ implies 
$$
Q^w=\sum_{\nu=1}^l P_{\nu}^w Q^w P_{\nu}^w + \mathcal{O}(\h^{\infty}).
$$
Therefore, using that $P_{\mu}^w P_{\nu}^w = \mathcal{O}(\h^{\infty})$ for $\mu \neq \nu$ (which follows from \eqref{différents indices} and \eqref{rtuip}), we deduce  
\begin{equation}\label{en terme d'opérateurs}
P_{\mu}^w Q^w P_{\nu}^w = \mathcal{O}(\h^{\infty}), \quad \forall 1\leq \mu \neq \nu \leq l.
\end{equation}
Consequently, the norm $\mathcal{L}(L^2(\mathbb R^n)\otimes \mathbb C^m)$ of the second term in the right hand side of \eqref{r} is equal to $\mathcal{O}(\h^{\infty})$ uniformly for $t\in \mathbb R$. Thus \eqref{réduction} holds. 

Now, assume that $Q_0(x,\xi) = \sum_{\nu=1}^l P_{\nu,0}(x,\xi) \tilde{Q}(x,\xi) P_{\nu,0}(x,\xi)$, for some arbitrary $\tilde{Q}\in S(1)$. This implies that 
$$
[Q_0(x,\xi),P_{\nu,0}(x,\xi)] =0 , \quad \forall 1\leq \nu \leq l, \forall (x,\xi)\in \mathbb R^{2n}.
$$
Combining this with the fact that $Q^w$ and $P_{\nu}^w$ are bounded in $L^2(\mathbb R^n)\otimes \mathbb C^m$, we get 
$$
[Q^w,P_{\nu}^w] = \mathcal{O}(\h) , \quad \forall 1\leq \nu \leq l,
$$
which by using that $P_{\mu}^w P_{\nu}^w = \mathcal{O}(\h^{\infty})$ for $\mu \neq \nu$ again, implies $P_{\nu}^w Q^w P_{\mu}^w = \mathcal{O}(\h)$, for all $1\leq \mu \neq \nu \leq l$. Thus \eqref{réduction modulo h} holds immediately from \eqref{r}.
\begin{flushright}
$\square$
\end{flushright}

\begin{remark}
According to estimate \eqref{réduction}, the study of $Q(t)$ is reduced modulo $\mathcal{O}(\h^{\infty})$ to that of the blocks $Q_{\nu}(t)$ defined by \eqref{les blocks Qnu}. The main property of this reduction lies in the fact that it is preserved up to times of order $\h^{-\infty}$ (i.e. of order $\h^{-k}$, for all $k\in \mathbb N$) which in particular cover Ehrenfest type times. Thus, the problem of the construction of an asymptotic expansion in powers of $\h$ for $Q(t)$ is reduced to the construction of an asymptotic expansion for each block $Q_{\nu}(t)$ defined by \eqref{les blocks Qnu}. This will be the object of the following paragraph.  
\end{remark}

\subsection{Formal asymptotic expansion for $Q_{\nu}(t)$}\label{formal sc nu}
From now on $\nu$ will be fixed in $\{1,...,l\}$. We introduce the following notations for the symbols of the operators $P_{\nu}^w H^w P_{\nu}^w$ and $P_{\nu}^w Q^w P_{\nu}^w$ respectively,
\begin{equation}\label{le symbole Hnu}
H_{\nu}:=P_{\nu}\#H \#P_{\nu} \sim \sum_{j\geq 0} \h^j H_{\nu,j}
\end{equation}
\begin{equation}\label{le symbole Qnu}
Q_{\nu}:= P_{\nu} \# Q \# P_{\nu}\sim \sum_{j\geq 0} \h^j Q_{\nu,j}.
\end{equation}

Recall that by definition of $Q_{\nu}(t)$ (see \eqref{les blocks Qnu}) and the fact that $\big(P_{\nu}^w\big)^j = P_{\nu}^w$, $\forall j\in \mathbb N$ (according to \eqref{tildediag}), we have
$$
(P_{\nu}^w \big)^j Q_{\nu}(t) \big(P_{\nu}^w \big)^j = Q_{\nu}(t), \quad \forall j\in \mathbb N, \forall t\in \mathbb R.
$$ 
In the following this property will play an important role in the construction of an asymptotic expansion in powers of $\h$ for $Q_{\nu}(t)$.

The starting point is the following Heisenberg problem 
\begin{equation}\label{Heisenberg nu}
\left\{   
\begin{array}{rcl}
\displaystyle{\frac{d}{dt}} Q_{\nu}(t)&=& \frac{i}{\h}[H_{\nu}^w,Q_{\nu}(t)], \quad (t \in \mathbb R)\\ 
Q_{\nu}(t)_{|t=0}&=&Q_{\nu}^w(x,\h D_x;\h),
\end{array} 
\right.
\end{equation}
which we rewrite at the level of symbols as 
\begin{equation}\label{hix}
\left\{   
\begin{array}{rcl}
\displaystyle{\frac{d}{dt}} q_{\nu}(t)&=& \{H_{\nu},q_{\nu}(t)\}^*, \quad (t \in \mathbb R)\\ 
q_{\nu}(t)_{|t=0}&=&Q_{\nu}.
\end{array} 
\right.
\end{equation}

As in section \ref{Sc}, considering $q_{\nu}(t)$ as a formal power series of $\h$ of the form $q_{\nu}(t)\sim \sum_{j\geq 0}\h^j q_{\nu,j}(t)$ and then equating equal powers of $\h$ in both sides of \eqref{hix}, we derive the following Cauchy problems  

\begin{equation} (\mathcal{C}_{\nu,j})
\left\{   
\begin{array}{rcl}
\displaystyle{\frac{d}{dt}} q_{\nu,j}(t)&=& \{H_{\nu}, q_{\nu}(t)\}^*_j \\ 
q_{\nu,j}(t)_{|t=0}&=&Q_{\nu,j},
\end{array} 
\right.
\end{equation}
where we recall that $\{H_{\nu}, q_{\nu}(t)\}^*_j = i \big([ H_{\nu}, q_{\nu}(t)]_{\#}\big)_{j+1}$ denotes the coefficient of $\h^j$ in the asymptotic expansion of the Moyal bracket $\{H_{\nu},q_{\nu}(t)\}^*$.

Our objective consists in looking for a solution of \eqref{hix} of the form 
\begin{equation}\label{forme de la solution}
q_{\nu}(t) \sim P_{\nu}\# \sum_{k\geq 0} \h^k \tilde{q}_{\nu,k}(t) \#P_{\nu}.
\end{equation}


More explicitly, using this general form of the solution, we shall derive recursive problems for the $\tilde{q}_{\nu,j}(t)$. Once these problems are derived and solved, the solution $q_{\nu,j}(t)$ of $(\mathcal{C}_{\nu,j})$ can then be computed using the composition formula \eqref{développement asymptotique} from the general formula 
\begin{equation}\label{general formula for the solution}
q_{\nu,j}(t,x,\xi) = \bigg( P_{\nu}\#  \sum_{k\geq 0} \h^k \tilde{q}_{\nu,k}(t) \#P_{\nu}  \bigg)_j(x,\xi)= \bigg( P_{\nu}\#  \sum_{k=0}^j \h^k \tilde{q}_{\nu,k}(t) \#P_{\nu}  \bigg)_j(x,\xi) , \quad \forall j\geq 0.
\end{equation}
In particular, 
\begin{equation}\label{forme du symbole principal}
q_{\nu,0}(t,x,\xi) = P_{\nu,0}(x,\xi) \tilde{q}_{\nu,0}(t,x,\xi) P_{\nu,0}(x,\xi).
\end{equation}


Let us start by fixing the initial conditions $\tilde{q}_{\nu,j}(t)_{|t=0}$.
\begin{lemma}\label{Lemma initial condition}
There exists a sequence of symbols $(\tilde{Q}_{\nu,k})_{k\geq 0}$ in $S(1)$ such that 
$$
Q_{\nu} \sim P_{\nu}\# \sum_{ k\geq 0} \h^k \tilde{Q}_{\nu,k} \# P_{\nu}
$$
and 
\begin{equation}\label{kix mak}
P_{\nu,0}\tilde{Q}_{\nu,k} P_{\nu,0} = \tilde{Q}_{\nu,k}, \quad \forall k\geq 0.
\end{equation}
In particular, $\tilde{Q}_{\nu,0} = Q_{\nu,0} = P_{\nu,0} Q_0 P_{\nu,0}$.
\end{lemma}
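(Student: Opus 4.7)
The plan is to construct the coefficients $\tilde{Q}_{\nu,k}$ inductively in $k$, adjusting at each order so that the residual stays in the range of the map $X\mapsto P_{\nu,0} X P_{\nu,0}$. The structural ingredient I will rely on is the two-sided near-invariance
$$
Q_{\nu} \sim P_{\nu}\# Q_{\nu} \sim Q_{\nu}\# P_{\nu} \quad \text{in } S(1),
$$
which follows directly from $Q_{\nu}=P_{\nu}\# Q\# P_{\nu}$ combined with $P_{\nu}\# P_{\nu}\sim P_{\nu}$, the symbolic counterpart of \eqref{tildediag} (equivalently, of \eqref{diag} up to $S^{-\infty}$).

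For the base case I would set $\tilde{Q}_{\nu,0}:=Q_{\nu,0}$. The composition formula \eqref{deduc2} gives $Q_{\nu,0}=P_{\nu,0} Q_0 P_{\nu,0}$, and the projector identity $P_{\nu,0}^2=P_{\nu,0}$ yields $P_{\nu,0}\tilde{Q}_{\nu,0} P_{\nu,0}=\tilde{Q}_{\nu,0}$ at once, so both the initial value and the sandwich property hold.

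For the inductive step, suppose $\tilde{Q}_{\nu,0},\ldots,\tilde{Q}_{\nu,N}\in S(1)$ have been produced with $P_{\nu,0}\tilde{Q}_{\nu,k}P_{\nu,0}=\tilde{Q}_{\nu,k}$ and
$$
Q_{\nu}\ -\ P_{\nu}\#\Bigl(\sum_{k=0}^{N}\h^{k}\tilde{Q}_{\nu,k}\Bigr)\# P_{\nu} \ =\ \h^{N+1} R_{N}, \qquad R_{N}\in S(1).
$$
I would then define $\tilde{Q}_{\nu,N+1}:=R_{N,0}$, the principal symbol of $R_{N}$. Since $(P_{\nu}\#\tilde{Q}_{\nu,N+1}\# P_{\nu})_{0}=P_{\nu,0} R_{N,0} P_{\nu,0}$, cancelling the $\h^{N+1}$ error modulo $\mathcal{O}(\h^{N+2})$ reduces to checking the sandwich identity $P_{\nu,0} R_{N,0} P_{\nu,0}=R_{N,0}$. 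This is the only real content of the proof, and it is obtained by inserting the displayed equality into $Q_{\nu}\sim P_{\nu}\# Q_{\nu}$: the leading piece $P_{\nu}\# S_{N}\# P_{\nu}$ is absorbed using $P_{\nu}\# P_{\nu}\sim P_{\nu}$, leaving $R_{N}\sim P_{\nu}\# R_{N}$, which at principal order reads $R_{N,0}=P_{\nu,0} R_{N,0}$. The symmetric argument applied to $Q_{\nu}\sim Q_{\nu}\# P_{\nu}$ gives $R_{N,0}=R_{N,0} P_{\nu,0}$, and the two together yield the desired identity.

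Once the sequence $(\tilde{Q}_{\nu,k})_{k\geq 0}\subset S(1)$ is constructed, Borel summation in $S(1)$ produces a semiclassical symbol $\tilde{Q}_{\nu}\sim\sum_{k\geq 0}\h^{k}\tilde{Q}_{\nu,k}$, and the inductive identities aggregate to $Q_{\nu}\sim P_{\nu}\#\tilde{Q}_{\nu}\# P_{\nu}$ in $S(1)$, giving exactly the asymptotic form claimed. The principal symbol identity $\tilde{Q}_{\nu,0}=Q_{\nu,0}=P_{\nu,0} Q_{0} P_{\nu,0}$ was built in by construction. The main obstacle is therefore the propagation of the sandwich property through the induction, and the near-invariance identity above is what makes it go through cleanly.
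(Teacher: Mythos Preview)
Your proof is correct and follows essentially the same inductive scheme as the paper's: both rely on the identity $P_{\nu}\# Q_{\nu}\# P_{\nu}=Q_{\nu}$ (a consequence of $P_{\nu}\# P_{\nu}=P_{\nu}$) to ensure that the principal symbol of each successive remainder already satisfies $P_{\nu,0}(\cdot)P_{\nu,0}=(\cdot)$. The only cosmetic difference is that the paper builds the sandwich property into the \emph{definition} of $\tilde{Q}_{\nu,k}$ by setting $\tilde{Q}_{\nu,k}:=P_{\nu,0}(\text{remainder})_kP_{\nu,0}$ and keeps the remainder explicitly of the form $P_{\nu}\#(\cdot)\# P_{\nu}$ throughout, whereas you take $\tilde{Q}_{\nu,N+1}:=R_{N,0}$ and then \emph{prove} the sandwich identity from the near-invariance; the two are equivalent.
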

\begin{proof}
Using the fact that $P_{\nu}\#P_{\nu} = P_{\nu}$ according to \eqref{tildediag}, we have 
\begin{eqnarray*}
Q_{\nu}&=& P_{\nu}\# Q \# P_{\nu}\\ 
&=& P_{\nu}\#P_{\nu}\# Q\# P_{\nu} \#P_{\nu} \\
&\sim& P_{\nu}\# (P_{\nu,0}Q_{0}P_{\nu,0}) \# P_{\nu} +  P_{\nu}\#\big(  \sum_{j\geq 1}\h^{j} Q_{\nu,j} \big) \# P_{\nu}.
\end{eqnarray*}
Put 
$$
\tilde{Q}_{\nu,0}:= P_{\nu,0} Q_{0} P_{\nu,0} = Q_{\nu,0} \quad \text{and} \quad  R_{\nu,0}:= P_{\nu}\#\big(  \sum_{j\geq 1}\h^{j} Q_{\nu,j} \big) \# P_{\nu} \sim \sum_{j\geq 1}\h^j (R_{\nu,0})_j.
$$
We have 
\begin{eqnarray*}
R_{\nu,0} &= & P_{\nu}\# P_{\nu}\#R_{\nu,0}\#P_{\nu} \# P_{\nu}\\
&\sim & \h\; P_{\nu}\# P_{\nu}\# (R_{\nu,0})_1 \# P_{\nu}\# P_{\nu} + P_{\nu} \# \sum_{j\geq 2} \h^j (R_{\nu,0})_j \# P_{\nu} \\
&\sim & \h \; P_{\nu}\# (P_{\nu,0} (R_{\nu,0})_1 P_{\nu,0}) \# P_{\nu} + R_{\nu,1} \\
\end{eqnarray*}
where $R_{\nu,1} := P_{\nu}\# \sum_{j\geq 2}\h^j \bigg((P_{\nu}\# (R_{\nu,0})_1 \#P_{\nu})_{j-1} + (R_{\nu,0})_j\bigg) \# P_{\nu}$. We define 
$$
\tilde{Q}_{\nu,1}:= P_{\nu,0} (R_{\nu,0})_1 P_{\nu,0} .
$$ 
One can iterate this procedure using at each step the fact that $ R_{\nu,j} = P_{\nu} \# P_{\nu}\# R_{\nu,j} \#P_{\nu}\# P_{\nu}$ to construct the symbols $\tilde{Q}_{\nu,k}$ satisfying the property \eqref{kix mak}. The constructed symbols $\tilde{Q}_{\nu,k}$ are clearly in $S(1)$ since $P_{\nu}, Q\in S_{\text{sc}}(1)$.

\end{proof}
In view of \eqref{forme de la solution} and the above lemma, it is thus natural to impose the following initial conditions for the $\tilde{q}_{\nu,j}(t)$
\begin{equation}\label{conditions initiales}
\tilde{q}_{\nu,j}(t)_{|t=0} = \tilde{Q}_{\nu,j}, \quad \forall j\geq 0.
\end{equation}



\noindent
Now, to derive the equations on the $\tilde{q}_{\nu,j}(t)$ arising from the Cauchy problems $(\mathcal{C}_{\nu,j})_{j\geq 0}$, we express $q_{\nu,j}(t)$ and $\{H_{\nu},q_{\nu}(t)\}^{*}_j$ with respect to $\tilde{q}_{\nu,j}(t)$. For $j\geq 0$, we define
\begin{equation}\label{Anuj-1}
A_{\nu,j-1}(t) := P_{\nu} \# \sum_{k=0}^{j-1} \h^k \tilde{q}_{\nu,k}(t) \# P_{\nu}
\end{equation}
with the convention $A_{\nu,-1}(t)=0$. From \eqref{forme de la solution}, we clearly have 
\begin{equation}\label{transf1}
q_{\nu,j}(t) = P_{\nu,0} \tilde{q}_{\nu,j}(t) P_{\nu,0} + \big( A_{\nu,j-1}(t) \big)_j,
\end{equation}
where $\big( A_{\nu,j-1}(t) \big)_j$ denotes the coefficient of $\h^j$ in the asymptotic expansion of $A_{\nu,j-1}(t)$. On the other hand, we have
\begin{equation}\label{transf2}
\bigg([H_{\nu},q_{\nu}(t)]_{\#}\bigg)_{j+1} = [H_{\nu,0}, P_{\nu,0} \tilde{q}_{\nu,j+1}(t) P_{\nu,0}] + \bigg([H_{\nu}, P_{\nu}\# \tilde{q}_{\nu,j}(t) \# P_{\nu}]_{\#}\bigg)_1 + \bigg( [H_{\nu}, A_{\nu,j-1}(t)]_{\#}  \bigg)_{j+1}.
\end{equation}
The first term in the right hand side of the above equation vanishes since $H_{\nu,0}= \lambda_{\nu} P_{\nu,0}$. Then, putting together \eqref{transf1} and \eqref{transf2}, we deduce the equation on the symbol $\tilde{q}_{\nu,j}(t)$ arising from $(\mathcal{C}_{\nu,j})$ which reads
\begin{equation}\label{equation on tilde j}
\frac{d}{dt}  P_{\nu,0} \tilde{q}_{\nu,j}(t) P_{\nu,0} =  i \bigg([H_{\nu}, P_{\nu}\# \tilde{q}_{\nu,j}(t) \# P_{\nu}]_{\#}\bigg)_1 + K_{\nu,j-1}(t),
\end{equation}
where 
\begin{equation}\label{Knuj}
K_{\nu,j-1}(t) := i \bigg( [H_{\nu}, A_{\nu,j-1}(t)]_{\#}  \bigg)_{j+1} - \frac{d}{dt} \big( A_{\nu,j-1}(t) \big)_j.
\end{equation}
\noindent

Taking into account the initial conditions \eqref{conditions initiales}, we get the following Cauchy problems for $\tilde{q}_{\nu,j}(t)$
\begin{equation} \label{pbm Cauchy tildej}
\left\{   
\begin{array}{rcl}
\displaystyle{\frac{d}{dt} }   P_{\nu,0} \tilde{q}_{\nu,j}(t) P_{\nu,0} &=&  i \bigg([H_{\nu}, P_{\nu}\# \tilde{q}_{\nu,j}(t) \# P_{\nu}]_{\#}\bigg)_1 + K_{\nu,j-1}(t) \\ 
\tilde{q}_{\nu,j}(t)_{|t=0}&=& \tilde{Q}_{\nu,j}. 
\end{array} 
\right.
\end{equation}

Notice that $K_{\nu,j-1}(t)$ depends only on the symbols $\tilde{q}_{\nu,k}(t)$ with $0\leq k \leq j-1$. ($K_{\nu,-1}(t)=0$). 



\begin{proposition}\label{main proposition construction}
Let $j\in \mathbb N$. The Cauchy problem \eqref{pbm Cauchy tildej} is equivalent to the following one 
\begin{equation}\label{main construction transformation}
\left\{   
\begin{array}{rcl}
\displaystyle{\frac{d}{dt} }  P_{\nu,0} \tilde{q}_{\nu,j}(t) P_{\nu,0} &=&  \{\lambda_{\nu}, P_{\nu,0} \tilde{q}_{\nu,j}(t) P_{\nu,0}\} + i [\tilde{H}_{\nu,1}, P_{\nu,0} \tilde{q}_{\nu,j}(t) P_{\nu,0}] + K_{\nu,j-1}(t) \\ 
\tilde{q}_{\nu,j}(t)_{|t=0}&=& \tilde{Q}_{\nu,j},
\end{array} 
\right.
\end{equation}

where $\tilde{H}_{\nu,1}$ is the $(m\times m)$ hermitian-valued function defined by
\begin{equation}\label{equ79}
\tilde{H}_{\nu,1} := \frac{\lambda_{\nu}}{2i} P_{\nu,0} \{P_{\nu,0},P_{\nu,0}\} P_{\nu,0} -i \big[ P_{\nu,0}, \{\lambda_{\nu},P_{\nu,0}\} \big] +  P_{\nu,0} H_{\nu,1} P_{\nu,0}.
\end{equation}
\end{proposition}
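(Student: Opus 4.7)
The plan is to verify the symbolic identity
\begin{equation*}
i\big([H_\nu,\,P_\nu\#\tilde q_{\nu,j}(t)\#P_\nu]_\#\big)_1 = \{\lambda_\nu, B_0\} + i\big[\tilde H_{\nu,1},\,B_0\big],
\end{equation*}
where $B_0 := P_{\nu,0}\tilde q_{\nu,j}(t) P_{\nu,0}$ and $\tilde H_{\nu,1}$ is the hermitian matrix in \eqref{equ79}; once this is established, the equivalence of \eqref{pbm Cauchy tildej} and \eqref{main construction transformation} is immediate, since the two Cauchy problems share the same forcing term $K_{\nu,j-1}(t)$ and the same initial datum $\tilde Q_{\nu,j}$. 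The first step is to expand the $\h^1$-coefficient of the Moyal commutator using the standard Weyl composition rule: setting $p := P_{\nu,0}$, $\lambda := \lambda_\nu$, and using the crucial equality $H_{\nu,0} = pH_0p = \lambda p$, one obtains
\begin{equation*}
i\big([H_\nu, B]_\#\big)_1 = i\big[\lambda p,\,B_1\big] + i\big[H_{\nu,1},\,B_0\big] + \tfrac{1}{2}\big(\{\lambda p, B_0\} - \{B_0, \lambda p\}\big),
\end{equation*}
with $B := P_\nu\#\tilde q_{\nu,j}(t)\#P_\nu$, and where $B_1$ and $H_{\nu,1} = (P_\nu\#H\#P_\nu)_1$ are obtained by one further application of the composition formula.

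The second step exploits the scalarity of $\lambda$ together with the Leibniz rule for the matrix-valued Poisson bracket to write
\begin{equation*}
\{\lambda p, B_0\} - \{B_0, \lambda p\} = p\{\lambda, B_0\} + \{\lambda, B_0\}\,p + \lambda\big(\{p, B_0\} - \{B_0, p\}\big),
\end{equation*}
using the antisymmetry $\{\lambda, \cdot\} = -\{\cdot, \lambda\}$, which is valid for scalar $\lambda$. The projector identity $p^2 = p$ and its differential consequence $p(\partial_\alpha p)\,p = 0$, together with $B_0 = p B_0 p$, then allow me to rewrite the symmetric combination $\tfrac{1}{2}\big(p\{\lambda, B_0\} + \{\lambda, B_0\}\,p\big)$ as $\{\lambda, B_0\}$ modulo a commutator with $B_0$, which is ultimately absorbed into the $[\tilde H_{\nu,1}, B_0]$ piece.

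The third step is to collect the remaining contributions, namely $i[\lambda p, B_1]$, $i[H_{\nu,1}, B_0]$, the residual commutator from the previous rewriting, and the piece $\tfrac{\lambda}{2}(\{p, B_0\} - \{B_0, p\})$, and to show that their sum equals $i[\tilde H_{\nu,1}, B_0]$ with $\tilde H_{\nu,1}$ given by \eqref{equ79}. This requires expanding $H_{\nu,1}$ in terms of $H_1$, $P_{\nu,1}$, $\{p, H_0\}$, and $\{pH_0, p\}$ via the composition formula, and writing $B_1$ in terms of $P_{\nu,1}$, $\tilde q_{\nu,j}(t)$, and their Poisson brackets with $p$; after this one systematically squeezes every surviving expression between $p$'s using $p^2 = p$, $p(\partial p)\,p = 0$, and $H_0 p = p H_0 = \lambda p$. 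Under these reductions the contributions involving $P_{\nu,1}$ drop out, because they enter only through $[\lambda p, P_{\nu,1}]$ which vanishes after flanking by $p$.

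The main obstacle is the algebraic bookkeeping: a large number of Poisson-bracket and commutator terms must be regrouped carefully using the projector identities and the antisymmetry structure, and one must confirm that what remains is in fact a commutator with a hermitian matrix of the specific form \eqref{equ79}. Hermiticity of $\tilde H_{\nu,1}$ itself follows from $p^* = p$ and $H_1^* = H_1$, and is worth checking independently as a consistency test.
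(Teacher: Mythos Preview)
Your approach is correct in outline and will reach the same endpoint, but the paper's proof is organized differently and avoids much of the term-tracking you anticipate. The paper first proves the structural identity
\[
P_{\nu,0}\big([H_\nu,\,P_\nu\#\tilde q_{\nu,j}(t)\#P_\nu]_\#\big)_1 P_{\nu,0} \;=\; \big([H_\nu,\,P_\nu\#\tilde q_{\nu,j}(t)\#P_\nu]_\#\big)_1,
\]
which follows from $P_\nu\#H_\nu = H_\nu\#P_\nu = H_\nu$ at the symbolic level together with the vanishing of the principal symbol of the commutator. This one observation kills the $B_1$-dependence in a single stroke: since $H_{\nu,0}=\lambda_\nu P_{\nu,0}$, one has $P_{\nu,0}[H_{\nu,0},B_1]P_{\nu,0}=0$ automatically, and there is no need to expand $B_1$ or track $P_{\nu,1}$ contributions at all. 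After this reduction, the only surviving terms are the symmetrized Poisson bracket $\tfrac{1}{2i}P_{\nu,0}\big(\{\lambda_\nu P_{\nu,0},B_0\}-\{B_0,\lambda_\nu P_{\nu,0}\}\big)P_{\nu,0}$ and the commutator $[P_{\nu,0}H_{\nu,1}P_{\nu,0},B_0]$; the paper then invokes a ready-made algebraic identity (Lemma~\ref{Lemma Spohn}, taken from the appendix of \cite{Spo}) that converts the former into $\{\lambda_\nu,B_0\}$ plus a commutator of $B_0$ with the remaining two pieces of $\tilde H_{\nu,1}$.

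In short: your direct expansion with manual bookkeeping works, but the paper's two devices---flanking by $P_{\nu,0}$ at the outset, and the black-box Lemma~\ref{Lemma Spohn}---eliminate the need to ever write down $B_1$ or $P_{\nu,1}$ explicitly, which is where most of the algebraic complexity in your Step~3 would live.
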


To prove this proposition, we recall the following result from the appendix of \cite{Spo}.

\begin{lemma}\label{Lemma Spohn}
Let $W : \mathbb R^{2n} \rightarrow M_m(\mathbb C)$ be such that $[W, P_{\nu,0}] = 0$. We have 
\begin{align*}
\frac{1}{2} P_{\nu,0}\bigg( \{ \lambda_{\nu}P_{\nu,0}, W\} - \{ W , \lambda_{\nu}P_{\nu,0}\}   \bigg) P_{\nu,0} &= \big\{ \lambda_{\nu}, P_{\nu,0} W P_{\nu,0}  \big\} \\
& -  \bigg[ P_{\nu,0} W P_{\nu,0}, \frac{\lambda_{\nu}}{2} P_{\nu,0} \{P_{\nu,0},P_{\nu,0}\} P_{\nu,0} + \big[ P_{\nu,0}, \{\lambda_{\nu},P_{\nu,0}\} \big]   \bigg].
\end{align*}
\end{lemma}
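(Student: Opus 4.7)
The identity is a purely algebraic statement in $C^{\infty}(\mathbb{R}^{2n})\otimes M_m(\mathbb{C})$, so I plan a direct computation based on the (non-antisymmetric) Leibniz rule for the matrix-valued Poisson bracket combined with two structural identities. The first comes from the hypothesis $[W,P_{\nu,0}]=0$, which upon differentiation gives $[\partial W,P_{\nu,0}]=[\partial P_{\nu,0},W]$ for every $\partial\in\{\partial_{x_j},\partial_{\xi_j}\}$; the second comes from $P_{\nu,0}^2=P_{\nu,0}$, whose derivative $(\partial P_{\nu,0})P_{\nu,0}+P_{\nu,0}(\partial P_{\nu,0})=\partial P_{\nu,0}$ gives the classical projector identity $P_{\nu,0}(\partial P_{\nu,0})P_{\nu,0}=0$ (and equivalently $(I-P_{\nu,0})(\partial P_{\nu,0})(I-P_{\nu,0})=0$).

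\textbf{Step 1: simplify the LHS.} Since $\lambda_{\nu}$ is scalar, the Leibniz rule yields
\begin{equation*}
\{\lambda_{\nu}P_{\nu,0},W\}=P_{\nu,0}\{\lambda_{\nu},W\}+\lambda_{\nu}\{P_{\nu,0},W\},\qquad \{W,\lambda_{\nu}P_{\nu,0}\}=-\{\lambda_{\nu},W\}P_{\nu,0}+\lambda_{\nu}\{W,P_{\nu,0}\}.
\end{equation*}
Subtracting, sandwiching by $P_{\nu,0}$, and using $P_{\nu,0}^2=P_{\nu,0}$, the LHS becomes
\begin{equation*}
\text{LHS}=P_{\nu,0}\{\lambda_{\nu},W\}P_{\nu,0}+\frac{\lambda_{\nu}}{2}P_{\nu,0}\bigl(\{P_{\nu,0},W\}-\{W,P_{\nu,0}\}\bigr)P_{\nu,0}.
\end{equation*}

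\textbf{Step 2: expand the RHS and reduce.} The scalar Leibniz rule gives
\begin{equation*}
\{\lambda_{\nu},P_{\nu,0}WP_{\nu,0}\}=\{\lambda_{\nu},P_{\nu,0}\}WP_{\nu,0}+P_{\nu,0}\{\lambda_{\nu},W\}P_{\nu,0}+P_{\nu,0}W\{\lambda_{\nu},P_{\nu,0}\},
\end{equation*}
so the common $P_{\nu,0}\{\lambda_{\nu},W\}P_{\nu,0}$ piece cancels, and the lemma reduces to the identity
\begin{align*}
\frac{\lambda_{\nu}}{2}P_{\nu,0}\bigl(\{P_{\nu,0},W\}-\{W,P_{\nu,0}\}\bigr)P_{\nu,0}=&\;\{\lambda_{\nu},P_{\nu,0}\}WP_{\nu,0}+P_{\nu,0}W\{\lambda_{\nu},P_{\nu,0}\}\\
&-\Bigl[P_{\nu,0}WP_{\nu,0},\,\tfrac{\lambda_{\nu}}{2}P_{\nu,0}\{P_{\nu,0},P_{\nu,0}\}P_{\nu,0}\Bigr]-\Bigl[P_{\nu,0}WP_{\nu,0},\,[P_{\nu,0},\{\lambda_{\nu},P_{\nu,0}\}]\Bigr].
\end{align*}

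\textbf{Step 3: verify the reduced identity.} Expanding the Poisson brackets in terms of $\partial_{x_j},\partial_{\xi_j}$, I plan to pull every $W$ or $\partial W$ through $P_{\nu,0}$ using $[\partial W,P_{\nu,0}]=[\partial P_{\nu,0},W]$, so that both sides are expressed purely in terms of $W$, $\partial P_{\nu,0}$ (first-order), and $\lambda_{\nu},\partial\lambda_{\nu}$. On the LHS, every term of the form $P_{\nu,0}(\partial P_{\nu,0})P_{\nu,0}$ that appears after substitution vanishes by the projector identity, so only the "off-diagonal" contribution of $\partial P_{\nu,0}$ survives, producing terms bilinear in $\partial P_{\nu,0}$ and $W$. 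On the RHS, I expand the two matrix commutators with $[A,BC]=[A,B]C+B[A,C]$ and again use $P_{\nu,0}(\partial P_{\nu,0})P_{\nu,0}=0$ to kill parasitic terms; the remaining pieces of the commutator $[P_{\nu,0}WP_{\nu,0},[P_{\nu,0},\{\lambda_{\nu},P_{\nu,0}\}]]$ cancel the edge terms $\{\lambda_{\nu},P_{\nu,0}\}WP_{\nu,0}+P_{\nu,0}W\{\lambda_{\nu},P_{\nu,0}\}$, while the piece coming from $[P_{\nu,0}WP_{\nu,0},\tfrac{\lambda_{\nu}}{2}P_{\nu,0}\{P_{\nu,0},P_{\nu,0}\}P_{\nu,0}]$ reproduces the remaining curvature-like term on the LHS.

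\textbf{Main obstacle.} The whole argument is essentially bookkeeping: I must track the ordering of $W$, $\partial_xW$, $\partial_{\xi}W$ relative to $P_{\nu,0}$ and its first derivatives across roughly a dozen terms, and invoke the two structural identities at the right places. The cleanest organization is to group terms by the number of derivatives falling on $P_{\nu,0}$ versus $W$, and to treat $\partial_{x_j}$ and $\partial_{\xi_j}$ symmetrically so that only the antisymmetric combinations survive; the antisymmetrization by $(\{P_{\nu,0},W\}-\{W,P_{\nu,0}\})$ on the LHS and by the matrix commutators on the RHS is what ultimately makes the identity balance.
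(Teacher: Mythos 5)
The paper does not actually prove this lemma: immediately after the statement it writes ``we recall the following result from the appendix of \cite{Spo}'', so the paper's ``proof'' is a citation. Your self-contained argument therefore takes a genuinely different route by supplying the computation. Steps 1 and 2 are carried out correctly: using that $\lambda_{\nu}$ is scalar, the Leibniz rule gives $\{\lambda_{\nu}P_{\nu,0},W\}=P_{\nu,0}\{\lambda_{\nu},W\}+\lambda_{\nu}\{P_{\nu,0},W\}$ and $\{W,\lambda_{\nu}P_{\nu,0}\}=-\{\lambda_{\nu},W\}P_{\nu,0}+\lambda_{\nu}\{W,P_{\nu,0}\}$, and the common $P_{\nu,0}\{\lambda_{\nu},W\}P_{\nu,0}$ cancels exactly as you say. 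Step 3 is stated as a plan rather than executed, but the two cancellations you announce do hold, so the plan closes. Concretely, writing $P=P_{\nu,0}$, $P^{\perp}=I_m-P$, and using that $\partial P$ is off-diagonal (from $P\partial P P=0$ and its complement) together with $[\partial W,P]=[\partial P,W]$, one checks $\big[PWP,[P,\{\lambda_{\nu},P\}]\big]=\{\lambda_{\nu},P\}WP+PW\{\lambda_{\nu},P\}$, which kills the edge terms, and then (with $a_{\partial}:=P(\partial P)P^{\perp}$, $b_{\partial}:=P^{\perp}(\partial P)P$) both $P\big(\{P,W\}-\{W,P\}\big)P$ and $\big[P\{P,P\}P,\,PWP\big]$ reduce to $[\,a_{\xi}b_{x}-a_{x}b_{\xi}\,,\,W\,]$, which proves the remaining identity. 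If you intend this to stand as a complete proof rather than a plan, I would recommend finalizing Step 3 with exactly this block decomposition; as written, the reader has to take the two cancellations on faith, even though they are in fact correct.
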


\noindent
\textbf{Proof of proposition \eqref{main proposition construction} :}

Let us start by computing $\bigg([H_{\nu}, P_{\nu}\# \tilde{q}_{\nu,j}(t) \# P_{\nu}]_{\#}\bigg)_1$. We have 
\begin{equation}\label{égalité symb sub}
P_{\nu,0} \bigg([H_{\nu}, P_{\nu}\# \tilde{q}_{\nu,j}(t) \# P_{\nu}]_{\#}\bigg)_1 P_{\nu,0} = \bigg([H_{\nu}, P_{\nu}\# \tilde{q}_{\nu,j}(t) \# P_{\nu}]_{\#}\bigg)_1.
\end{equation}
Indeed, the fact that $P_{\nu}\# H_{\nu} = H_{\nu}\# P_{\nu}=H_{\nu}$ (according to \eqref{tildediag}) implies
\begin{equation}\label{chmopu}
P_{\nu} \# [H_{\nu}, P_{\nu}\# \tilde{q}_{\nu,j}(t) \# P_{\nu}]_{\#} \# P_{\nu} = [H_{\nu}, P_{\nu}\# \tilde{q}_{\nu,j}(t) \# P_{\nu}]_{\#}.
\end{equation}
Consequently, the sub-principal symbols of the two terms in the above equation coincide. Since 
$$
\bigg([H_{\nu}, P_{\nu}\# \tilde{q}_{\nu,j}(t) \# P_{\nu}]_{\#} \bigg)_0 =  [H_{\nu,0}, P_{\nu,0} \tilde{q}_{\nu,j}(t) P_{\nu,0}] = 0,
$$
it follows that the sub-principal symbol of the left hand side of \eqref{chmopu} is equal to 
$$
P_{\nu,0} \bigg([H_{\nu}, P_{\nu}\# \tilde{q}_{\nu,j}(t) \# P_{\nu}]_{\#}\bigg)_1 P_{\nu,0}.
$$ 
Thus we get \eqref{égalité symb sub}. Using this property and formulas \eqref{deduc1} and \eqref{deduc2}, we obtain 
\begin{align}\label{chedjou1}
\bigg([H_{\nu}, P_{\nu}\# \tilde{q}_{\nu,j}(t) \# P_{\nu}]_{\#}\bigg)_1 &= P_{\nu,0}  \bigg([H_{\nu}, P_{\nu}\# \tilde{q}_{\nu,j}(t) \# P_{\nu}]_{\#}\bigg)_1 P_{\nu,0} \nonumber \\
&=  \frac{1}{2i}  P_{\nu,0} \bigg( \{H_{\nu,0}, P_{\nu,0}\tilde{q}_{\nu,j}(t) P_{\nu,0}\} - \{P_{\nu,0}\tilde{q}_{\nu,j}(t) P_{\nu,0} , H_{\nu,0}\}   \bigg)  P_{\nu,0} \nonumber \\
& + P_{\nu,0} \bigg( \big[ H_{\nu,0}, \big( P_{\nu}\# \tilde{q}_{\nu,j}(t) \# P_{\nu}\big)_1 \big] + \big[ H_{\nu,1}, P_{\nu,0} \tilde{q}_{\nu,j}(t) P_{\nu,0} \big] \bigg) P_{\nu,0}  \nonumber \\
&= \frac{1}{2i} P_{\nu,0}\bigg( \{H_{\nu,0},  P_{\nu,0}\tilde{q}_{\nu,j}(t) P_{\nu,0}\} - \{P_{\nu,0}\tilde{q}_{\nu,j}(t) P_{\nu,0} , H_{\nu,0}\}   \bigg)  P_{\nu,0} \nonumber  \nonumber \\
& +   \big[ P_{\nu,0} H_{\nu,1} P_{\nu,0}, P_{\nu,0} \tilde{q}_{\nu,j}(t) P_{\nu,0} \big],
\end{align}
where in the last step we used the fact that $P_{\nu,0}  \big[ H_{\nu,0}, \big( P_{\nu}\# \tilde{q}_{\nu,j}(t) \# P_{\nu}\big)_1 \big] P_{\nu,0} =0$ which can be easily verified using formula \eqref{deduc2}. Applying Lemma \ref{Lemma Spohn} with $W = P_{\nu,0} \tilde{q}_{\nu,j}(t) P_{\nu,0}$, we get  
\begin{align*}
i \bigg([H_{\nu}, P_{\nu}\# \tilde{q}_{\nu,j}(t) \# P_{\nu}]_{\#}\bigg)_1 =  \{\lambda_{\nu}, P_{\nu,0} \tilde{q}_{\nu,j}(t) P_{\nu,0}\} + i [\tilde{H}_{\nu,1}, P_{\nu,0} \tilde{q}_{\nu,j}(t) P_{\nu,0}],
\end{align*}
where $\tilde{H}_{\nu,1}$ is defined by \eqref{equ79}. This ends the proof of the proposition.
\begin{flushright}
$\square$
\end{flushright}

The resolution of the Cauchy problems \eqref{main construction transformation} will be made by induction on $j\geq 0$. Let us start with $j=0$. Since $K_{\nu,-1}(t)=0$, we have 
\begin{equation*}
\left\{   
\begin{array}{rcl}
\displaystyle{\frac{d}{dt} }  P_{\nu,0} \tilde{q}_{\nu,0}(t) P_{\nu,0} &=&  \{\lambda_{\nu}, P_{\nu,0} \tilde{q}_{\nu,0}(t) P_{\nu,0}\} + i [\tilde{H}_{\nu,1}, P_{\nu,0} \tilde{q}_{\nu,0}(t) P_{\nu,0}]  \\ 
\tilde{q}_{\nu,0}(t)_{|t=0}&=& \tilde{Q}_{\nu,0}.
\end{array} 
\right.
\end{equation*}  

In Lemma \ref{Lemma remove}, taking into account the fact that $\tilde{Q}_{\nu,0} = P_{\nu,0} \tilde{Q}_{\nu,0} P_{\nu,0}$ (see \eqref{kix mak}), we have shown that if $\tilde{q}_{\nu,0}(t)$ is a solution of the following problem
\begin{equation}\label{main construction transformation0}
\left\{   
\begin{array}{rcl}
\displaystyle{\frac{d}{dt} } \tilde{q}_{\nu,0}(t) &=&  \{\lambda_{\nu}, \tilde{q}_{\nu,0}(t)\} + i [\tilde{H}_{\nu,1}, \tilde{q}_{\nu,0}(t) ]  \\ 
\tilde{q}_{\nu,0}(t)_{|t=0}&=& \tilde{Q}_{\nu,0},
\end{array} 
\right.
\end{equation} 
then at any time $t$,
\begin{equation}\label{en tt temps}
\tilde{q}_{\nu,0}(t) = P_{\nu,0} \tilde{q}_{\nu,0}(t) P_{\nu,0}.
\end{equation}
Applying the result of Appendix \ref{Ann Cauchy problem general} with $\Lambda= \lambda_{\nu}$ and $A= \tilde{H}_{\nu,1}$, we obtain the solution of \eqref{main construction transformation0} which reads  
\begin{equation}\label{sol tilde0}
\tilde{q}_{\nu,0}(t,x,\xi)= T_{\nu}^{-1}(t,x,\xi) Q_{\nu,0}\big(\phi_{\nu}^t(x,\xi) \big) T_{\nu}(t,x,\xi),
\end{equation}
where $T_{\nu}$ in the unitary $(m\times m)$ matrix-valued function solution of the system
\begin{equation}\label{equation de transport pour Tnu}
\frac{d}{dt} T_{\nu}(t,x,\xi) = - i \tilde{H}_{\nu,1} \big(\phi_{\nu}^t(x,\xi)\big) T_{\nu}(t,x,\xi), \quad T_{\nu}(0,x,\xi) = I_m.
\end{equation}

Let us now assume that we have solved \eqref{main construction transformation} until the order $j-1$, i.e. we have constructed the symbols $\tilde{q}_{\nu,k}(t)$ for $k\in \{0,...,j-1\}$ and that they satisfy 
$$
\tilde{q}_{\nu,k}(t) = P_{\nu,0} \tilde{q}_{\nu,k}(t) P_{\nu,0}, \quad \forall k\in \{0,...,j-1\}.
$$
We are going to solve \eqref{main construction transformation} at the order $j$ and check that the solution $\tilde{q}_{\nu,j}(t)$ satisfies 
$$
\tilde{q}_{\nu,j}(t) = P_{\nu,0} \tilde{q}_{\nu,j}(t) P_{\nu,0}.
$$ 
To apply Lemma \ref{Lemma remove}, we have to prove that 
\begin{equation}\label{kix mak1}
P_{\nu,0} K_{\nu,j-1}(t) P_{\nu,0} = K_{\nu,j-1}(t).
\end{equation}
Recall that $K_{\nu,j-1}(t)$ defined by \eqref{Knuj} is the $j$-th term (i.e. the coefficient of $\h^j$) of the symbol 
$$
E_{\nu,j-1}(t) := \frac{i}{\h} [H_{\nu}, A_{\nu,j-1}(t)]_{\#} - \frac{d}{dt} A_{\nu,j-1}(t).
$$
In the following, we say that $B\sim \sum_{k\geq 0}\h^k B_k$ belongs to $S(\h^j)$ if $B_{k}=0$ for all $k < j$.

We claim that 
\begin{equation}\label{main assumption construction}
E_{\nu,j-1}(t) \in S(\h^j).
\end{equation}
This will be proven below. Due to \eqref{Anuj-1} and \eqref{le symbole Hnu}, we have 
$$
P_{\nu} \#  E_{\nu,j-1}(t) \# P_{\nu} =  E_{\nu,j-1}(t).
$$
By equating the $j$-th terms in both sides using \eqref{main assumption construction} we get \eqref{kix mak1}. Taking into account \eqref{kix mak} and \eqref{kix mak1}, according to Lemma \ref{Lemma remove}, if $\tilde{q}_{\nu,j}(t)$ is a solution of the following problem
\begin{equation}\label{main construction transformation j}
\left\{   
\begin{array}{rcl}
\displaystyle{\frac{d}{dt} }  \tilde{q}_{\nu,j}(t)  &=&  \{\lambda_{\nu}, \tilde{q}_{\nu,j}(t) \} + i [\tilde{H}_{\nu,1}, \tilde{q}_{\nu,j}(t)] + K_{\nu,j-1}(t) \\ 
\tilde{q}_{\nu,j}(t)_{|t=0}&=& \tilde{Q}_{\nu,j},
\end{array} 
\right.
\end{equation}
then 
$$
\tilde{q}_{\nu,j}(t) = P_{\nu,0} \tilde{q}_{\nu,j}(t) P_{\nu,0}, \quad \forall t\in \mathbb R.
$$
To solve \eqref{main construction transformation j}, we apply the result of Appendix \ref{Ann Cauchy problem general} again with $\Lambda= \lambda_{\nu}$, $A= \tilde{H}_{\nu,1}$ and $B(t) = K_{\nu,j-1}(t)$. The solution reads
\begin{align}\label{les solutions tilde}
\tilde{q}_{\nu,j}(t,x,\xi) = T_{\nu}^{-1}(t,x,\xi) \bigg( \tilde{Q}_{\nu,j}\big( \phi_{\nu}^t(x,\xi)\big) + \int_0^t  W_{\nu,j}(t,s,x,\xi)  ds \bigg) T_{\nu}(t,x,\xi),
\end{align}
with 
$$
W_{\nu,j}(t,s,x,\xi):= T_{\nu}^{-1} \big( -s,\phi_{\nu}^{t}(x,\xi) \big) K_{\nu,j-1}\big(s,\phi_{\nu}^{t-s}(x,\xi) \big) T_{\nu}\big(-s,\phi_{\nu}^t(x,\xi) \big),
$$
where $T_{\nu}$ is given by the system \eqref{equation de transport pour Tnu}.

It remains now to prove the claim \eqref{main assumption construction} by induction on $j$. For $j=1$, we have 
\begin{eqnarray*}
\big(E_{\nu,0}(t)\big)_0 &=& i \big([H_{\nu}, A_{\nu,0}(t)]_{\#}\big)_1 - \frac{d}{dt} \big(A_{\nu,0}(t)\big)_0 \\
&=& i \big([H_{\nu}, P_{\nu}\# \tilde{q}_{\nu,0}(t) \# P_{\nu}]_{\#} \big)_1 - \frac{d}{dt} P_{\nu,0} \tilde{q}_{\nu,0}(t) P_{\nu,0} \\
&=& 0,
\end{eqnarray*}
since it is the equation satisfied by $\tilde{q}_{\nu,0}(t)$ (see \eqref{pbm Cauchy tildej}). Thus $E_{\nu,0}(t)\in S(\h)$. 

We assume that $E_{\nu,j-2}(t)\in S(\h^{j-1})$ and let us prove \eqref{main assumption construction}. Using that 
$$
A_{\nu,j-1}(t) = A_{\nu,j-2}(t) + \h^{j-1}\: P_{\nu}\# \tilde{q}_{\nu,j-1}(t) \# P_{\nu}
$$
 we get
\begin{align}
E_{\nu,j-1}(t) &= E_{\nu,j-2}(t) - \: \h^{j-1} \frac{d}{dt} P_{\nu}\# \tilde{q}_{\nu,j-1}(t) \# P_{\nu} + i \: \h^{j-2} \big[ H_{\nu}, P_{\nu}\# \tilde{q}_{\nu,j-1}(t)\# P_{\nu} \big]_{\#} \nonumber  \\
& = E_{\nu,j-2}(t) - \h^{j-1}\: \frac{d}{dt} P_{\nu,0} \tilde{q}_{\nu,j-1}(t) P_{\nu,0} + S(\h^j) + i \h^{j-1} \big( \big[ H_{\nu}, P_{\nu}\# \tilde{q}_{\nu,j-1}(t)\# P_{\nu} \big]_{\#}  \big)_1 + S(\h^j) \nonumber \\
& =  E_{\nu,j-2}(t) - \h^{j-1} \bigg(  \frac{d}{dt} P_{\nu,0} \tilde{q}_{\nu,j-1}(t) P_{\nu,0} - i  \big( \big[ H_{\nu}, P_{\nu}\# \tilde{q}_{\nu,j-1}(t)\# P_{\nu} \big]_{\#}  \big)_1 \bigg) + S(\h^j). \label{inx mari}
\end{align}
Notice that to pass from the first to the second equality, we have used the fact that 
$$
\big[ H_{\nu}, P_{\nu}\# \tilde{q}_{\nu,j-1}(t)\# P_{\nu} \big]_{\#} \in S(\h)
$$
since as it was already point out in \eqref{transf2} its principal symbol vanishes.

On the other hand, combining the definition of $K_{\nu,j-2}(t)$ which is $K_{\nu,j-2}(t) = \big(E_{\nu,j-2}(t) \big)_{j-1}$ and the induction hypothesis $E_{\nu,j-2}(t)\in S(\h^{j-1})$, we get 
$$
E_{\nu,j-2}(t) = \h^{j-1} K_{\nu,j-2}(t) + S(\h^j).
$$
By going back to \eqref{inx mari}, we obtain 
$$
E_{\nu,j-1}(t) = \h^{j-1} \bigg( K_{\nu,j-2}(t) -  \frac{d}{dt} P_{\nu,0} \tilde{q}_{\nu,j-1}(t) P_{\nu,0} + i  \big( \big[ H_{\nu}, P_{\nu}\# \tilde{q}_{\nu,j-1}(t)\# P_{\nu} \big]_{\#}  \big)_1 \bigg) + S(\h^j).
$$  
The first term in the right hand side of the above equation vanishes since it is exactly the equation satisfied by $\tilde{q}_{\nu,j-1}(t)$ (see \eqref{pbm Cauchy tildej}). Thus, we proved that $E_{\nu,j-1}(t)\in S(\h^j)$. This ends the proof of the claim.
\begin{flushright}
$\square$
\end{flushright}

 Summing up, we hence have solved the Cauchy problems \eqref{pbm Cauchy tildej} for all $j\geq 0$. The solutions $(\tilde{q}_{\nu,j}(t))_{j\geq 0}$ are given by formula \eqref{les solutions tilde}. In particular, $\tilde{q}_{\nu,0}(t)$ is given by \eqref{sol tilde0}. As already mentioned in the begining of this paragraph, the solutions $q_{\nu,j}(t)$ of the Cauchy problems $(\mathcal{C}_{\nu,j})_{j\geq 0}$ can then be computed  using the composition formula \eqref{développement asymptotique} from the general formula \eqref{general formula for the solution}. In particular, the principal symbol $q_{\nu,0}(t)$ is given by \eqref{forme du symbole principal}.

\subsection{Uniform estimates and proofs of Theorem \ref{main T} and Corollary \ref{sec coro}}\label{unpoi}

This section is devoted to the proofs of Theorem \ref{main T} and Corollary \ref{sec coro}. Since the techniques of the proofs are close to those used in the above section, we shall omit some details. 

As in section \ref{Sc}, we start by estimating the derivatives of the constructed symbols $q_{\nu,j}(t)$, $j\geq 0$.

\begin{proposition}\label{est Gen}
Assume \textbf{(A1)} and \textbf{(A2)} and let $1\leq \nu \leq l$. For all $\gamma\in \mathbb N^{2n}$, for all $j\geq 0$, there exists $C_{\gamma,\nu,j}>0$ such that for all $t\in \mathbb R$ and all $(x,\xi)\in \mathbb R^{2n}$, we have 
\begin{equation}\label{AG0}
\big\Vert \partial_{(x,\xi)}^{\gamma} q_{\nu,0}(t,x,\xi) \big\Vert \leq C_{\gamma,\nu,0} \exp \bigg( |\gamma|\Gamma_{\nu}|t| \bigg),
\end{equation}
and for $j\geq 1$,
\begin{equation}\label{AGj}
\big\Vert  \partial_{(x,\xi)}^{\gamma} q_{\nu,j}(t,x,\xi) \big\Vert \leq C_{\gamma,\nu,j} \exp\bigg( \big(2|\gamma|+4j-2\big)\Gamma_{\nu}|t|\bigg),
\end{equation}
where $\Gamma_{\nu}$ is defined by \eqref{upper bounds}.
\end{proposition}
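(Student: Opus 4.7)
The plan is to adapt the proof strategy of Proposition \ref{u5} to the present setting, exploiting the explicit structure of $\tilde q_{\nu,j}(t)$ given by \eqref{les solutions tilde} and then passing to $q_{\nu,j}(t)$ through the composition formula \eqref{general formula for the solution}. First, I will establish the direct analogues of Lemmas \ref{flot1}, \ref{Transport1}, and \ref{transport avec le flot}. Lemma \ref{flot1} applies verbatim to the flow $\phi_\nu^t$ once one checks that $\partial_{(x,\xi)}^\gamma \lambda_\nu\in L^\infty(\mathbb R^{2n})$ for $|\gamma|\geq 2$; this follows from \textbf{(A2)} together with $P_{\nu,0}\in S(1)$ (Lemma \ref{smoothness}) and the identity $r_\nu \lambda_\nu=\mathrm{tr}(P_{\nu,0} H_0)$, since Leibniz differentiation only generates products of $\partial^\alpha P_{\nu,0}$ (bounded) with $\partial^\beta H_0$ (bounded for $|\beta|\geq 1$). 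One thereby obtains $\|\partial_{(x,\xi)}^\gamma\phi_\nu^t(x,\xi)\|\leq C_\gamma \exp(|\gamma|\Gamma_\nu|t|)$. The matrix $\tilde H_{\nu,1}$ defined in \eqref{equ79} combines $P_{\nu,0}$, the Poisson brackets $\{P_{\nu,0},H_0\}$ and $\{\lambda_\nu,P_{\nu,0}\}$, and $P_{\nu,0} H_1 P_{\nu,0}$, all of which belong to $S(1;\mathbb R^{2n},M_m(\mathbb C))$ under \textbf{(A2)}; the induction argument of Lemma \ref{Transport1} therefore carries over and yields $\|\partial_{(x,\xi)}^\gamma T_\nu^{\pm 1}(t,x,\xi)\|\leq C_\gamma \exp(|\gamma|\Gamma_\nu|t|)$, together with the compositional variant $\|\partial_{(x,\xi)}^\gamma T_\nu^{\pm 1}(s,\phi_\nu^t(x,\xi))\|\leq C_\gamma\exp(|\gamma|\Gamma_\nu(|t|+|s|))$.

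Armed with these estimates, I will proceed by induction on $j\geq 0$, first on the auxiliary symbols $\tilde q_{\nu,j}(t)$ given by \eqref{les solutions tilde} and then passing to $q_{\nu,j}(t)$. For $j=0$, the identity $q_{\nu,0}(t,x,\xi)=P_{\nu,0}(x,\xi)\,\tilde q_{\nu,0}(t,x,\xi)\,P_{\nu,0}(x,\xi)$ together with $\tilde q_{\nu,0}(t,x,\xi)=T_\nu^{-1}(t,x,\xi)\,\tilde Q_{\nu,0}(\phi_\nu^t(x,\xi))\,T_\nu(t,x,\xi)$ yields \eqref{AG0} directly, by Leibniz and Faà di Bruno \eqref{Faa de Bruno}, since $P_{\nu,0}\in S(1)$ contributes only bounded factors. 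Assuming now \eqref{AGj} up to order $j-1$, formula \eqref{general formula for the solution} also yields the same type of exponential bounds on $\tilde q_{\nu,k}(t)$ for $k\leq j-1$. The source term $K_{\nu,j-1}(t)$ in \eqref{Knuj} is, through the Moyal expansion \eqref{développement asymptotique}, a finite combination of products of derivatives of $H_\nu=P_\nu\# H\# P_\nu$—all bounded under \textbf{(A2)} since $P_\nu\in S_{\text{sc}}(1)$—with derivatives of $\tilde q_{\nu,k}(t)$, $k\leq j-1$; a careful bookkeeping will then give $\|\partial_{(x,\xi)}^\gamma K_{\nu,j-1}(s,x,\xi)\|\leq C_\gamma\exp((2|\gamma|+4j-2)\Gamma_\nu|s|)$. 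Plugging this estimate into \eqref{les solutions tilde} and repeating the computation that led to \eqref{tbm} in the scalar-principal case will produce the same exponential bound for $\tilde q_{\nu,j}(t)$, and one final application of Leibniz and Faà di Bruno to \eqref{general formula for the solution} will yield \eqref{AGj}.

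The hard part will be the precise tracking of the exponential rate at each inductive step. Two mechanisms compound: the $(j+1)$-th coefficient of the Moyal bracket $[H_\nu,\cdot]_\#$ distributes up to $j+1$ additional derivatives on the slowly growing symbols $\tilde q_{\nu,k}(t)$, and these differentiations must not exceed the announced exponent $2|\gamma|+4j-2$; in parallel, the Duhamel-type integration over $s\in[0,t]$ combined with the conjugation by $T_\nu(-s,\phi_\nu^t(x,\xi))$ essentially doubles the effective cost of each derivative in $(x,\xi)$, as already visible in \eqref{explain}, which is responsible for the factor $2|\gamma|$ rather than $|\gamma|$. Finally, the extra unit in $4j-2$ compared with the $4j-3$ of Proposition \ref{u5} will reflect the additional Moyal sandwich by $P_\nu$ on either side in \eqref{general formula for the solution}, in the same spirit as Remark \ref{explanation}.
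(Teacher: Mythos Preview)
Your strategy is essentially the one the paper follows: establish the analogues of Lemmas~\ref{flot1}--\ref{transport avec le flot} for $\phi_\nu^t$ and $T_\nu$, prove the estimate first for the auxiliary symbols $\tilde q_{\nu,j}(t)$ by induction, and then upgrade to $q_{\nu,j}(t)$ via the composition formula~\eqref{general formula for the solution}. Two points in your sketch need tightening.

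First, your assertion that the Moyal expansion of $K_{\nu,j-1}$ involves only ``derivatives of $H_\nu$---all bounded under \textbf{(A2)}'' is not quite right as stated. The symbol $H_\nu=P_\nu\#H\#P_\nu$ lies in $S(g)$, not $S(1)$; in particular the undifferentiated $H_{\nu,0}=\lambda_\nu P_{\nu,0}$ is \emph{not} bounded. What saves the estimate is twofold: (i) the paper's Lemma~\ref{needed lemma} shows $\partial_{(x,\xi)}^\gamma H_{\nu,j}\in L^\infty$ whenever $|\gamma|+j\geq 1$, and (ii) in the Moyal expansion of $[H_\nu,A_{\nu,j-1}(t)]_\#$ the only term carrying an undifferentiated $H_{\nu,0}$ is the matrix commutator $[H_{\nu,0},\tilde q_{\nu,k}(t)]$, which vanishes because $\tilde q_{\nu,k}(t)=P_{\nu,0}\tilde q_{\nu,k}(t)P_{\nu,0}$ (this is precisely the content of Lemma~\ref{Lemma remove}). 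You must invoke this cancellation explicitly; otherwise the unbounded factor $\lambda_\nu$ would destroy the estimate.

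Second, your induction is phrased on $q_{\nu,k}$ (assuming \eqref{AGj} up to $j-1$) and then recovers bounds on $\tilde q_{\nu,k}$ from \eqref{general formula for the solution}. This is workable but needlessly circuitous, since \eqref{general formula for the solution} goes the wrong direction. The paper runs the induction directly on $\tilde q_{\nu,j}$, which is cleaner: the source $K_{\nu,j-1}(t)$ depends only on $\tilde q_{\nu,k}$, $k\leq j-1$, so one never needs to invert the triangular relation. Note also that $K_{\nu,j-1}$ has a second piece, $\frac{d}{dt}(A_{\nu,j-1}(t))_j$, which you should bound separately by substituting the evolution equation \eqref{main construction transformation j} for each $\frac{d}{dt}\tilde q_{\nu,k}$; this in turn brings in $K_{\nu,k-1}$ for $k\leq j-1$, handled by a subsidiary induction.
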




Similarly to the proof of Proposition \ref{u5}, the proof of the above proposition is based on the following lemmas which give estimates on the derivatives of the Hamiltonian flows $\phi_{\nu}^t$ generated by the eigenvalues $\lambda_{\nu}$ and the matrix-valued function $T_{\nu}$ defined in (\ref{equation de transport pour Tnu}).

From now on we fix $\nu\in \{1,...,l\}$.

\begin{lemma}\label{lemma flownu}
We assume that 
\begin{equation}\label{cond flownu}
\partial^{\gamma}_{(x,\xi)}H_0 \in L^{\infty}(\mathbb R^{2n}), \quad \text{for}\;\; |\gamma|\geq 2.
\end{equation}
Then, for all $\gamma \in \mathbb N^{2n}\setminus \{0\}$, there exists $C_{\nu,\gamma}>0$ such that for all $t\in \mathbb R$ and all $(x,\xi)\in \mathbb R^{2n}$,
\begin{equation}\label{flowwnu}
\Vert \partial_{(x,\xi)}^{\gamma}\phi_{\nu}^t(x,\xi) \Vert \leq C_{\nu,\gamma} \exp\big( |\gamma|\Gamma_{\nu}|t|\big).
\end{equation}
\end{lemma}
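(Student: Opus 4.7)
The plan is to reproduce, for $\lambda_\nu$ in place of $\lambda$, the argument of Lemma \ref{flot1} (due to Bouzouina--Robert), since the statement concerns only the flow of a scalar Hamiltonian. The matrix structure of $H_0$ enters only through the smoothness and boundedness of $\nabla^{(2)}\lambda_\nu$; under \textbf{(A1)} and the hypothesis $\partial^\gamma H_0\in L^\infty$ for $|\gamma|\geq 2$, the eigenvalue $\lambda_\nu$ is smooth (by Lemma \ref{smoothness}) and its Hessian is bounded by $\Gamma_\nu$, which is exactly the input needed. I will argue by induction on $|\gamma|$, using the Jacobi stability equation
\begin{equation*}
\frac{d}{dt}\nabla_{(x,\xi)}\phi_\nu^t(x,\xi)=J\nabla_{(x,\xi)}^{(2)}\lambda_\nu\bigl(\phi_\nu^t(x,\xi)\bigr)\,\nabla_{(x,\xi)}\phi_\nu^t(x,\xi),
\end{equation*}
together with Gronwall's lemma and the multivariate Fa\`a Di Bruno formula \eqref{Faa de Bruno}. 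As in Lemma \ref{flot1}, it suffices to treat $t\geq 0$.

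For the base case $|\gamma|=1$, the matrix $M(t,x,\xi):=\nabla_{(x,\xi)}\phi_\nu^t(x,\xi)$ satisfies a linear ODE whose generator has operator norm at most $\Gamma_\nu$ uniformly in $(t,x,\xi)$, so Gronwall yields $\|M(t,x,\xi)\|\leq \exp(\Gamma_\nu t)$. For the inductive step, assume \eqref{flowwnu} for every multi-index of length strictly less than $r\geq 2$, and let $|\gamma|=r$. Differentiating the Jacobi equation $\partial_{(x,\xi)}^{\gamma-e_i}$-times (for any choice of $i$) and applying the Leibniz rule, one obtains
\begin{equation*}
\frac{d}{dt}\partial_{(x,\xi)}^\gamma \phi_\nu^t(x,\xi)=J\nabla^{(2)}\lambda_\nu\bigl(\phi_\nu^t(x,\xi)\bigr)\,\partial_{(x,\xi)}^\gamma \phi_\nu^t(x,\xi)+F_\gamma(t,x,\xi),
\end{equation*}
where the source $F_\gamma$ is a sum of products of derivatives $\partial^\beta\bigl(\nabla^{(2)}\lambda_\nu\circ \phi_\nu^t\bigr)$ (expanded via Fa\`a Di Bruno, using the boundedness of $\partial_{(x,\xi)}^\alpha\lambda_\nu$ for $|\alpha|\geq 2$) times lower-order derivatives $\partial^{\gamma'}\phi_\nu^t$ with $|\gamma'|<r$.

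Applying the inductive hypothesis to each factor and matching exponents exactly as in the proof of Lemma \ref{Transport1}, one shows $\|F_\gamma(t,x,\xi)\|\leq C_\gamma \exp(r\,\Gamma_\nu t)$. Since $\partial_{(x,\xi)}^\gamma \phi_\nu^0=0$ for $|\gamma|\geq 2$, solving the inhomogeneous linear ODE via Duhamel's formula, bounding the homogeneous propagator by $\exp(\Gamma_\nu(t-s))$ and integrating in $s\in[0,t]$ yields
\begin{equation*}
\bigl\|\partial_{(x,\xi)}^\gamma \phi_\nu^t(x,\xi)\bigr\|\leq C'_\gamma\,\exp(r\,\Gamma_\nu t),
\end{equation*}
which completes the induction.

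The main obstacle is purely combinatorial: when expanding $\partial^\beta\bigl(\nabla^{(2)}\lambda_\nu\circ \phi_\nu^t\bigr)$ by Fa\`a Di Bruno and multiplying against the lower-order factors $\partial^{\gamma'}\phi_\nu^t$, one must verify that the multi-index constraint $\sum_\alpha \alpha|\eta|=\gamma$ in \eqref{expl. Faa} produces a total exponential weight with argument exactly $|\gamma|\Gamma_\nu t$ (and not a larger multiple). This is the same matching-of-exponents bookkeeping that succeeded in the proofs of Lemmas \ref{Transport1} and \ref{transport avec le flot}, and it goes through identically here because $\lambda_\nu$ plays in the present lemma the role that $\lambda$ plays in Lemma \ref{flot1}.
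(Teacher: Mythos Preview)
Your proposal is correct and follows exactly the same approach as the paper: use Lemma~\ref{smoothness} (inequality~\eqref{mm classe de symbole}) together with the hypothesis~\eqref{cond flownu} to deduce $\partial_{(x,\xi)}^{\gamma}\lambda_{\nu}\in L^{\infty}(\mathbb{R}^{2n})$ for $|\gamma|\geq 2$, and then invoke the proof of Lemma~\ref{flot1}. In fact you give considerably more detail than the paper, which simply cites Lemma~\ref{flot1} and \cite[Lemma~2.2]{Rob} after establishing the boundedness of the higher derivatives of $\lambda_\nu$.
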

\begin{proof}
According to inequaltiy \eqref{mm classe de symbole}, \eqref{cond flownu} implies that $\partial_{(x,\xi)}^{\gamma}\lambda_{\nu}\in L^{\infty}(\mathbb R^{2n})$, for $|\gamma|\geq 2$. Thus estimate \eqref{flowwnu} can be proved in the same manner as in Lemma \ref{flot1} (see \cite[Lemma 2.2]{Rob}).
\end{proof}
We turn now to the estimation of the derivatives of $T_{\nu}$ solution of the system \eqref{equation de transport pour Tnu}.
\begin{lemma}\label{lemma est transport numu}
Let assumptions \textbf{(A1)} and \textbf{(A2)} be satisfied. For all $\gamma\in \mathbb N^{2n}\setminus \{0\}$ there exists a constant $C_{\nu,\gamma}>0$ (independent of $t\in \mathbb R$ and $(x,\xi)\in \mathbb R^{2n}$) such that
\begin{equation}\label{est transport nu}
\Vert \partial_{(x,\xi)}^{\gamma} T_{\nu}(t,x,\xi) \Vert \leq C_{\nu,\gamma} \exp\big( |\gamma|\Gamma_{\nu}|t|\big).
\end{equation}
Furthermore, the same estimate holds for $T_{\nu}^{-1}(t,x,\xi)$.
\end{lemma}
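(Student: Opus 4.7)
My plan is to follow, almost verbatim, the inductive scheme used in the proof of Lemma \ref{Transport1}, with the sub-principal symbol $H_1$ replaced by $\tilde{H}_{\nu,1}$, the scalar Hamiltonian $\lambda$ replaced by $\lambda_\nu$, and the basic flow estimate (Lemma \ref{flot1}) replaced by Lemma \ref{lemma flownu}. I treat $t \geq 0$, the case $t \leq 0$ being symmetric.

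The preliminary step that is not automatic is to verify that $\tilde{H}_{\nu,1}$ defined by \eqref{equ79} enjoys bounded $(x,\xi)$-derivatives of every order. Under \textbf{(A1)}, the Riesz projector $P_{\nu,0}$ lies in $S(1)$ together with all its derivatives (cf. Lemma \ref{smoothness}), and \textbf{(A2)} gives $\partial^\gamma H_j \in L^\infty$ for $|\gamma| + j \geq 1$. The piece $P_{\nu,0} H_{\nu,1} P_{\nu,0}$ thus has bounded derivatives at all orders because $H_{\nu,1}$ is the $\h$-coefficient of $P_\nu \# H \# P_\nu$, hence an algebraic combination of $P_{\nu,0}$, $P_{\nu,1}$, $H_0$, $H_1$ and their first derivatives. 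For the first two terms of \eqref{equ79}, each factor $\lambda_\nu$ or $\partial\lambda_\nu$ of potential $S(g)$-growth appears sandwiched with $P_{\nu,0}\{P_{\nu,0},P_{\nu,0}\}P_{\nu,0}$ or with $\partial P_{\nu,0}$, and the gap condition in \textbf{(A1)} makes these projector derivatives decay at infinity at a rate that compensates the growth of $\lambda_\nu$, so that $\partial^\beta \tilde{H}_{\nu,1} \in L^\infty$ for every $\beta$.

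Granting this regularity, the base case $|\gamma|=1$ goes through exactly as in Lemma \ref{Transport1}: the hermiticity of $\tilde{H}_{\nu,1}$ and the conjugate equation for $T_\nu^{-1}$ imply that $T_\nu$ is unitary, so $\Vert T_\nu\Vert = \Vert T_\nu^{-1}\Vert = 1$; computing $\frac{d}{dt}\bigl(T_\nu^{-1} \partial_{x_i} T_\nu\bigr)$ kills the $\tilde{H}_{\nu,1}\partial^\gamma T_\nu$ term and, using $\partial_{x_i} T_\nu(0,x,\xi)=0$, leaves
\begin{equation*}
T_\nu^{-1}(t,x,\xi)\, \partial_{x_i} T_\nu(t,x,\xi) = -i \int_0^t T_\nu^{-1}(s,x,\xi)\, \bigl(\partial \tilde{H}_{\nu,1}\bigr)\bigl(\phi_\nu^s(x,\xi)\bigr)\cdot \partial_{x_i}\phi_\nu^s(x,\xi)\, T_\nu(s,x,\xi)\, ds,
\end{equation*}
which together with Lemma \ref{lemma flownu} gives $\Vert \partial_{x_i} T_\nu(t,x,\xi)\Vert \leq C\exp(\Gamma_\nu t)$.

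For the inductive step $|\gamma| = r \geq 2$, I apply $\partial_{(x,\xi)}^\gamma$ to the transport equation via Leibniz, isolate $T_\nu^{-1}\partial^\gamma T_\nu$, and estimate each summand $\partial^\beta\bigl(\tilde{H}_{\nu,1}\circ\phi_\nu^t\bigr)\,\partial^{\gamma-\beta}T_\nu(t)$ using the Fa\`a di Bruno formula \eqref{Faa de Bruno} together with Lemma \ref{lemma flownu} (which bounds the first factor by $C_\beta \exp(|\beta|\Gamma_\nu t)$) and the induction hypothesis (which bounds the second factor by $C \exp((r-|\beta|)\Gamma_\nu t)$). Integrating in time yields $\Vert \partial^\gamma T_\nu(t,x,\xi)\Vert \leq C_\gamma \exp(r\Gamma_\nu t)$. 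The estimate for $T_\nu^{-1}$ follows either from the identical argument applied to the conjugate equation $\frac{d}{dt}T_\nu^{-1}= i T_\nu^{-1}\tilde{H}_{\nu,1}(\phi_\nu^t)$ or, more economically, from $T_\nu^{-1} = T_\nu^*$, which gives $\partial^\gamma T_\nu^{-1} = (\partial^\gamma T_\nu)^*$ and transfers the bound. The only non-routine step is the regularity of $\tilde{H}_{\nu,1}$ discussed above; once that is in hand, the estimate is structurally identical to Lemma \ref{Transport1}.
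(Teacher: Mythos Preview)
Your proposal is correct and follows the same overall strategy as the paper: first show $\tilde{H}_{\nu,1}\in S(1)$, then repeat verbatim the inductive argument of Lemma~\ref{Transport1} with $(\lambda,H_1,\phi_\lambda^t)$ replaced by $(\lambda_\nu,\tilde{H}_{\nu,1},\phi_\nu^t)$ and Lemma~\ref{flot1} by Lemma~\ref{lemma flownu}.

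The only substantive difference is in how $\tilde{H}_{\nu,1}\in S(1)$ is checked. The paper does not argue directly from \eqref{equ79}; instead it expands $H_{\nu,1}$ via \eqref{principal and sub-principal symbol} and observes the algebraic cancellation $\lambda_\nu P_{\nu,0}P_{\nu,1}P_{\nu,0}=-I_\nu^{(3)}$, which eliminates both the $\lambda_\nu$-factor and the $P_{\nu,1}$-factor and yields the manifestly $S(1)$ expression \eqref{terme à controler1}. Your route --- bounding each of the three terms in \eqref{equ79} separately --- also works, but it requires the sharper decay $\Vert\partial^\gamma P_{\nu,0}\Vert\lesssim g^{-1}$ for $|\gamma|\ge 1$ (this is \eqref{import step} in the proof of Lemma~\ref{needed lemma}) to control $\lambda_\nu P_{\nu,0}\{P_{\nu,0},P_{\nu,0}\}P_{\nu,0}$, and $P_{\nu,1}\in S(g^{-1})$ (Lemma~\ref{tilde pnuj smoo}) to control the $P_{\nu,1}H_0$ piece inside $H_{\nu,1}$. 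You gestured at both facts but did not cite them; once they are made explicit, your argument is complete and equivalent to the paper's.
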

\begin{proof} We recall the expression of the $(m\times m)$ hermitian-valued function $\tilde{H}_{\nu,1}$ defined in \eqref{equ79}
$$
\tilde{H}_{\nu,1}= P_{\nu,0} H_{\nu,1} P_{\nu,0} - i [P_{\nu,0},\{\lambda_{\nu},P_{\nu,0}\}] - \frac{i}{2} \lambda_{\nu} P_{\nu,0} \{P_{\nu,0},P_{\nu,0}\}P_{\nu,0} := I_{\nu}^{(1)} + I_{\nu}^{(2)}+ I_{\nu}^{(3)}.
$$
We claim that under assumptions \textbf{(A1)} and \textbf{(A2)}, we have $\tilde{H}_{\nu,1}\in S(1)$. Then, estimate \eqref{est transport nu} can be proved by applying exactly the same method as in the proof of Lemma \ref{Transport1}. 

To prove the claim let us start by computing $H_{\nu,1}$. From \eqref{tildediag1}, we have $H_{\nu,1}:=(P_{\nu}\# H \# P_{\nu})_1 = (P_{\nu}\# H)_1$. Then, using formula (\ref{principal and sub-principal symbol}), we obtain 
\begin{equation}\label{comutation1}
H_{\nu,1} = \frac{1}{2i} \{P_{\nu,0},H_0\} + P_{\nu,0}H_1 + P_{\nu,1} H_0.
\end{equation}

It follows that
$$
I_{\nu}^{(1)} = \frac{1}{2i} P_{\nu,0} \{P_{\nu,0},H_0\}P_{\nu,0} + P_{\nu,0}H_1 P_{\nu,0} + \lambda_{\nu}P_{\nu,0} P_{\nu,1} P_{\nu,0}.
$$
Computing $P_{\nu,1}$ using formula (\ref{principal and sub-principal symbol}) and multiplying from both sides by $P_{\nu,0}$, we get
$$
\lambda_{\nu} P_{\nu,0} P_{\nu,1} P_{\nu,0} = \frac{i}{2} \lambda_{\nu} P_{\nu,0} \{P_{\nu,0},P_{\nu,0}\} P_{\nu,0} = - I_{\nu}^{(3)}.
$$
Consequently, 
\begin{equation}\label{terme à controler1}
\tilde{H}_{\nu,1}= \frac{1}{2i} P_{\nu,0} \{P_{\nu,0},H_0\}P_{\nu,0} + P_{\nu,0}H_1 P_{\nu,0} - i [P_{\nu,0},\{\lambda_{\nu},P_{\nu,0}\}].
\end{equation}
Using assumption \textbf{(A2)} and Lemma \ref{smoothness}, we clearly see that $\tilde{H}_{\nu,1}\in S(1)$. This ends the proof of the lemma.   

\end{proof}

\begin{remark}
As in Lemma \ref{transport avec le flot}, combining \eqref{flowwnu} and \eqref{est transport nu} and using the Fa\'a Di Bruno formula \eqref{Faa de Bruno}, we get the following estimate on the derivatives of $T_{\nu}\big( s,\phi_{\nu}^t(x,\xi) \big)$ : for all $\gamma\in \mathbb N^{2n}$, there exists $C_{\nu,\gamma}>0$ such that 
\begin{equation}\label{est transport avec le flot nu}
\big\Vert \partial_{(x,\xi)}^{\gamma} \big( T_{\nu}(s,\phi_{\lambda}^t(x,\xi)) \big) \big\Vert \leq C_{\nu,\gamma} \exp\bigg( |\gamma|\Gamma_{\nu}(|t|+|s|)  \bigg),
\end{equation}
uniformly for $t,s\in \mathbb R$ and $(x,\xi)\in \mathbb R^{2n}$. The same estimate remains valid for $T^{-1}_{\nu}(s,\phi_{\lambda}^t(x,\xi))$.
\end{remark}

We end our series of Lemmas by the following one where we control the derivatives of the symbols $(H_{\nu,j})_{j\geq 0}$.
\begin{lemma}\label{needed lemma}
Under assumptions \textbf{(A1)} and \textbf{(A2)}, for all $j\geq 0$ and $\gamma\in \mathbb N^{2n}$ with $|\gamma|+j\geq 1$, we have 
\begin{equation}\label{bornitude sur les Hnuj}
\partial_{(x,\xi)}^{\gamma} H_{\nu,j} \in L^{\infty}(\mathbb R^{2n}). 
\end{equation}
\end{lemma}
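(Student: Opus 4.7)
The plan is to analyse $H_{\nu,j}$ via its explicit expression as the $j$-th term of the Moyal product $P_\nu \# H \# P_\nu$, and then exploit the compensation between the unbounded factor $H_0$ (which lives in $S(g)$) and the decay of the derivatives of the eigenprojector $P_{\nu,0}$ in the variable $g$.

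First I would collect the regularity building blocks. From \textbf{(A2)} we have $\partial^{\gamma}_{(x,\xi)} H_k \in L^{\infty}(\mathbb R^{2n})$ as soon as $|\gamma|+k\geq 1$. From Lemma \ref{smoothness} and the gap assumption \textbf{(A1)}, the eigenvalue $\lambda_{\nu}$ lies in $S(g)$ and the projector $P_{\nu,0}$ lies in $S(1)$. The key additional ingredient I would establish (or extract from the proof of Lemma \ref{smoothness}) is the stronger quantitative fact that
$$
\partial_{(x,\xi)}^{\gamma}\lambda_{\nu}\in L^{\infty}(\mathbb R^{2n})\text{ for }|\gamma|\geq 1,\qquad \partial_{(x,\xi)}^{\gamma}P_{\nu,0}\in S\bigl(g^{-|\gamma|}\bigr)\text{ for all }\gamma.
$$
For $\lambda_\nu$, the identity $r_{\nu}\,\partial_i\lambda_{\nu}=\mathrm{tr}\bigl(P_{\nu,0}\,\partial_iH_0\,P_{\nu,0}\bigr)$ (obtained by differentiating $P_{\nu,0}H_0P_{\nu,0}=\lambda_\nu P_{\nu,0}$ and taking traces) gives the first-order estimate, and iterating this relation handles higher orders. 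For $P_{\nu,0}$, I would use the Riesz formula $P_{\nu,0}=\frac{1}{2\pi i}\oint_{\Gamma_\nu}(z-H_0)^{-1}dz$ on a contour of radius comparable to $\rho g(x,\xi)$ (allowed by \textbf{(A1)} for $|(x,\xi)|$ large); each differentiation introduces an extra resolvent factor bounded by $(\rho g)^{-1}$, while each derivative that hits $H_0$ is harmless thanks to \textbf{(A2)}, yielding the claimed $g^{-|\gamma|}$ decay.

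Second, from the equations $[P_\nu,H]_\#\sim 0$ and $P_\nu\#P_\nu\sim P_\nu$ together with the solvability conditions derived order by order in $\hbar$, one obtains an analogous statement for the higher symbols: $P_{\nu,k}\in S(g^{-k})$ with $\partial^{\gamma}P_{\nu,k}\in S(g^{-k-|\gamma|})$ for every $k\geq 1$. This is standard in the Helffer--Sj\"ostrand construction and essentially follows by induction from the explicit formulas expressing $P_{\nu,k}$ in terms of $P_{\nu,0},\ldots,P_{\nu,k-1}$ and derivatives of $H_0,H_1$, using the off-diagonal structure forced by $P_{\nu,0}P_{\nu,k}P_{\nu,0}=0$ (up to lower-order terms) together with the gap.

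Third, I would write $H_{\nu,j}$ explicitly from the composition formula \eqref{développement asymptotique}: it is a finite linear combination of terms of the form
$$
\partial^{\alpha_1}_{(x,\xi)}P_{\nu,k_1}\cdot \partial^{\beta}_{(x,\xi)}H_{l}\cdot \partial^{\alpha_2}_{(x,\xi)}P_{\nu,k_2},\qquad |\alpha_1|+|\alpha_2|+|\beta|+k_1+k_2+l=j.
$$
Applying $\partial^{\gamma}_{(x,\xi)}$ and Leibniz expands each term into a sum of terms of the same shape with total ``weight'' $j+|\gamma|$. Power counting in $g$ now closes the argument: the middle factor $\partial^{\beta}H_l$ contributes at most $g^{1}$ (it is bounded as soon as $|\beta|+l\geq 1$), while each derivative landing on a projector and each extra projector-index $k_i$ contributes a factor $g^{-1}$. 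Since $|\gamma|+j\geq 1$ means that in every summand we have either $|\beta|+l\geq 1$ (hence the middle factor is bounded and the outer projectors contribute $O(1)$) or else $|\alpha_1|+|\alpha_2|+k_1+k_2\geq 1$ (hence the outer factors provide at least one $g^{-1}$ that absorbs the $g$ from $H_0$), each summand is in $L^\infty(\mathbb R^{2n})$. Summing finitely many such contributions proves \eqref{bornitude sur les Hnuj}.

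The main technical obstacle is the second step, namely the verification that the subprincipal and higher projector symbols $P_{\nu,k}$ decay like $g^{-k}$ together with their derivatives; once this quantitative improvement of Lemma \ref{smoothness} is in place, the rest of the proof reduces to the routine power-counting described above.
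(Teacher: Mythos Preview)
Your approach is essentially the same as the paper's: identify the decay $P_{\nu,k}\in S(g^{-k})$ for $k\geq 1$ (this is Lemma~\ref{tilde pnuj smoo} in the paper) together with $\partial^{\gamma}P_{\nu,0}\in S(g^{-1})$ for $|\gamma|\geq 1$, and then do a power-counting on the Moyal expansion so that the single factor of $g$ coming from $H_0$ is always absorbed. Two remarks are worth making. First, the paper streamlines the combinatorics by observing that $H_{\nu}=P_{\nu}\#H\#P_{\nu}=P_{\nu}\#H$ modulo $S^{-\infty}$ (a consequence of $[P_{\nu},H]_{\#}\sim 0$ and $P_{\nu}\#P_{\nu}=P_{\nu}$), so one only has to expand a \emph{two}-factor Moyal product rather than the triple product you write; this halves the bookkeeping but changes nothing conceptually. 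Second, your intermediate claim $\partial^{\gamma}P_{\nu,0}\in S(g^{-|\gamma|})$ is too strong and in fact false under \textbf{(A2)}: differentiating the Riesz formula twice produces, among others, the term $\oint (z-H_0)^{-1}\,\partial^{2}H_0\,(z-H_0)^{-1}\,dz$, which is only $O(g^{-1})$ since $\partial^{2}H_0$ is merely bounded. The correct statement (and the one the paper uses) is $\partial^{\gamma}P_{\nu,0}\in S(g^{-1})$ for $|\gamma|\geq 1$. Fortunately your final power-counting only invokes ``at least one factor of $g^{-1}$ from the projector side'', which is exactly what the correct estimate provides, so your argument survives once this overclaim is weakened.
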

\begin{proof}
From the proof of Lemma \ref{smoothness}, one verify that by combining condition \eqref{Gap} and assumption \textbf{(A2)}, we get 
\begin{equation}\label{import step}
\Vert \partial_{(x,\xi)}^{\gamma} P_{\nu,0}(x,\xi) \Vert \leq C_{\gamma} g^{-1}(x,\xi), \quad \forall |\gamma|\geq 1.
\end{equation}
Thus, since $H_{\nu,0}=\lambda_{\nu}P_{\nu,0}$, then \eqref{bornitude sur les Hnuj} for $j=0$ follows immediately from \eqref{import step} and inequality \eqref{mm classe de symbole}.

Now, for $j\geq 1$, from the composition formula \eqref{développement asymptotique} we have 
\begin{eqnarray*}
H_{\nu,j} = (P_{\nu}\#H)_j&=& \sum_{|\alpha|+|\beta|+k+p=j} \gamma(\alpha,\beta) {P_{\nu,k}}^{(\beta)}_{(\alpha)} {H_p}_{(\beta)}^{(\alpha)} \\
&=& \sum_{|\alpha|+|\beta|+k=j} \gamma(\alpha,\beta) {P_{\nu,k}}^{(\beta)}_{(\alpha)} {H_0}_{(\beta)}^{(\alpha)} + \sum_{|\alpha|+|\beta|+k=j-1} \gamma(\alpha,\beta) {P_{\nu,k}}^{(\beta)}_{(\alpha)} {H_1}_{(\beta)}^{(\alpha)}.
\end{eqnarray*}
According to Lemma \ref{tilde pnuj smoo}, we have $P_{\nu,k}\in S(g^{-k})$, for all $k\geq 1$. Then, using \textbf{(A2)}, we obtain \eqref{bornitude sur les Hnuj} for all $j\geq 1$.
\end{proof}

\noindent 
Now, we are in position to prove Proposition \ref{est Gen}.

\noindent \textbf{Proof of Proposition \ref{est Gen} :} 

For $j=0$, estimate (\ref{AG0}) is a direct consequence of estimates (\ref{flowwnu}) and (\ref{est transport nu}).

Let us prove estimate \eqref{AGj}. In the following, when it is not precised, all constants $C_{\gamma}>0$ are uniform with respect to $t\in \mathbb R$ and $(x,\xi)\in \mathbb R^{2n}$.

We start by proving \eqref{AGj} for the derivatives of $\tilde{q}_{\nu,j}(t)$, $j\geq 1$, i.e.
\begin{equation}\label{AGj tilde}
\big\Vert  \partial_{(x,\xi)}^{\gamma} \tilde{q}_{\nu,j}(t,x,\xi) \big\Vert \leq C_{\gamma,\nu,j} \exp\bigg( \big(2|\gamma|+4j-2\big)\Gamma_{\nu}|t|\bigg), \quad \forall \gamma\in \mathbb N^{2n}.
\end{equation} 
We proceed by induction with respect to $j$. Recall the expression of $\tilde{q}_{\nu,1}(t)$

$$
\tilde{q}_{\nu,1}(t,x,\xi) = T_{\nu}^{-1}(t,x,\xi) \bigg( \tilde{Q}_{\nu,1}\big( \phi_{\nu}^t(x,\xi)\big) +  \int_0^t W_{\nu,1}(t,s,x,\xi) ds \bigg) T_{\nu}(t,x,\xi),
$$
where 
$$
W_{\nu,1}(t,s,x,\xi) = T_{\nu}^{-1} \big( -s,\phi_{\nu}^{t}(x,\xi) \big) K_{\nu,0}\big(s,\phi_{\nu}^{t-s}(x,\xi) \big) T_{\nu}\big(-s,\phi_{\nu}^t(x,\xi) \big)
$$
$$
K_{\nu,0}(t,x,\xi) = i \bigg( \big[ H_{\nu}(x,\xi;\h), A_{\nu,0}(t,x,\xi;\h) \big]_{\#}  \bigg)_{2} - \frac{d}{dt} \big( A_{\nu,0}(t,x,\xi;\h) \big)_1
$$
and
\begin{equation}\label{Anu0 est}
A_{\nu,0}(t,x,\xi;\h) = \big(P_{\nu}\# \tilde{q}_{\nu,0}(t) \# P_{\nu}\big)(x,\xi;\h).
\end{equation}

Let us estimating the derivatives of $K_{\nu,0}(t,x,\xi)$. Since $H_{\nu}\# P_{\nu} = P_{\nu}\#H_{\nu} = H_{\nu}$, it follows that 
$$
\big[ H_{\nu}, A_{\nu,0}(t) \big]_{\#} = P_{\nu}\# \big[ H_{\nu}, \tilde{q}_{\nu,0}(t) \big]_{\#}\# P_{\nu} .
$$
From this equation, using the composition formula \eqref{développement asymptotique}, we see that $\bigg(\big[ H_{\nu}, A_{\nu,0}(t) \big]_{\#}\bigg)_2$ is a finite linear combination of terms depending on the symbols $P_{\nu,k},H_{\nu,j}, \tilde{q}_{\nu,0}(t)$ and theirs derivatives with at most a derivative of order $2$ (with respect to $(x,\xi)$) of $\tilde{q}_{\nu,0}(t,x,\xi)$. The term $H_{\nu,0}$ appears only in the commutator $[H_{\nu,0},\tilde{q}_{\nu,0}(t)]$ which vanishes since $P_{\nu,0} \tilde{q}_{\nu,0}(t) P_{\nu,0} = \tilde{q}_{\nu,0}(t)$ (see \eqref{en tt temps}). Consequently, using estimate \eqref{AG0}, the fact that $P_{\nu,k}\in S(1)$ and Lemma \ref{needed lemma}, we obtain  
\begin{equation}\label{equ gty1}
\bigg\Vert\partial_{(x,\xi)}^{\gamma}  \bigg( [H_{\nu}, A_{\nu,0}(t)]_{\#}  \bigg)_{2} (x,\xi) \bigg\Vert  \leq  C_{\gamma} \exp\big((|\gamma|+2)\Gamma_{\nu}|t|\big), \quad \forall \gamma\in \mathbb N^{2n}.
\end{equation}
On the other hand, from \eqref{Anu0 est} we have 
$$
\frac{d}{dt} \big(A_{\nu,0}(t)\big)_1 =  \bigg(P_{\nu}\# \frac{d}{dt} \tilde{q}_{\nu,0}(t) \# P_{\nu} \bigg)_1.
$$
Since $\tilde{q}_{\nu,0}(t)$ satisfies equation \eqref{main construction transformation0}, i.e.
\begin{equation*}
\frac{d}{dt} \tilde{q}_{\nu,0}(t) = \{\lambda_{\nu}, \tilde{q}_{\nu,0}(t)\} + i [\tilde{H}_{\nu,1}, \tilde{q}_{\nu,0}(t)],
\end{equation*}
it follows from estimate \eqref{AG0} again, assumption \textbf{(A2)} and the fact that $\tilde{H}_{\nu,1}\in S(1)$ (see the proof of Lemma \ref{lemma est transport numu}) that for all $\gamma \in \mathbb N^{2n}$, there exists $C_{\gamma}>0$ independent of $t\in \mathbb R$ and $(x,\xi)\in \mathbb R^{2n}$ such that
\begin{equation}\label{equ gty2}
\bigg\Vert\partial_{(x,\xi)}^{\gamma}\bigg(\frac{d}{dt} \big(A_{\nu,0}(t)\big)_1\bigg) (x,\xi) \bigg\Vert  \leq  C_{\gamma} \exp\big((|\gamma|+2)\Gamma_{\nu}|t|\big).
\end{equation}
Putting together \eqref{equ gty1} and \eqref{equ gty2}, we obtain 
\begin{equation*}
\bigg\Vert\partial_{(x,\xi)}^{\gamma} K_{\nu,0}(t,x,\xi) \bigg\Vert  \leq  C_{\gamma} \exp\big((|\gamma|+2)\Gamma_{\nu}|t|\big), \quad \forall \gamma \in \mathbb N^{2n}.
\end{equation*}
As in the proof of Proposition \ref{u5}, using the above estimate, estimate \eqref{flowwnu} on the derivatives of the flow $\phi_{\nu}^t$, estimate \eqref{est transport nu} on the derivatives of $T_{\nu}(t,x,\xi)$ and the Fa\'a Di Bruno formula \eqref{Faa de Bruno}, we get
$$
\bigg\Vert \partial_{(x,\xi)}^{\gamma} \tilde{q}_{\nu,1}(t,x,\xi) \bigg\Vert \leq C_{\gamma} \exp\big((2|\gamma|+2)\Gamma_{\nu}|t|\big), \quad \forall \gamma \in \mathbb N^{2n}.
$$
Thus we proved that \eqref{AGj tilde} for $j=1$.

Let us now assume that $\tilde{q}_{\nu,k}(t,x,\xi)$ satisfies \eqref{AGj tilde} for $k\in \{1,...,r-1\}$. Recall the expression of $\tilde{q}_{\nu,r}(t)$
$$
\tilde{q}_{\nu,r}(t,x,\xi) = T_{\nu}^{-1}(t,x,\xi) \bigg( \tilde{Q}_{\nu,r}\big( \phi_{\nu}^t(x,\xi)\big) +  \int_0^t W_{\nu,r}(t,s,x,\xi) ds \bigg) T_{\nu}(t,x,\xi),
$$
where 
$$
W_{\nu,r}(t,s,x,\xi) = T_{\nu}^{-1} \big( -s,\phi_{\nu}^{t}(x,\xi) \big) K_{\nu,r-1}\big(s,\phi_{\nu}^{t-s}(x,\xi) \big) T_{\nu}\big(-s,\phi_{\nu}^t(x,\xi) \big)
$$
$$
K_{\nu,r-1}(t,x,\xi) = i \bigg( \big[ H_{\nu}(x,\xi;\h), A_{\nu,r-1}(t,x,\xi;\h) \big]_{\#}  \bigg)_{r+1} - \frac{d}{dt} \big( A_{\nu,r-1}(t,x,\xi;\h) \big)_r
$$
and
$$
A_{\nu,r-1}(t,x,\xi;\h) = \big(P_{\nu}\# \sum_{k=0}^{r-1}\h^k \tilde{q}_{\nu,k}(t) \# P_{\nu}\big)(x,\xi;\h).
$$
As above, we have 
$$
\big[ H_{\nu}, A_{\nu,r-1}(t) \big]_{\#} = P_{\nu}\# \big[ H_{\nu}, \sum_{k=0}^{r-1} \h^k \tilde{q}_{\nu,k}(t) \big]_{\#} \# P_{\nu}
$$
which yields 
\begin{align*}
\bigg( \big[ H_{\nu}, A_{\nu,r-1}(t) \big]_{\#}  \bigg)_{r+1} = \sum_{k=0}^{r-1} \bigg( P_{\nu}\# &\big[ H_{\nu}, \tilde{q}_{\nu,k}(t) \big]_{\#} \# P_{\nu} \bigg)_{r+1-k} .
\end{align*}
Again, using the composition formula \eqref{développement asymptotique}, we see that for all $k\in \{0,...,r-1\}$, $\bigg( P_{\nu}\# \big[ H_{\nu}, \tilde{q}_{\nu,k}(t) \big]_{\#} \# P_{\nu} \bigg)_{r+1-k}$ depends at most on a derivative of order $r+1-k$ of $\tilde{q}_{\nu,k}(t)$ (and on the derivatives of $H_{\nu,j}$ and $P_{\nu,l}$). Consequently, using the induction hypothesis, we get 
\begin{equation}\label{lasoma}
\bigg\Vert \partial_{(x,\xi)}^{\gamma} \bigg( \big[ H_{\nu}, A_{\nu,r-1}(t) \big]_{\#}  \bigg)_{r+1}(x,\xi) \bigg\Vert \leq C_{\gamma,r} \exp\big( (2|\gamma|+4r-2)\Gamma_{\nu}|t|  \big), \quad \forall \gamma\in \mathbb N^{2n}.
\end{equation}
Since $\displaystyle{\frac{d}{dt}} A_{\nu,r-1}(t)$ depends on $\displaystyle{\frac{d}{dt}} \tilde{q}_{\nu,k}(t)$, $k\in \{0,...,r-1\}$, which satisfy equations \eqref{main construction transformation j}, it follows that to estimate the derivatives with respect to $(x,\xi)$ of $\big( \displaystyle{\frac{d}{dt}} A_{\nu,r-1}(t) \big)_{r}$, one first needs estimates on the derivatives of $K_{\nu,k}(t)$ with $k\in \{0,...,r-2\}$. This can be made by induction on $k$ and we get that $\big(\displaystyle{\frac{d}{dt}} A_{\nu,r-1}(t)\big)_r$ satisfies estimate \eqref{lasoma}. Consequently, we obtain 
$$
\bigg\Vert \partial_{(x,\xi)}^{\gamma} K_{\nu,r-1}(t,x,\xi) \bigg\Vert \leq  C_{\gamma,r} \exp\big( (2|\gamma|+4r-2)\Gamma_{\nu}|t|  \big), \quad \forall \gamma\in \mathbb N^{2n}.
$$
We conclude as in the proof of Proposition \ref{u5} using estimates \eqref{flowwnu}, \eqref{est transport avec le flot nu} and Leibniz formula. Hence, 
\begin{equation}\label{estimations des symb tilde}
\big\Vert  \partial_{(x,\xi)}^{\gamma} \tilde{q}_{\nu,j}(t,x,\xi) \big\Vert \leq C_{\gamma,\nu,j} \exp \bigg( \big( 2|\gamma| + 4j -2 \big) \Gamma_{\nu} |t| \bigg), \quad \forall \gamma \in \mathbb R^{2n}, \forall j\geq 1,
\end{equation}
uniformly for $t\in \mathbb R$ and $(x,\xi)\in \mathbb R^{2n}$. This ends the proof of \eqref{AGj tilde}.

Turn now to the proof of estimate \eqref{AGj}. Let $j\geq 1$. According to the general form of the solution \eqref{general formula for the solution}, we have 
\begin{align*}
q_{\nu,j}(t,x,\xi) = \bigg( P_{\nu} \# \tilde{q}_{\nu,0}(t) \#P_{\nu} \bigg)_j(x,\xi) + \bigg( P_{\nu} \# \tilde{q}_{\nu,1}(t) \#P_{\nu} \bigg)_{j-1}(x,\xi) + \cdots + \bigg( P_{\nu} \# \tilde{q}_{\nu,j}(t) \#P_{\nu} \bigg)_0(x,\xi).
\end{align*}
By the composition formula \eqref{développement asymptotique}, each term $\bigg( P_{\nu} \# \tilde{q}_{\nu,k}(t) \#P_{\nu} \bigg)_{j-k}(x,\xi)$, $k\in \{0,...,j\}$, in the  above sum is a finite linear combination of terms depending on $P_{\nu,l}(x,\xi), \tilde{q}_{\nu,k}(t,x,\xi)$ and theirs derivatives (with respect to $(x,\xi)$) with at most a derivative of order $j-k$ of $\tilde{q}_{\nu,k}(t,x,\xi)$. Then, using \eqref{AGj tilde} and the fact that $P_{\nu,l}\in S(1)$ for all $l\geq 0$, we deduce that for all $1\leq k\leq j$ and $\gamma\in \mathbb N^{2n}$, we have
\begin{equation}
\big\Vert \partial_{(x,\xi)}^{\gamma}  \bigg( P_{\nu} \# \tilde{q}_{\nu,k}(t) \#P_{\nu} \bigg)_{j-k}(x,\xi) \big\Vert \leq C_{j,k,\gamma,\nu} \exp \bigg( \big( 2|\gamma| + 2(j+k) +2 \big) \Gamma_{\nu} |t|  \bigg).
\end{equation}
Taking the supremum over $k\in \{1,...,j\}$, we get \eqref{AGj}. This ends the proof of Proposition \ref{est Gen}.

\begin{flushright}
$\square$
\end{flushright}

\subsubsection{Proofs of Theorem \ref{main T} and Corollary \ref{sec coro}}
\textbf{Proof of Theorem \ref{main T} :}

The starting point is the same as in the proof of Theorem \ref{main T1}. Set 
$$
U_{H_{\nu}}(t) := e^{-\frac{it}{\h}H_{\nu}^w} = e^{-\frac{it}{\h}P_{\nu}^w H^w P_{\nu}^w}, \quad t\in \mathbb R.
$$
For $N\in \mathbb N$, let $Q_{\nu}^{(N)}(t)$ be the remainder term of order $N$ in the asymptotic expansion of $Q_{\nu}(t)$, i.e. 
$$
Q_{\nu}^{(N)}(t):=Q_{\nu}(t)-\sum_{j=0}^N \h^j \big(q_{\nu,j}(t)\big)^w(x,\h D_x).
$$
\begin{lemma}\label{rem est geni}
Fix $1\leq \nu \leq l$. For all $N\in \mathbb N$, the following estimate holds
\begin{eqnarray*}
\bigg\Vert Q_{\nu}^{(N)}(t)\bigg\Vert_{\mathcal{L}(L^2(\mathbb R^n)\otimes \mathbb C^m)} \leq \h^{N+1}  \bigg\Vert  \int_0^t U_{H_{\nu}}(-s) \big(R_{\nu}^{(N+1)}(t-s)\big)^w U_{H_{\nu}}(s) ds \bigg\Vert_{\mathcal{L}(L^2(\mathbb R^n)\otimes \mathbb C^m)} + \mathcal{O}(\h^{N+1}),
\end{eqnarray*}
uniformly for $t\in \mathbb R$, where 
\begin{equation}
R_{\nu}^{(N+1)}(t):= \tilde{R}_{N+1}(H_{\nu},q_{\nu,0}(t))+\tilde{R}_{N}(H_{\nu},q_{\nu,1}(t))+\cdots+\tilde{R}_{1}(H_{\nu},q_{\nu,N}(t)).
\end{equation}
We recall that the notation $\tilde{R}_{k}(A,B)$ is introduced in \eqref{la fonction tilde R}.
\end{lemma}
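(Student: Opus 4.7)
The plan is to mirror, for the block Hamiltonian $H_\nu^w$, the argument already carried out in the scalar-principal-symbol case (Lemma preceding the end of the proof of Theorem~\ref{main T1}). The key point is that, by construction, the symbols $q_{\nu,j}(t)$ are precisely the solutions of the Cauchy problems $(\mathcal{C}_{\nu,j})$ associated to $H_\nu$ (via the intermediate family $\tilde q_{\nu,k}(t)$ and formula \eqref{general formula for the solution}), and $Q_\nu(t)$ itself satisfies the Heisenberg equation \eqref{Heisenberg nu} with initial datum $Q_\nu^w$.

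First I would set $q_\nu^{(N)}(t,x,\xi;\h) := \sum_{j=0}^N \h^j q_{\nu,j}(t,x,\xi)$. Summing the Cauchy problems $(\mathcal{C}_{\nu,j})$ for $j=0,\dots,N$ and using the rule of asymptotic expansion of the Moyal bracket (as in \eqref{rMp}), a direct bookkeeping of powers of $\h$ gives the analogue of \eqref{canbe}, namely
\begin{equation*}
\{H_\nu, q_\nu^{(N)}(t)\}^* \;=\; \frac{d}{dt} q_\nu^{(N)}(t) \;+\; \h^{N+1} R_\nu^{(N+1)}(t),
\end{equation*}
with $R_\nu^{(N+1)}(t)$ as defined in the statement. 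This is the main identity.

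Next, following the Duhamel-type argument from Section~\ref{prf T1}, I would compute
\begin{equation*}
\frac{d}{ds}\Bigl( U_{H_\nu}(-s)\, Q_\nu^{(N)}(t-s)\, U_{H_\nu}(s)\Bigr) \;=\; U_{H_\nu}(-s)\Bigl(\tfrac{i}{\h}[H_\nu^w, Q_\nu^{(N)}(t-s)] - \tfrac{d}{dt}Q_\nu^{(N)}(t-s)\Bigr) U_{H_\nu}(s).
\end{equation*}
The Heisenberg equation \eqref{Heisenberg nu} shows that on the piece $Q_\nu(t-s)$ this bracket vanishes, while on the piece $\sum_{j=0}^N \h^j (q_{\nu,j}(t-s))^w$ the displayed identity above and the fact that Weyl quantization commutes with the differentiation in $t$ transform the content of the parentheses into exactly $\h^{N+1}(R_\nu^{(N+1)}(t-s))^w$. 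Integrating in $s$ from $0$ to $t$ and taking operator norms then yields
\begin{equation*}
\|Q_\nu^{(N)}(t)\|_{\mathcal L(L^2(\mathbb R^n)\otimes\mathbb C^m)} \;\leq\; \|Q_\nu^{(N)}(0)\|_{\mathcal L} + \h^{N+1}\Bigl\|\int_0^t U_{H_\nu}(-s)(R_\nu^{(N+1)}(t-s))^w U_{H_\nu}(s)\,ds\Bigr\|_{\mathcal L}.
\end{equation*}

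Finally, I must check that $\|Q_\nu^{(N)}(0)\|_{\mathcal L} = \mathcal O(\h^{N+1})$. By definition, $Q_\nu^{(N)}(0) = Q_\nu^w - \sum_{j=0}^N \h^j (q_{\nu,j}(0))^w$, where the initial symbols $q_{\nu,j}(0)$ are precisely the coefficients of $\h^j$ in the asymptotic expansion of the semiclassical symbol $Q_\nu = P_\nu\#Q\#P_\nu \in S_{\text{sc}}(1)$ (this is exactly the content of Lemma~\ref{Lemma initial condition} combined with formula \eqref{general formula for the solution}). Hence the difference is a semiclassical remainder of order $\h^{N+1}$ in $S(1)$, and the Calder\'on--Vaillancourt theorem (Theorem~\ref{Cal}) gives the desired $L^2$-bound. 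The only delicate step in the argument is the first one, i.e.\ verifying that the intermediate construction via the $\tilde q_{\nu,k}(t)$ really produces symbols $q_{\nu,j}(t)$ that solve $(\mathcal{C}_{\nu,j})$; this was established in Section~\ref{formal sc nu}, so it may just be cited here.
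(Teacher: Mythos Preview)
Your proposal is correct and follows exactly the approach the paper intends: the paper does not give a separate proof of this lemma but simply says ``The starting point is the same as in the proof of Theorem~\ref{main T1},'' and your argument is precisely the transposition to $H_\nu$ of the Duhamel computation carried out in the lemma of Section~\ref{prf T1}. The only cosmetic slip is a sign: the parenthesis actually equals $-\h^{N+1}(R_\nu^{(N+1)}(t-s))^w$, not $+\h^{N+1}(\cdots)$, but this is irrelevant once norms are taken.
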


For $N\in \mathbb N$, we set 
$$
Q^{(N)}(t) := Q(t) - \sum_{j=0}^N \h^j \sum_{\nu=1}^l \big(q_{\nu,j}(t)\big)^w(x,\h D_x).
$$
Using Lemma \ref{rem est geni} and Proposition \ref{proposition réduction} (i), we obtain 
\begin{align}
\big\Vert Q^{(N)}(t) \big\Vert_{\mathcal{L}(L^2(\mathbb R^n)\otimes \mathbb C^m)} &\leq   \sum_{\nu=1}^l \big\Vert Q_{\nu}^{(N)}(t) \big\Vert_{\mathcal{L}(L^2(\mathbb R^n)\otimes \mathbb C^m)} + \bigg\Vert Q(t) - \sum_{\nu= 1}^l Q_{\nu}(t) \bigg\Vert_{\mathcal{L}(L^2(\mathbb R^n)\otimes \mathbb C^m)}\label{go back}\\
& \leq   l\: \h^{N+1}  \sup_{1\leq \nu \leq l} \bigg\Vert \int_0^t U_{H_{\nu}}(-s) \bigg(R_{\nu}^{(N+1)}(t-s)\bigg)^w U_{H_{\nu}}(s) ds \bigg\Vert_{\mathcal{L}(L^2(\mathbb R^n)\otimes \mathbb C^m)} + \mathcal{O}(\h^{N+1}) \nonumber\\
& + \mathcal{O}\big( (1+|t|)\h^{\infty}\big), \nonumber
\end{align}
uniformly for $t\in \mathbb R$.

As in the end of the proof of Theorem \ref{main T1}, using the estimates on the symbols $q_{\nu,j}(t)$ given by Proposition \ref{est Gen}, Theorem \ref{est ress} and the Calder\'on-Vaillancourt theorem (Theorem \ref{Cal}), we prove the following estimate 
$$
\big\Vert  \big(R_{\nu}^{(N+1)}(t) \big)^w  (x,\h D_x;\h) \big\Vert_{\mathcal{L}(L^2(\mathbb R^n)\otimes \mathbb C^m)} \leq C_{\nu,n,N} \exp\bigg(  \big(4N + \tilde{\delta}_n \big) \Gamma_{\nu} |t|  \bigg),
$$
uniformly for $t\in \mathbb R$, where $\tilde{\delta}_n$ is an integer depending only on the dimension $n$. We conclude as in the end of the proof of Theorem \ref{main T1}.
\begin{flushright}
$\square$
\end{flushright}

\noindent
\textbf{Proof of Corollary \ref{sec coro} :}

Let $Q(x,\xi)\sim \sum_{j\geq 0}\h^j Q_j(x,\xi)$ in $S(1)$ and assume that there exists $\tilde{Q}\in S(1)$ such that 
$$
Q_0(x,\xi) = \sum_{\nu=1}^l P_{\nu,0}(x,\xi) \tilde{Q}(x,\xi) P_{\nu,0}(x,\xi).
$$
According to Proposition \ref{proposition réduction} (ii), we have 
$$
\bigg\Vert Q(t) - \sum_{\nu= 1}^l Q_{\nu}(t) \bigg\Vert_{\mathcal{L}(L^2(\mathbb R^n)\otimes \mathbb C^m)} = \mathcal{O}\big( (1+|t|) \h  \big), \quad \text{uniformly for}\; t\in \mathbb R.
$$
Thus by rewriting \eqref{go back} for $N=0$ and using Lemma \ref{rem est geni}, we get
\begin{align*}
\big\Vert Q^{(0)}(t) \big\Vert_{\mathcal{L}(L^2(\mathbb R^n)\otimes \mathbb C^m)} &\leq   \sum_{\nu=1}^l \big\Vert Q_{\nu}^{(0)}(t) \big\Vert_{\mathcal{L}(L^2(\mathbb R^n)\otimes \mathbb C^m)} +  \bigg\Vert Q(t) - \sum_{\nu= 1}^l Q_{\nu}(t) \bigg\Vert_{\mathcal{L}(L^2(\mathbb R^n)\otimes \mathbb C^m)}\\
&\leq   l\: \h  \sup_{1\leq \nu \leq l} \bigg\Vert \int_0^t U_{H_{\nu}}(-s) \bigg(R_{\nu}^{(1)}(t-s)\bigg)^w U_{H_{\nu}}(s) ds \bigg\Vert_{\mathcal{L}(L^2(\mathbb R^n)\otimes \mathbb C^m)} + \mathcal{O}(\h) + \mathcal{O}\big( (1+|t|) \h \big),
\end{align*}
uniformly for $t\in \mathbb R$. We conclude as above.

\begin{flushright}
$\square$
\end{flushright}
We end this section by the following remark concerning an application of the results of this paper.
\begin{remark}[Application] Consider the matrix semiclassical Schr\"odinger operator in $L^2(\mathbb R^n)\otimes \mathbb C^m$
\begin{equation}\label{schrodinger without crossings}
P(\h):=-\h^{2}\Delta \otimes I_m + V(x),
\end{equation}
where $V$ is a $(m\times m)$ hermitian-valued potential satisfying the following long-range assumption

\textbf{(S1).} There exists an hermitian matrix $V_{\infty}\in M_m(\mathbb C)$ and a constant $\delta>0$ such that for all $\alpha \in \mathbb N^n$,
$$
\big\Vert \partial_x^{\alpha} \big( V(x)-V_{\infty} \big) \big\Vert \leq C_{\alpha} \langle x \rangle^{-\delta-|\alpha|}, \quad \forall x\in \mathbb R^n. 
$$
The limiting absorption principle (see e.g. \cite{Gerard0}) ensures that the boundary values of the resolvent of $P(\h)$, 
$$
\big( P(\h)-(E\pm i0)\big)^{-1}:= \lim_{\varepsilon \searrow 0} \big( P(\h)-(E\pm i \varepsilon)\big)^{-1}
$$ 
exists as bounded operators form $L^2_s(\mathbb R^n)\otimes \mathbb C^m$ to $L^2_{-s}(\mathbb R^n)\otimes \mathbb C^m$ for any $s>\frac{1}{2}$ and $E$ outside the pure point spectrum of $P(\h)$. Here $L^2_s(\mathbb R^n)\otimes \mathbb C^m$ denotes the space of $\mathbb C^m$-valued functions defined on $\mathbb R^n$ such that $x\mapsto \langle x \rangle^{s}f(x)$ belongs to $L^2(\mathbb R^n)\otimes \mathbb C^m$. 

In the scalar case, i.e. when $m=1$, (resp. the matrix-valued case without crossings eigenvalues), a well known result is a bounds $\mathcal{O}(\h^{-1})$ on these boundary values near non-trapping energies for the Hamiltonian $p(x,\xi):=|\xi|^2+V(x)$ (resp. for the eigenvalues of $p(x,\xi):=|\xi|^2 I_m + V(x)$). We refer to \cite{Rob2} (resp. \cite{Jecko}) for the proofs of these results. In the case of trapped energies, using the results of Bouzouina-Robert \cite{Rob}, Bony, Burq and Ramond \cite{bony} proved a lower bound of the type $\h^{-1}\log(\h^{-1})$ on the boundary values of the resolvent of scalar Schr\"odinger operators. According to the remark after Theorem 2 in \cite{bony} and our main results (Theorem \ref{main T} and Corollary \ref{second coro}), we can obtain the same lower bound for the boundary values of the operator (\ref{schrodinger without crossings}). The detailed proof will appear elsewhere. 

\end{remark}


\appendix

\section{Review of semiclassical pseudodifferential calculus for matrix valued symbols}\label{semi-classical background}
In this section we recall some notions and results about the semiclassical pseudodifferential calculus in the context of operators with matrix-valued symbols. These results are well known in the case of scalar-valued symbols and we refer to \cite[ch. 7-9]{dim} and \cite[ch. 4]{Zwo} for more details.

The set of Weyl operators with symbols in the classes $S(g)$ introduced in section \ref{sec} is stable under the operator multiplication. Let $\sigma$ be the canonical symplectic form on $\mathbb R^{2n}$
\begin{equation}\label{symplectic form}
\sigma(x,\xi;y,\zeta):= \langle J(x,\xi),(y,\zeta)\rangle, \quad J:=\begin{pmatrix}
0 & I_n \\
-I_n & 0
\end{pmatrix}, \quad \forall (x,\xi,y,\zeta) \in \mathbb R^{4n}.
\end{equation}
More precisely, we have the following well known result (see \cite{Rob1, Zwo})
\begin{theorem}
Let $g_1,g_2$ be two order functions on $\mathbb R^{2n}$. The map 
$$
\begin{array}{rcl}
S(g_1)\times S(g_2) & \longrightarrow &  S(g_1 g_2) \\
(P,Q)  &\longmapsto & P\# Q  \\
\end{array}
$$ 
where $P \# Q$ is defined by :
\begin{equation}\label{definition}
P \# Q (x,\xi):=e^{\frac{i\h}{2}\sigma(D_x,D_{\xi};D_{y},D_{\eta})}\big(P(x,\xi)Q(y,\eta)\big)_{|(x,\xi)=(y,\eta)},
\end{equation}
is a bilinear continuous map in the topology generated by the semi-norms associated to (\ref{seminorms}) and we have 
$$
(P \# Q)^w(x,\h D_x)=P^w(x,\h D_x) \circ Q^w(x,\h D_x),
$$
as operators mapping $\mathscr{S}(\mathbb R^{n})\otimes \mathbb C^m$ to $\mathscr{S}(\mathbb R^{n})\otimes \mathbb C^m$. The symbol $P\#Q$, called the Moyal product of $P,Q$, admits the following asymptotic expansion in powers of $\h$
\begin{equation}\label{dév asy annexe}
P\# Q(x,\xi) \sim \sum_{j\geq 0} \frac{h^j}{\fact{j}} \big(\frac{i}{2}\sigma(D_x,D_{\xi};D_y,D_{\eta})\big)^j \big( P(x,\xi)Q(y,\eta)\big)_{|(x,\xi)=(y,\eta)} \quad \text{in}\;\; S(g_1g_2).
\end{equation}
Furthermore, if $P(x,\xi;\h)\sim \sum_{j\geq 0}\h^j P_j(x,\xi)$ in $S(g_1)$ and $Q(x,\xi;\h)\sim \sum_{j \geq 0}\h^j Q_j(x,\xi)$ in $S(g_2)$ are two semiclassical symbols, then $P\#Q$ is again a semiclassical symbol and we have    
\begin{equation}\label{product rule}
P\#Q(x,\xi;\h) \sim \sum_{j\geq 0} h^j (P\#Q)_j(x,\xi) \quad \text{in}\;\; S(g_1g_2),
\end{equation}
where for all $j\geq 0$,
\begin{equation}\label{développement asymptotique}
(P\#Q)_j(x,\xi) :=  \sum_{| \alpha | + | \beta | + k + l = j} \gamma(\alpha,\beta) {P_k}_{(\alpha)}^{(\beta)}(x,\xi) {Q_l}_{(\beta)}^{(\alpha)}(x,\xi),
\end{equation}
with $\gamma(\alpha,\beta):=\displaystyle{\frac{(-1)^{| \beta |}}{(2i)^{| \alpha | + | \beta |}\fact{\alpha}\fact{\beta}} }$. In particular, the principal symbol and the sub-principal symbol of $P\#Q$ are respectively given by 
\begin{equation}\label{principal and sub-principal symbol}
(P\#Q)_0=P_0Q_0, \quad (P\#Q)_1= \frac{1}{2i}\{P_0,Q_0\}+P_0Q_1+P_1Q_0.
\end{equation}
\end{theorem}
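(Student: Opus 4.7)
The plan is to reduce the matrix composition to the scalar case, which is classical, while being meticulous about the non-commutativity of matrix multiplication (this is where all the sign factors $(-1)^{|\beta|}$ in $\gamma(\alpha,\beta)$ come from). First I would compute the distributional kernel $K_{PQ}(x,y;\hbar)$ of $P^w(x,\hbar D_x) \circ Q^w(x,\hbar D_x)$ acting on $\mathscr{S}(\mathbb R^n)\otimes \mathbb C^m$ by inserting the Weyl integral formula \eqref{Weyl} twice and using Fubini, which is justified once we regularize $P$ and $Q$ by cutoffs (and pass to the limit using the $S(g_i)$-estimates). A standard change of variables then rewrites $K_{PQ}$ as the kernel of a Weyl operator with matrix symbol
\begin{equation*}
P\#Q(x,\xi) = \frac{1}{(\pi \hbar)^{2n}} \iint e^{\frac{2i}{\hbar}\sigma(y,\eta;z,\zeta)} P(x+y,\xi+\eta)\, Q(x+z,\xi+\zeta)\, dy\, d\eta\, dz\, d\zeta,
\end{equation*}
where, crucially, $PQ$ denotes the matrix product. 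This oscillatory integral coincides formally with the compact expression \eqref{definition}, and the fact that the product appears in the fixed order $PQ$ is the source of non-commutativity.

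Next I would prove that $P\#Q \in S(g_1 g_2)$ together with continuity of the bilinear map. Since the operator norm on $M_m(\mathbb C)$ is equivalent to the supremum of the moduli of the entries, it suffices to apply the scalar result of \cite[ch. 7]{dim} to each entry $(P\#Q)_{ij} = \sum_k P_{ik}\#Q_{kj}$, and sum. The order-function axiom gives $g_1(v) g_2(v) \leq C \langle v-w\rangle^N g_1(w) g_2(w)$ which plugs directly into the scalar estimates; matrix bounds on derivatives follow by Leibniz. This step is routine — the only care needed is that we never permute matrix factors.

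To obtain the asymptotic expansion \eqref{dév asy annexe}, I would perform Taylor expansion of $P(x+y,\xi+\eta) Q(x+z,\xi+\zeta)$ simultaneously in $(y,\eta,z,\zeta)$, multiply out the matrix product monomial by monomial keeping the factor $P_{\cdots}$ to the \emph{left} of $Q_{\cdots}$, and then perform the Gaussian oscillatory integrals against $e^{\frac{2i}{\hbar}\sigma(y,\eta;z,\zeta)}$. Each such Gaussian integral is explicit (after diagonalising the symplectic form) and produces a factor $\hbar^{|\alpha|+|\beta|}$ times $\partial_y^\alpha \partial_\eta^\beta \partial_z^\beta \partial_\zeta^\alpha / (2i)^{|\alpha|+|\beta|} \fact{\alpha} \fact{\beta}$ evaluated at $0$. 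The derivatives falling on $P$ are $\partial_y^\alpha \partial_\eta^\beta = \partial_x^\alpha \partial_\xi^\beta$ giving $P_{(\alpha)}^{(\beta)}$, while those falling on $Q$ are $\partial_z^\beta \partial_\zeta^\alpha = \partial_x^\beta \partial_\xi^\alpha$ giving $Q_{(\beta)}^{(\alpha)}$; the sign $(-1)^{|\beta|}$ in $\gamma(\alpha,\beta)$ comes from the sign convention in $\sigma$. Writing $\alpha$ and $\beta$ for the multi-indices acting on the two slots of the symplectic form and summing over all monomials of total order $j$ yields the coefficient \eqref{développement asymptotique} of $\hbar^j$ for the $\hbar$-independent case.

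The remainder estimate needed to turn this formal computation into an asymptotic expansion in $S(g_1 g_2)$ is the standard Taylor-with-integral-remainder bound combined with the non-stationary phase / integration-by-parts arguments of the scalar theory, applied entry-wise. Finally, for semiclassical symbols $P\sim \sum \hbar^j P_j$ and $Q\sim \sum \hbar^j Q_j$, one inserts the expansions into \eqref{dév asy annexe} and reorders the double sum by total degree in $\hbar$ (accounting both for powers $\hbar^{|\alpha|+|\beta|}$ from the Gaussian integrals and powers $\hbar^{k+l}$ from the semiclassical expansions) to reach \eqref{développement asymptotique}. Specialising to $j=0$ and $j=1$ gives \eqref{principal and sub-principal symbol} directly. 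The main obstacle is bookkeeping rather than analysis: one must verify that the matrix factors $P_k$ and $Q_l$ appear in the correct order in every term, since the scalar proof papers this over; once this is done, all analytic estimates reduce to the scalar $S(g)$ calculus of \cite[ch. 7]{dim}.
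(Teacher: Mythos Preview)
Your proposal is a reasonable sketch of the standard proof, but you should be aware that the paper does not actually prove this theorem. It is stated in the appendix as a ``well known result'' with references to \cite{Rob1, Zwo}, and no argument is given beyond the citation. So there is nothing to compare your approach against: the paper simply imports the result from the literature.

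That said, your outline is essentially the textbook route (kernel computation, oscillatory integral representation, Taylor expansion plus stationary phase, entry-wise reduction to the scalar calculus), and your emphasis on preserving the order of the matrix factors $P_k Q_l$ throughout is exactly the point that distinguishes the matrix-valued statement from the scalar one. If you were writing this up in full, the only place to be slightly more careful is the remainder estimate: the entry-wise reduction gives you membership in $S(g_1 g_2)$ with constants depending on $m$, which is fine, but the uniform-in-$\hbar$ bound on the Taylor remainder (needed for the asymptotic expansion in the sense of \eqref{definition du dev}) deserves an explicit statement rather than a pointer to ``the scalar theory''. Otherwise the argument is sound.
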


In the following remark we collect some useful identities which can be easily computed using \eqref{principal and sub-principal symbol}.
\begin{remark}
We recall that the Moyal commutator $[P,Q]_{\#}$ of $P$ and $Q$ is defined as $[P,Q]_{\#}:= P \# Q - Q\# P$. For $P\sim \sum_{j\geq 0}\h^j P_j$, $Q\sim \sum_{j\geq 0} \h^j Q_j$ and $C\sim \sum_{j\geq 0}\h^j C_j$ three semiclassical matrix-valued symbols, we have 
\begin{equation}\label{deduc1}
\big( [P,Q]_{\#} \big)_0 = [P_0,Q_0], \quad \big([P,Q]_{\#}\big)_1 = \frac{1}{2i} \big( \{P_0,Q_0\} - \{Q_0, P_0\} \big) + [P_0,Q_1] + [P_1,Q_0].
\end{equation}
\begin{equation}\label{deduc2}
(P\#Q\#C)_0 = P_0 Q_0 C_0, \quad (P\#Q\#C)_1 = \frac{1}{2i} \{P_0Q_0,C_0\} + P_0Q_0 C_1 + \frac{1}{2i} \{P_0,Q_0\} C_0 + P_0 Q_1 C_0 + P_1 Q_0 C_0.
\end{equation}
\end{remark}
\subsection*{The Moyal bracket}
Let $P\sim \sum_{j\geq 0}\h^j P_j$ in $S(g_1)$ and $Q\sim \sum_{j\geq 0} \h^j Q_j$ in $S(g_2)$ be two matrix-valued semiclassical symbols. The Moyal bracket of $P,Q$ denoted $\{P,Q\}^*$ is defined as the Weyl symbol of $\frac{i}{\h}[P^w,Q^w]$. By means of the Moyal product it can be written as 
$$
\{P,Q\}^*:=\frac{i}{\h}[P,Q]_{\#}=\frac{i}{\h}(P\#Q-Q\#P).
$$
If the principal symbols $P_0$ and $Q_0$ commute, i.e. if $[P_0,Q_0]=0$, then using the rule of asymptotic expansion of the Moyal product of symbols (formula \ref{product rule}), one can expand $\{P,Q\}^*$ in a power series of $\h$ and gets
\begin{equation}\label{rMp}
\{P,Q\}^*\sim \sum_{j\geq 0} \h^j \{P,Q\}^*_j \quad \text{in}\;\; S(g_1g_2),
\end{equation}
with $\{P,Q\}^*_j=i \big( [P,Q]_{\#}\big)_{j+1}=i \big( (P\#Q)_{j+1}-(Q\#P)_{j+1} \big)$, for all $j\geq 0$. 

Let $N\geq 1$. The remainder term of order $N-1$ in the asymptotic expansion (\ref{rMp}) can be expressed by means of the remainder terms in the asymptotic expansions of $P\#Q$ and $Q\#P$. More precisely, we have 
\begin{equation}\label{asy Moyal bracket} 
\{P,Q\}^*-\sum_{j=0}^{N-1}\h^j \{P,Q\}^*_{j}=i\h^{-1}(R_N(P,Q)-R_N(Q,P)),
\end{equation}
where 
\begin{equation}\label{remainder termcv}
R_{N}(P,Q;x,\xi;\h):=P \# Q(x,\xi)-\sum_{j=0}^N \h^j (P\#Q)_j(x,\xi)
\end{equation}
denotes the remainder term of order $N$ in the asymptotic expansion of $P\#Q$.

\subsection*{Remainder estimate in the composition formula} In \cite[Theorem A.1]{Rob}, Bouzouina and Robert established the following estimate on the derivatives of $R_N(P,Q)$ in the case of scalar-valued symbols. This result remains true without any change in the case of matrix-valued symbols.
\begin{theorem}\label{est ress}
There exists a constant $K_n>0$ such that for every integer $\kappa\geq 4n$ and every $s>4n$, there exists $\tau_{n,\kappa,s}>0$ such that for every $P,Q\in \mathscr{S}(\mathbb R^{2n})\otimes M_m(\mathbb C)$ we have :

For every $N\geq 1$ and every $\gamma \in \mathbb N^{2n}$, the following estimate holds for every $u\in \mathbb R^{2n}$
 \begin{align}\label{ress}
\Vert \partial_{u}^{\gamma}R_{N}(P,Q;u;\h) \Vert \leq \h^{N+1} &\tau_{n,\kappa,s} K_{n}^{N+|\gamma|}(\fact{N})^{-1} \nonumber \\
& \times \sup_{\substack{v,w \in \mathbb R^{2n}\\\mu,\nu\in \mathbb N^{2n};|\mu|+|\nu|\leq \kappa +|\gamma|\\\alpha,\beta \in \mathbb N^n;|\alpha|+|\beta|=N+1}} \bigg( \langle (v,w) \rangle^{s-\kappa} \big\Vert \partial_{v}^{(\alpha,\beta)+\mu}P(v+u)\big\Vert  \big\Vert \partial_{w}^{(\beta,\alpha)+\nu}Q(w+u) \big\Vert \bigg).
\end{align}
\end{theorem}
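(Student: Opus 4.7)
The plan is to follow Bouzouina-Robert's argument \cite[Theorem A.1]{Rob} and observe that it extends \emph{verbatim} to matrix-valued symbols, since every step involves only pointwise estimates in operator norm together with the submultiplicative bound $\|AB\|\leq\|A\|\|B\|$ for $A,B\in M_m(\mathbb C)$. The starting point is the oscillatory integral representation of the Moyal product
\[
(P\# Q)(u;\h) = \frac{1}{(\pi\h)^{2n}}\iint_{\mathbb R^{4n}} e^{-\frac{2i}{\h}\sigma(v,w)}\, P(u+v)\,Q(u+w)\,dv\,dw,
\]
valid for $P,Q\in\mathscr{S}(\mathbb R^{2n})\otimes M_m(\mathbb C)$. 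I would apply the Taylor formula with integral remainder at order $N+1$ jointly in $(v,w)$ to split $P(u+v)\,Q(u+w)$ into a polynomial part of total degree at most $N$ in $(v,w)$ plus a remainder $\mathcal R_N(u,v,w)$ carrying the weight $(1-t)^N/N!$. Substituting the polynomial part into the integral and integrating by parts against the phase, using $\partial_{v_j} e^{-\frac{2i}{\h}\sigma(v,w)} = -\frac{2i}{\h}(Jw)_j\, e^{-\frac{2i}{\h}\sigma(v,w)}$ and its analog in $w$, reproduces exactly $\sum_{j=0}^N \h^j (P\# Q)_j(u)$ according to formula \eqref{développement asymptotique}. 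Hence
\[
R_N(P,Q)(u;\h) = \frac{1}{(\pi\h)^{2n}}\iint_{\mathbb R^{4n}} e^{-\frac{2i}{\h}\sigma(v,w)}\,\mathcal R_N(u,v,w)\, dv\, dw,
\]
and the factor $(N!)^{-1}$ in the final estimate comes from the Taylor remainder.

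Next I would perform the same integration by parts on the remaining monomials $v^{\alpha_v} w^{\alpha_w}$ with $|\alpha_v|+|\alpha_w|=N+1$ appearing in $\mathcal R_N$. Because multiplication by $v^{\alpha_v}$ becomes a $\partial_w$ derivative (up to the symplectic matrix $J$) and vice versa, this produces the crossed multi-index structure $\partial^{(\alpha,\beta)}P$ paired with $\partial^{(\beta,\alpha)}Q$ appearing in the statement, and accounts for the prefactor $\h^{N+1}$. The outer derivatives $\partial_u^\gamma$ are moved under the integral and distributed by Leibniz's rule between the factors $P(u+tv)$ and $Q(u+tw)$ (using that $\partial_u$ acts as $\partial_v$, respectively $\partial_w$, on the composed functions), producing additional derivative orders whose multiplicities contribute to the geometric factor $K_n^{|\gamma|}$.

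To make the oscillatory integral absolutely convergent and to extract the weight $\langle(v,w)\rangle^{s-\kappa}$ in the statement, I would multiply and divide by $\langle(v,w)\rangle^{s}$, peel off the factor $\langle(v,w)\rangle^{s-\kappa}$ for the supremum, and integrate by parts $\kappa$ further times using an operator of the form $\mathcal L = \langle(v,w)\rangle^{-2}\,(1+c\,J\nabla_{(v,w)})$ adapted to the phase, which preserves $e^{-2i\sigma(v,w)/\h}$ but gains polynomial decay in $(v,w)$. The cost of each integration by parts is at most one derivative on $P$ or $Q$, giving the range $|\mu|+|\nu|\leq \kappa+|\gamma|$ in the supremum. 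After bounding the integrand by its supremum in $(v,w)$, the resulting weighted integral of $\langle(v,w)\rangle^{s-\kappa}$ over $\mathbb R^{4n}$ converges as soon as $s-\kappa < -4n$, i.e.\ $\kappa\geq 4n$ and $s>4n$, and produces the constant $\tau_{n,\kappa,s}$.

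The main obstacle lies in the careful bookkeeping of the constants $K_n^{N+|\gamma|}(N!)^{-1}$. One must verify that, although the number of multi-indices $(\alpha,\beta)$ with $|\alpha|+|\beta|=N+1$ grows factorially, the combination of the Taylor factor $(N!)^{-1}$ and the combinatorics of the Leibniz expansion keeps the overall constant controlled by a geometric factor $K_n^{N+|\gamma|}$ depending only on the dimension. This is a classical but somewhat delicate estimate; once it is established, the matrix-valued extension is automatic, since all computations reduce to products of entries in $M_m(\mathbb C)$ controlled by $\|AB\|\leq \|A\|\|B\|$, with no new analytic ingredient beyond the scalar case.
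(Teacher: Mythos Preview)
Your proposal is correct and aligns exactly with the paper's own treatment: the paper does not give an independent proof of this theorem but simply cites \cite[Theorem~A.1]{Rob} and remarks that ``this result remains true without any change in the case of matrix-valued symbols.'' Your sketch of the Bouzouina--Robert argument (Taylor remainder in the oscillatory integral, integration by parts against the symplectic phase, regularization via $\langle(v,w)\rangle$-weights, and the observation that only the submultiplicativity $\|AB\|\leq\|A\|\|B\|$ is needed for the matrix-valued extension) in fact goes beyond what the paper supplies.
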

\begin{remark}\label{extension}
As it was shown in \cite{Rob}, using the fact that $\mathscr{S}(\mathbb R^{2n})\otimes M_m(\mathbb C)$ is dense in $S(\langle u \rangle^a; \mathbb R^{2n}, M_m(\mathbb C) )$, $a\in \mathbb R$, for the topology of the Fr\'echet spaces $S(\langle u \rangle^{a+\varepsilon}; \mathbb R^{2n}, M_m(\mathbb C))$, for all $\varepsilon>0$, Theorem \ref{est ress} can be extended to symbols $P\in S(\langle u \rangle^a; \mathbb R^{2n}, M_m(\mathbb C))$ and $Q\in S(\langle u \rangle^b; \mathbb R^{2n}, M_m(\mathbb C))$, with $a,b\in \mathbb R$ such that $\kappa-s\geq a+b$ to get a finite right hand side in (\ref{ress}).
\end{remark}
We end this background by the following well known result (see \cite[ch. 4]{Zwo}).
\begin{theorem}[Calder\'on-Vaillancourt]\label{Cal}
There exists an integer $k_n$ and a constant $C_n>0$ such that if $Q\in S(1)$ then $Q^w(x,hD_x;\h) : L^{2}(\mathbb R^n)\otimes \mathbb C^m \rightarrow L^{2}(\mathbb R^n)\otimes \mathbb C^m$ is bounded and we have 
\begin{equation}\label{improv boulk}
\big\Vert Q^w(x,hD_x;\h) \big\Vert_{\mathcal{L}(L^{2}(\mathbb R^n)\otimes \mathbb C^m)}\leq C_n \sup_{|\alpha|+|\beta|\leq k_n} \h^{\frac{|\alpha|+|\beta|}{2}} \big\Vert\partial_{\xi}^{\alpha}\partial_x^{\beta}Q \big\Vert_{L^{\infty}(\mathbb R^{2n})}.
\end{equation}
\end{theorem}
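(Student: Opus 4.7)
The plan is to reduce the matrix-valued statement to the scalar one and then apply a Cotlar-Stein argument on a phase-space partition of unity at scale $\sqrt{\h}$. First I would write $Q=(Q_{ij})_{1\le i,j\le m}$ with scalar entries $Q_{ij}\in S(1)$. Since $Q^w$ acts componentwise on $L^2(\mathbb R^n)\otimes \mathbb C^m$, the equivalence of norms on $\mathbb C^m$ reduces \eqref{improv boulk} to the analogous scalar estimate for each $Q_{ij}$, with a constant multiplied by $m$ and the same integer $k_n$. The matrix structure plays no further role.

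For scalar $Q\in S(1)$, I would fix $\chi\in C_c^{\infty}(\mathbb R^{2n})$ with $\sum_{j\in \mathbb Z^{2n}}\chi(\,\cdot\,-j)\equiv 1$ and set $Q_j(z):=\chi(\h^{-1/2}z-j)\,Q(z)$, so that $Q=\sum_j Q_j$. A translation by $\sqrt{\h}\,j$ in phase space, which is implemented on $L^2(\mathbb R^n)$ by a unitary metaplectic operator, composed with the symplectic dilation $z\mapsto \sqrt{\h}\,z$, conjugates $Q_j^w$ unitarily to the $\h=1$ Weyl quantization of the $\h$-independent-looking symbol $\widetilde Q_j(z):=(\chi Q)(\sqrt{\h}\,j+\sqrt{\h}\,z)$. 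The derivatives of $\widetilde Q_j$ are bounded by $\h^{|\gamma|/2}\Vert \partial^{\gamma}Q\Vert_{L^{\infty}}$, so a Schur-test bound on compactly supported smooth symbols furnishes $\Vert Q_j^w\Vert_{\mathcal L(L^2)}\le C\, \Vert Q\Vert_{k_n}$ uniformly in $j$ and $\h$, where $\Vert\cdot\Vert_{k_n}$ denotes the weighted seminorm appearing on the right of \eqref{improv boulk}.

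The decisive step is off-diagonal decay. By \eqref{definition}, the Weyl symbols of $Q_j^w(Q_k^w)^{\ast}$ and $(Q_j^w)^{\ast}Q_k^w$ are the Moyal products $Q_j\#\overline{Q_k}$ and $\overline{Q_j}\#Q_k$, written as oscillatory integrals in whose phase $\sigma(D_x,D_\xi;D_y,D_\eta)/\h$ the factors are supported at phase-space distance $|j-k|\sqrt{\h}$ when $|j-k|$ is large. Repeated integration by parts in this phase converts the separation into a factor $\langle j-k\rangle^{-2M}$ for $M$ arbitrarily large, each integration costing one derivative of $Q$ weighted by exactly one power of $\sqrt{\h}$, which is precisely the weight carried by the seminorms in \eqref{improv boulk}. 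Re-applying the local bound to the resulting symbols yields
\begin{equation*}
\Vert Q_j^w(Q_k^w)^{\ast}\Vert_{\mathcal L(L^2)}+\Vert (Q_j^w)^{\ast}Q_k^w\Vert_{\mathcal L(L^2)}\le C_M\,\langle j-k\rangle^{-2M}\,\Vert Q\Vert^2_{k_n}.
\end{equation*}
Choosing $M$ so that $\sup_j\sum_k\langle j-k\rangle^{-M}<\infty$ and invoking the Cotlar-Stein lemma gives $\Vert Q^w\Vert_{\mathcal L(L^2)}\le C\,\Vert Q\Vert_{k_n}$, which is \eqref{improv boulk}.

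The main obstacle, and the reason a definite value of $k_n$ depending only on $n$ must be tracked, is the careful interplay between the $\sqrt{\h}$ scaling in the local Schur bound and the number of integrations by parts needed in the off-diagonal estimate: each integration by parts must be cheap enough in $\h$ not to spoil the summability over $(j,k)$ yet produce enough decay for Cotlar-Stein to apply. Once this balance is in place, the $\h^{(|\alpha|+|\beta|)/2}$ weights in \eqref{improv boulk} emerge automatically from the $\sqrt{\h}$-scaling of the partition, and the extension to matrix-valued symbols follows from the first reduction.
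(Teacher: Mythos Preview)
The paper does not prove this theorem: it is stated in Appendix~\ref{semi-classical background} as a well-known result with a reference to \cite[ch.~4]{Zwo}, and no argument is given. There is therefore nothing in the paper itself to compare your proposal against.

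That said, your sketch follows precisely the standard route found in the cited reference (Zworski) and in Dimassi--Sj\"ostrand: reduction to the scalar case, a phase-space partition of unity at scale $\sqrt{\h}$, rescaling each piece to the $\h=1$ calculus, and then Cotlar--Stein with off-diagonal decay obtained by integration by parts in the Moyal kernel. The reduction from matrix-valued to scalar symbols is trivially correct since Weyl quantization acts entrywise. Your outline is faithful to the standard proof and the $\h^{(|\alpha|+|\beta|)/2}$ weights do indeed arise from the $\sqrt{\h}$-rescaling exactly as you describe. One minor point: in the off-diagonal step you should be slightly more careful about which seminorm order $k_n$ is actually needed---the number of derivatives consumed by the local Schur bound plus the $M$ integrations by parts (with $M>2n$ for summability over $\mathbb Z^{2n}$) together determine $k_n$, and tracking this explicitly is what pins down a definite integer depending only on $n$. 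But this is bookkeeping, not a gap.
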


\section{Cauchy problem}\label{Ann Cauchy problem general}

Let $\Lambda\in C^{\infty}(\mathbb R^{2n};\mathbb R)$, $A\in C^{\infty}(\mathbb R^{2n}) \otimes M_m(\mathbb C)$ hermitian-valued and $B\in C^{\infty}(\mathbb R\times \mathbb R^{2n}) \otimes M_m(\mathbb C)$. In this paragraph, we give the general solution of the following Cauchy problem
\begin{equation}\label{Cauchy problem general}
\left\{   
\begin{array}{rcl}
\displaystyle{\frac{d}{dt}} \psi(t,x,\xi) &=& \big\{ \Lambda,\psi(t,\cdot,\cdot)\big\}(x,\xi) + i [A(x,\xi),\psi(t,x,\xi)]+B(t,x,\xi) \\
 \psi(t,x,\xi)_{|t=0} &=& \psi_0(x,\xi),
\end{array}
\right.
\end{equation}
which arises when we solve the Cauchy problems \eqref{Cj} and \eqref{pbm Cauchy tildej} in sections \ref{Sc} and \ref{Generalization}, respectively. We assume that the flow $\phi_{\Lambda}^t(x,\xi)$ exists globally on $\mathbb R$ for all $(x,\xi)\in \mathbb R^{2n}$ since it is the case for $\phi_{\lambda}^t$ and $\phi_{\nu}^t$ (see section \ref{sec}). 

We introduce the $(m\times m)$ matrix-valued function $T$ solution of the following system
\begin{equation}\label{equ1tr}
\frac{d}{dt} T(t,x,\xi)= -i A\big(\phi_{\Lambda}^t(x,\xi)\big) T(t,x,\xi), \quad T(0,x,\xi)=I_m.
\end{equation}
The following lemma was proved in \cite[Proposition 4]{Brum}.
\begin{lemma}\label{trbrumm}
The matrix $T(t,x,\xi)$ is unitary and we have 
\begin{equation}\label{property unitary mat}
T\big( -t,\phi_{\Lambda}^t(x,\xi)\big) = T^{-1}(t,x,\xi), \quad \forall t\in \mathbb R, (x,\xi)\in \mathbb R^{2n}.
\end{equation}
\end{lemma}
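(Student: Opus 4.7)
The plan is to prove the two assertions in sequence, both via standard linear ODE arguments.

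For unitarity, I would consider the matrix-valued function $U(t,x,\xi) := T^*(t,x,\xi) T(t,x,\xi)$ and compute its time derivative. Using \eqref{equ1tr} and the fact that $A(\phi_\Lambda^t(x,\xi))$ is hermitian, one gets
$$
\frac{d}{dt} U(t,x,\xi) = \bigl(i T^* A(\phi_\Lambda^t)\bigr) T + T^* \bigl(-i A(\phi_\Lambda^t) T\bigr) = 0.
$$
Combined with $U(0,x,\xi) = I_m$, this yields $T^*T = I_m$ for every $t \in \mathbb R$ and every $(x,\xi)\in \mathbb R^{2n}$. In particular $T(t,x,\xi)$ is invertible with $T^{-1}(t,x,\xi) = T^*(t,x,\xi)$, which is the unitarity claim.

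For the identity \eqref{property unitary mat}, I would fix $(x,\xi)$ and compare two matrix-valued curves in $s$. On one hand, set $R_1(s) := T(s, \phi_\Lambda^t(x,\xi))$; directly from \eqref{equ1tr} and the group property of the Hamiltonian flow $\phi_\Lambda^{s}\circ \phi_\Lambda^{t} = \phi_\Lambda^{s+t}$, the function $R_1$ satisfies
$$
\frac{d}{ds} R_1(s) = -i A\bigl(\phi_\Lambda^{s+t}(x,\xi)\bigr) R_1(s), \qquad R_1(0) = I_m.
$$
On the other hand, set $R_2(s) := T(s+t, x,\xi)\, T^{-1}(t,x,\xi)$. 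Differentiating and using \eqref{equ1tr} at time $s+t$ yields exactly the same linear ODE, with $R_2(0) = I_m$. By uniqueness for linear ODEs with smooth time-dependent coefficients, $R_1(s) = R_2(s)$ for all $s$; specializing to $s = -t$ gives $T(-t, \phi_\Lambda^t(x,\xi)) = T(0,x,\xi)\, T^{-1}(t,x,\xi) = T^{-1}(t,x,\xi)$.

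There is no serious obstacle here: the only points requiring care are the hermiticity of $A$ (needed for the unitarity cancellation) and the flow property of $\phi_\Lambda^t$ (which is guaranteed by the standing global existence assumption). The argument works identically for $t\geq 0$ and $t \leq 0$, since we never invoke a sign of time in the ODE manipulations.
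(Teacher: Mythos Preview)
Your proof is correct and complete: both the unitarity argument via $\frac{d}{dt}(T^*T)=0$ and the cocycle identity via uniqueness for the linear ODE satisfied by $R_1$ and $R_2$ are standard and valid. The paper itself does not prove this lemma but simply cites \cite[Proposition 4]{Brum}; your argument is precisely the kind of elementary ODE computation one would find behind such a citation.
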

Notice that in \cite{Brum}, the quantity $\Gamma(t,x,\xi)=T(-t,\phi^t_{\Lambda}(x,\xi))$ was considered instead of $T$. The equation satisfied by $T^{-1}$ reads 
\begin{equation}\label{equ2tr}
\frac{d}{dt} T^{-1}(t,x,\xi) = i T^{-1}(t,x,\xi) A\big(\phi_{\Lambda}^t(x,\xi)\big).
\end{equation}
A simple computation using \eqref{equ1tr} and \eqref{equ2tr} yields 
\begin{align*}
\frac{d}{dt}\bigg(  T^{-1}(-t,x,\xi) \psi\big(t,&\phi_{\Lambda}^{-t}(x,\xi)\big)T(-t,x,\xi) \bigg) = \label{hid22}\\
& T^{-1}(-t,x,\xi)  \bigg( \frac{d}{dt} \psi\big(t,\phi_{\Lambda}^{-t}(x,\xi) \big) - \{\Lambda,\psi(t)\} \circ \phi_{\Lambda}^{-t}(x,\xi) -i [A,\psi(t)]\circ \phi_{\Lambda}^{-t}(x,\xi) \bigg) T(-t,x,\xi) . \nonumber
\end{align*}
Consequently, equation (\ref{Cauchy problem general}) is equivalent to the following one 
\begin{equation*}
\frac{d}{dt}\bigg(  T^{-1}(-t,x,\xi) \psi\big(t,\phi_{\Lambda}^{-t}(x,\xi)\big)T(-t,x,\xi) \bigg) = T^{-1}(-t,x,\xi) B\big(t,\phi_{\Lambda}^{-t}(x,\xi)\big) T(-t,x,\xi).
\end{equation*}
Therefore
\begin{equation}\label{turn}
\psi\big(t,\phi_{\Lambda}^{-t}(x,\xi)\big)= T(-t,x,\xi)\bigg( \psi_0(x,\xi)+ \int_0^t T^{-1}(-s,x,\xi) B\big(s,\phi_{\Lambda}^{-s}(x,\xi)\big) T(-s,x,\xi) \; ds \bigg) T^{-1}(-t,x,\xi).
\end{equation}
Using Lemma \ref{trbrumm}, we obtain the solution of (\ref{Cauchy problem general}) which reads
\begin{equation*}
\psi(t,x,\xi)= T^{-1}(t,x,\xi)\bigg( \psi_0(\phi_{\Lambda}^t(x,\xi))+ \int_0^t T^{-1}\big(-s,\phi_{\Lambda}^t(x,\xi)\big) B\big(s,\phi_{\Lambda}^{t-s}(x,\xi)\big) T\big(-s,\phi_{\Lambda}^t(x,\xi)\big)  \; ds \bigg) T(t,x,\xi).
\end{equation*} 
\begin{flushright}
$\square$
\end{flushright}

The following lemma is used in the proof of Proposition \ref{main proposition construction}. Similar result was announced in the appendix of \cite{Spo} (see equation (A.22) therein).

\begin{lemma}\label{Lemma remove}
Consider the Cauchy problem \eqref{Cauchy problem general} with $\Lambda=\lambda_{\nu}$ and $A= \tilde{H}_{\nu,1}$ defined by \eqref{equ79}. We assume that $\psi_0$ and $B(t)$ satisfy
$$
\psi_0 = P_{\nu,0} \psi_0 P_{\nu,0} \quad \text{and} \quad B(t)=P_{\nu,0}B(t)P_{\nu,0}, \quad \forall t \in \mathbb R.
$$
Then the solution $\psi(t)$ satisfies  
$$
\psi(t) = P_{\nu,0} \psi(t) P_{\nu,0}  , \quad \forall t\in \mathbb R.
$$
\end{lemma}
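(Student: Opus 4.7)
The plan is to exploit the explicit solution formula derived at the end of Appendix \ref{Ann Cauchy problem general}, which in the present setting reads
\[
\psi(t,x,\xi)= T_{\nu}^{-1}(t,x,\xi)\bigg( \psi_0\big(\phi_{\nu}^t(x,\xi)\big)+ \int_0^t T_{\nu}^{-1}\big(-s,\phi_{\nu}^t(x,\xi)\big) B\big(s,\phi_{\nu}^{t-s}(x,\xi)\big) T_{\nu}\big(-s,\phi_{\nu}^t(x,\xi)\big) ds \bigg) T_{\nu}(t,x,\xi).
\]
Given the hypotheses $\psi_0 = P_{\nu,0}\psi_0 P_{\nu,0}$ and $B(s) = P_{\nu,0}B(s)P_{\nu,0}$, it suffices to show that $T_\nu(t,x,\xi)$ intertwines $P_{\nu,0}$ along the Hamiltonian flow, namely
\begin{equation}\label{intertwining plan}
P_{\nu,0}\big(\phi_{\nu}^t(x,\xi)\big)\, T_{\nu}(t,x,\xi)= T_{\nu}(t,x,\xi)\, P_{\nu,0}(x,\xi),\qquad \forall t\in\mathbb R,\ (x,\xi)\in\mathbb R^{2n}.
\end{equation}
Once \eqref{intertwining plan} is known, one brings $P_{\nu,0}(\phi_{\nu}^{t-s})$ out of $B$, applies \eqref{intertwining plan} (first at the point $(\phi_{\nu}^{t-s},s)$ to move them across $T_{\nu}(s,\phi_{\nu}^{t-s})$, then at $(x,\xi,t)$ to move them across $T_{\nu}(t,x,\xi)$), and the desired sandwich structure $\psi(t) = P_{\nu,0}\psi(t)P_{\nu,0}$ immediately follows from $P_{\nu,0}^2 = P_{\nu,0}$.

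To prove \eqref{intertwining plan}, I would introduce $R(t,x,\xi):=T_{\nu}^{-1}(t,x,\xi)P_{\nu,0}\big(\phi_{\nu}^t(x,\xi)\big)T_{\nu}(t,x,\xi)$ and show $R(t,x,\xi) \equiv P_{\nu,0}(x,\xi)$. Since $R(0,x,\xi) = P_{\nu,0}(x,\xi)$, it is enough to verify that $\partial_t R \equiv 0$. A direct computation using \eqref{equation de transport pour Tnu} and $\partial_t\big(P_{\nu,0}\circ\phi_{\nu}^t\big)=\{\lambda_{\nu},P_{\nu,0}\}\circ\phi_{\nu}^t$ yields
\[
\partial_t R(t,x,\xi) = T_{\nu}^{-1}(t,x,\xi)\Big(\{\lambda_{\nu},P_{\nu,0}\}+ i\big[\tilde H_{\nu,1},P_{\nu,0}\big]\Big)\big(\phi_{\nu}^t(x,\xi)\big) T_{\nu}(t,x,\xi),
\]
so the whole lemma reduces to the algebraic identity
\begin{equation}\label{algebraic plan}
\{\lambda_{\nu},P_{\nu,0}\}+ i\big[\tilde H_{\nu,1},P_{\nu,0}\big]=0 \quad \text{pointwise on } \mathbb R^{2n}.
\end{equation}

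This algebraic identity is the real content and is where I expect to spend the care. The key elementary fact is that differentiating $P_{\nu,0}^2=P_{\nu,0}$ yields $P_{\nu,0}\,\partial P_{\nu,0}\,P_{\nu,0}=0$, hence $P_{\nu,0}\{\lambda_{\nu},P_{\nu,0}\}P_{\nu,0}=0$ and
\[
\{\lambda_{\nu},P_{\nu,0}\}= P_{\nu,0}\{\lambda_{\nu},P_{\nu,0}\}+ \{\lambda_{\nu},P_{\nu,0}\}P_{\nu,0}.
\]
Using the explicit form \eqref{equ79} of $\tilde H_{\nu,1}$, the two sandwich terms $\tfrac{1}{2i}P_{\nu,0}\{P_{\nu,0},H_0\}P_{\nu,0}$ and $P_{\nu,0}H_1P_{\nu,0}$ commute with $P_{\nu,0}$ modulo the sandwich (any matrix of the form $P_{\nu,0}AP_{\nu,0}$ has vanishing commutator with $P_{\nu,0}$), so they contribute nothing to $[\tilde H_{\nu,1},P_{\nu,0}]$. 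For the middle piece $-i[P_{\nu,0},\{\lambda_{\nu},P_{\nu,0}\}]$, expanding and applying the two displayed identities produces exactly $-\{\lambda_{\nu},P_{\nu,0}\}$, so that $i[\tilde H_{\nu,1},P_{\nu,0}]=-\{\lambda_{\nu},P_{\nu,0}\}$ and \eqref{algebraic plan} holds.

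Combining these two steps closes the argument. The main obstacle is the bookkeeping in \eqref{algebraic plan}: the specific form of $\tilde H_{\nu,1}$ in \eqref{equ79}, in particular the non-obvious term $-i[P_{\nu,0},\{\lambda_{\nu},P_{\nu,0}\}]$, is precisely tailored so that its commutator with $P_{\nu,0}$ cancels the classical transport term $\{\lambda_{\nu},P_{\nu,0}\}$; the computation must respect the non-commutativity of $P_{\nu,0}$ with the Poisson bracket $\{\lambda_{\nu},P_{\nu,0}\}$, and this is where Lemma \ref{Lemma Spohn} is morally being used a second time.
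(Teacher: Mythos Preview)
Your argument is correct and takes a genuinely different route from the paper's proof.

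The paper argues by a Gronwall/uniqueness method: with $\overline{P}_{\nu,0}:=I_m-P_{\nu,0}$, it computes directly
\[
\frac{d}{dt}\big(\overline{P}_{\nu,0}\psi(t)\big)=\{\lambda_\nu,\overline{P}_{\nu,0}\psi(t)\}+\mathcal{O}\big(\overline{P}_{\nu,0}\psi(t)\big),
\]
so that along the flow $\phi_\nu^{-t}$ the quantity $\overline{P}_{\nu,0}\psi(t)$ satisfies a linear ODE with zero initial data, hence vanishes by Gronwall; the companion statement $\psi(t)\overline{P}_{\nu,0}=0$ is obtained symmetrically. The only algebraic input the paper uses is $P_{\nu,0}\{\lambda_\nu,P_{\nu,0}\}P_{\nu,0}=0$ together with $\overline{P}_{\nu,0}\tilde{H}_{\nu,1}=-i\overline{P}_{\nu,0}\{\lambda_\nu,P_{\nu,0}\}P_{\nu,0}$, i.e.\ only the \emph{off-diagonal} part of your identity \eqref{algebraic plan}.

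Your approach instead proves the sharper pointwise fact $\{\lambda_\nu,P_{\nu,0}\}+i[\tilde{H}_{\nu,1},P_{\nu,0}]=0$, which yields the intertwining relation $P_{\nu,0}(\phi_\nu^t)T_\nu(t)=T_\nu(t)P_{\nu,0}$ and then lets you read off the conclusion from the explicit solution formula. This is slightly more work algebraically but buys you an independently useful statement about $T_\nu$ (it maps $\mathrm{ran}\,P_{\nu,0}(x,\xi)$ onto $\mathrm{ran}\,P_{\nu,0}(\phi_\nu^t(x,\xi))$), and it avoids the Gronwall step entirely. One small remark: in your sketch of how to push the projectors through the integral term, the correct instantiation of \eqref{intertwining plan} is at base point $\phi_\nu^t(x,\xi)$ with time parameter $-s$ (giving $P_{\nu,0}(\phi_\nu^{t-s})T_\nu(-s,\phi_\nu^t)=T_\nu(-s,\phi_\nu^t)P_{\nu,0}(\phi_\nu^t)$), then at base point $(x,\xi)$ with parameter $t$; your parenthetical ``at the point $(\phi_\nu^{t-s},s)$'' is slightly off but the intended two-step conjugation works exactly as you describe.
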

\begin{proof}
Put $\overline{P}_{\nu,0} := I_m -P_{\nu,0}$. We shall prove that
$$
\overline{P}_{\nu,0} \psi(t) = 0 \quad \text{and} \quad \psi(t) \overline{P}_{\nu,0}  = 0  , \quad \forall t\in  \mathbb R.
$$

We have
\begin{eqnarray*}
\frac{d}{dt} \overline{P}_{\nu,0} \psi(t) &=& \overline{P}_{\nu,0}  \{\lambda_{\nu},\psi(t)\} - \overline{P}_{\nu,0} [\psi(t), i\tilde{H}_{\nu,1}] \\
&=& \{\lambda_{\nu}, \overline{P}_{\nu,0} \psi(t)\} - \{\lambda_{\nu}, \overline{P}_{\nu,0}\} \psi(t) - \overline{P}_{\nu,0} [\psi(t), i\tilde{H}_{\nu,1}] \\
&=& \{\lambda_{\nu}, \overline{P}_{\nu,0} \psi(t)\} - \{\lambda_{\nu}, \overline{P}_{\nu,0}\} \psi(t) + i \overline{P}_{\nu,0} \tilde{H}_{\nu,1} \psi(t) - i\overline{P}_{\nu,0} \psi(t) \tilde{H}_{\nu,1} \\
&=& \{\lambda_{\nu}, \overline{P}_{\nu,0} \psi(t)\} - \{\lambda_{\nu}, \overline{P}_{\nu,0}\} \psi(t) + i \overline{P}_{\nu,0} \tilde{H}_{\nu,1} \psi(t) + \mathcal{O}(\overline{P}_{\nu,0} \psi(t)),
\end{eqnarray*}
where we used the fact that $\tilde{H}_{\nu,1}\in S(1)$ (see the proof of Lemma \ref{lemma est transport numu}). 
According to the definition of  $\tilde{H}_{\nu,1}$ we have 
\begin{equation}\label{new4}
i\overline{P}_{\nu,0} \tilde{H}_{\nu,1} \psi(t) = \overline{P}_{\nu,0}  \big[ P_{\nu,0}, \{\lambda_{\nu},P_{\nu,0}\} \big] \psi(t)= - \overline{P}_{\nu,0} \{\lambda_{\nu}, P_{\nu,0}\} P_{\nu,0} \psi(t) .
\end{equation}
Next, multiplying  the obvious equality  $\{\lambda_{\nu},P_{\nu,0}\} = \{\lambda_{\nu},P_{\nu,0}^2\} = \{\lambda_{\nu},P_{\nu,0}\} P_{\nu,0} + P_{\nu,0} \{\lambda_{\nu},P_{\nu,0}\}$
on the left and right by $ P_{\nu,0},$ gives  $P_{\nu,0} \{\lambda_{\nu},P_{\nu,0}\} P_{\nu,0} =2 P_{\nu,0}\{\lambda_{\nu},P_{\nu,0}\} P_{\nu,0} $  and then
$P_{\nu,0} \{\lambda_{\nu},P_{\nu,0}\} P_{\nu,0}=0.$ Combining this with \eqref{new4}, we obtain
$$
i\overline{P}_{\nu,0} \tilde{H}_{\nu,1} \psi(t) 
=
 \{\lambda_{\nu}, \overline{P}_{\nu,0}\}P_{\nu,0} \psi(t).$$

Therefore, we have 
\begin{eqnarray*}
\frac{d}{dt} \overline{P}_{\nu,0} \psi(t) &=& \{\lambda_{\nu}, \overline{P}_{\nu,0} \psi(t)\} - \{\lambda_{\nu}, \overline{P}_{\nu,0}\} \psi(t) + \{\lambda_{\nu}, \overline{P}_{\nu,0}\}P_{\nu,0} \psi(t) +  \mathcal{O}(\overline{P}_{\nu,0} \psi(t)) \\
&=& \{\lambda_{\nu}, \overline{P}_{\nu,0} \psi(t)\} - \{\lambda_{\nu}, \overline{P}_{\nu,0}\} \overline{P}_{\nu,0} \psi(t) + \mathcal{O}(\overline{P}_{\nu,0} \psi(t)), 
\end{eqnarray*}
which by using the fact that $\{\lambda_{\nu},\overline{P}_{\nu,0}\} = \mathcal{O}(1)$ (which follows from assumption \textbf{(A2)} and Lemma \ref{smoothness}) gives 
$$
\frac{d}{dt} \overline{P}_{\nu,0} \psi(t) = \{\lambda_{\nu}, \overline{P}_{\nu,0} \psi(t)\} + \mathcal{O}(\overline{P}_{\nu,0} \psi(t)).
$$
Put $g(t,x,\xi) := \overline{P}_{\nu,0}(x,\xi) \psi(t,x,\xi)$ and $f(t,x,\xi):= g(t,\phi_{\nu}^{-t}(x,\xi))$. Taking into account the fact that $f(0)=g(0)=\overline{P}_{\nu,0}\psi_0 =0$ (since $\psi_0 = P_{\nu,0}\psi_0 P_{\nu,0}$ by hypothesis), we have 
$$
\frac{d}{dt} f(t,x,\xi) = \mathcal{O} \big( f(t,x,\xi) \big), \quad f(0)=0.
$$
Consequently, using Gronwall Lemma, we get 
$$
 f(t) = 0, \quad \forall t\in \mathbb R. 
$$
Hence 
$$
\overline{P}_{\nu,0} \psi(t) = 0, \quad \forall t\in \mathbb R.
$$
The same  arguments show that $\psi(t)\overline{P}_{\nu,0} = 0$, for all $t\in \mathbb R$. This ends the proof of the lemma.

\end{proof}
\section{Semiclassical projections}\label{projjj}

In this appendix, we prove that under assumption \textbf{(A1)}, $\lambda_{\nu}$ and $P_{\nu,0}$ belong to nice classes of symbols, for all $1\leq \nu \leq l$, and we give an idea of the proof of  Theorem \ref{projections}. For more details we refer to \cite{Sor} and the original paper \cite{Hel} (see also \cite{bolte}).

\begin{lemma}\label{smoothness}
Fix $1\leq \nu \leq l$. Under assumption \textbf{(A1)}, $P_{\nu,0}\in S(1)$ and for all $\gamma\in \mathbb N^{2n}$, there exists $C_{\gamma}>0$ such that 
\begin{equation}\label{mm classe de symbole}
\big|\partial_{(x,\xi)}^{\gamma}\lambda_{\nu}(x,\xi)\big| \leq C_{\gamma} \big\Vert \partial_{(x,\xi)}^{\gamma} H_0(x,\xi) \big\Vert, \quad \forall (x,\xi)\in \mathbb R^{2n}.
\end{equation}
In particular, $\lambda_{\nu}\in S(g)$.
\end{lemma}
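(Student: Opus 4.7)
The plan is to exploit the spectral gap condition in (A1) through the Dunford--Riesz spectral projection formula. For each $(x,\xi)$, (A1) allows us to enclose $\lambda_\nu(x,\xi)$ by a positively oriented circle $\Gamma_\nu(x,\xi)$ of radius comparable to $\rho g(x,\xi)/2$ (on $|x|+|\xi|\ge c$) that separates it from the other eigenvalues, and I would set
\begin{equation*}
P_{\nu,0}(x,\xi)=\frac{1}{2\pi i}\oint_{\Gamma_\nu(x,\xi)}\bigl(z-H_0(x,\xi)\bigr)^{-1}\,dz.
\end{equation*}
Smoothness of $H_0$ together with the constancy of the multiplicities yields smoothness of both $P_{\nu,0}$ and $\lambda_\nu$. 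Because $H_0$ is Hermitian, $P_{\nu,0}$ is the orthogonal projector onto $\ker(H_0-\lambda_\nu)$, so $\|P_{\nu,0}\|\le 1$, giving the zeroth-order bound.

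Next I would estimate the derivatives of $P_{\nu,0}$ by differentiating under the integral and iteratively applying the resolvent identity $\partial_{(x,\xi)}R(z)=R(z)(\partial_{(x,\xi)}H_0)R(z)$, where $R(z)=(z-H_0)^{-1}$. For any multi-index $\gamma$, $\partial^\gamma_{(x,\xi)}R(z)$ is a finite linear combination of products of the form $R(z)(\partial^{\alpha_1}H_0)R(z)\cdots(\partial^{\alpha_k}H_0)R(z)$ with $\alpha_1+\cdots+\alpha_k=\gamma$ and $k+1$ resolvent factors. On $\Gamma_\nu$, the gap forces $\|R(z)\|\le C/g(x,\xi)$, while $H_0\in S(g)$ gives $\|\partial^{\alpha_i}H_0\|\le C_{\alpha_i}g(x,\xi)$. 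Each summand is therefore $O\bigl(g^{-(k+1)}g^{k}\bigr)=O(g^{-1})$; multiplying by the contour length $O(g)$ yields $\|\partial^\gamma_{(x,\xi)}P_{\nu,0}\|\le C_\gamma$, i.e. $P_{\nu,0}\in S(1)$.

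For $\lambda_\nu$, I would first derive \eqref{mm classe de symbole} at order one algebraically: differentiating $H_0P_{\nu,0}=\lambda_\nu P_{\nu,0}$ and multiplying on the left by $P_{\nu,0}$, the relations $P_{\nu,0}H_0=\lambda_\nu P_{\nu,0}$ and $P_{\nu,0}^2=P_{\nu,0}$ cancel the $\partial P_{\nu,0}$ contributions and leave $P_{\nu,0}(\partial H_0)P_{\nu,0}=(\partial\lambda_\nu)P_{\nu,0}$. Taking the trace and dividing by $r_\nu=\operatorname{tr}P_{\nu,0}$ gives $\partial\lambda_\nu=r_\nu^{-1}\operatorname{tr}\bigl(P_{\nu,0}(\partial H_0)P_{\nu,0}\bigr)$, which together with the elementary bound $|\operatorname{tr}(P_{\nu,0}AP_{\nu,0})|\le r_\nu\|A\|$ yields the pointwise inequality at order one. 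For higher $\gamma$, I would apply $\partial^\gamma$ to the scalar identity $r_\nu\lambda_\nu=\operatorname{tr}(H_0P_{\nu,0})$ and expand by Leibniz: the leading piece $\operatorname{tr}\bigl((\partial^\gamma H_0)P_{\nu,0}\bigr)$ is controlled by $r_\nu\|\partial^\gamma H_0\|$, while the cross terms $\operatorname{tr}\bigl((\partial^{\gamma-\alpha}H_0)(\partial^\alpha P_{\nu,0})\bigr)$ are absorbed using the already established $P_{\nu,0}\in S(1)$ and $H_0\in S(g)$. Combined with $\|\partial^\alpha H_0\|\le C_\alpha g$, this produces $|\partial^\gamma\lambda_\nu|\le C_\gamma g$ and hence $\lambda_\nu\in S(g)$.

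The principal technical obstacle will be handling the $(x,\xi)$-dependence of the contour $\Gamma_\nu$ uniformly. The resolution is to localize: around each base point $(x_0,\xi_0)$ fix a contour of radius $\sim \rho g(x_0,\xi_0)$, then exploit the order-function property $g(x,\xi)\asymp g(x_0,\xi_0)$ on a small neighborhood, together with the holomorphy of $R(z)$ in $z$, to justify differentiation under the integral with the same contour. Careful bookkeeping of the resolvent expansions and of the Leibniz sums defining $\partial^\gamma\lambda_\nu$ then closes the argument.
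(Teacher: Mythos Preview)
Your approach---Riesz projection, resolvent bounds $\|(z-H_0)^{-1}\|\le Cg^{-1}$ on a contour of length $O(g)$, iterated resolvent identity---matches the paper's proof of $P_{\nu,0}\in S(1)$ essentially line by line, and your handling of the contour's $(x,\xi)$-dependence is the standard local-freezing argument. Your first-order eigenvalue computation (compress $\partial(H_0P_{\nu,0})=\partial(\lambda_\nu P_{\nu,0})$ by $P_{\nu,0}$ and take the trace) also agrees with the paper and correctly gives $|\partial\lambda_\nu|\le\|\partial H_0\|$.

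Be aware, though, that your Leibniz expansion of $r_\nu\lambda_\nu=\operatorname{tr}(H_0P_{\nu,0})$ does \emph{not} yield the pointwise inequality \eqref{mm classe de symbole} for $|\gamma|\ge2$: the cross terms $\operatorname{tr}\bigl((\partial^{\gamma-\alpha}H_0)(\partial^\alpha P_{\nu,0})\bigr)$ are only $O(g)$, not $O(\|\partial^\gamma H_0\|)$. You honestly conclude only $\lambda_\nu\in S(g)$. This is not a defect of your argument but of the statement: \eqref{mm classe de symbole} is \emph{false} for $|\gamma|\ge2$. With $n=1$, $m=2$, $g\equiv1$ and $H_0(x,\xi)=\bigl(\begin{smallmatrix}2&\sin x\\\sin x&-2\end{smallmatrix}\bigr)$, assumption \textbf{(A1)} holds, yet $\lambda_1=\sqrt{4+\sin^2x}$ has $\partial_x^2\lambda_1(0,\xi)=\tfrac12$ while $\partial_x^2H_0(0,\xi)=0$. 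The paper's own proof sketch appeals to ``$P_{\nu,0}(\partial^\gamma P_{\nu,0})P_{\nu,0}=0$ for all $\gamma\neq0$'', but differentiating $P^2=P$ twice gives $P(\partial^2P)P=-2P(\partial P)^2P$, which need not vanish; so the paper has a gap at exactly this point. What the paper actually uses later (Lemma~\ref{lemma flownu}) is $\partial^\gamma\lambda_\nu\in L^\infty$ for $|\gamma|\ge2$, and this \emph{does} follow once \textbf{(A2)} is also assumed: under \textbf{(A2)} your resolvent expansion improves to $\|\partial^\gamma P_{\nu,0}\|\le C_\gamma g^{-1}$ for $|\gamma|\ge1$ (each $\partial^{\alpha_i}H_0$ is now $O(1)$, so $k$ derivative factors times $k{+}1$ resolvents times the contour length give $O(g^{-k})$), and feeding this into your trace formula makes every term $O(1)$. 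So your proof is correct and complete for the conclusions actually needed.
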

\begin{proof}
Let $\nu\in \{1,...,l\}$. Let $\varepsilon(x,\xi)>0$ be such that
\begin{equation}\label{propriété du rayon}
0<\frac{\rho}{2} g(x,\xi) \leq \varepsilon(x,\xi) \leq  \frac{1}{2} \min_{1\leq \mu \neq \nu \leq l} |\lambda_{\mu}(x,\xi) - \lambda_{\nu}(x,\xi)|. 
\end{equation}
Put
\begin{equation}\label{contour}
\gamma_{\nu}(x,\xi):= \big\{z\in \mathbb C;\; |z-\lambda_{\nu}(x,\xi)|=\varepsilon(x,\xi) \big\},
\end{equation}
and
$$
P_{\nu,0}(x,\xi) = \frac{i}{2\pi} \int_{\gamma_{\nu}(x,\xi)} (H_0(x,\xi)-z)^{-1} dz.
$$
By the Cauchy theorem, we see that a small variation of the contour $\gamma_{\nu}(x,\xi)$ does not change $P_{\nu,0}(x,\xi)$. Let $z\in \gamma_{\nu}(x,\xi)$. According to \eqref{propriété du rayon}, $(H_0(x,\xi)-z)^{-1}$ exists for all $(x,\xi)\in \mathbb R^{2n}$ and since $H_0(x,\xi)$ is hermitian it follows that
\begin{equation}\label{chmoj}
\Vert (H_0(x,\xi) - z)^{-1} \Vert \leq \frac{1}{\text{dist}\big(z,\sigma(H_0(x,\xi))\big)} \leq \frac{2}{\rho} g^{-1}(x,\xi), 
\end{equation}
where $\sigma(H_0(x,\xi)):= \{ \lambda_1(x,\xi), ...,\lambda_l(x,\xi)\}$. Combining \eqref{chmoj} and the fact that $H_0\in S(g)$, one sees that $P_{\nu,0}\in S(1)$. For $\gamma =0$, \eqref{mm classe de symbole} is obvious. Taking the derivatives of the equation $\big(P_{\nu,0}(x,\xi)\big)^2=P_{\nu,0}(x,\xi)$, we obtain 
\begin{equation}\label{proj prop}
P_{\nu,0}(x,\xi) \partial_{(x,\xi)}^{\gamma} P_{\nu,0}(x,\xi)  P_{\nu,0}(x,\xi) = 0, \quad \forall \gamma \in \mathbb N^{2n}\setminus \{0\}.
\end{equation}
Now, by differentiating successively the equation $H_0(x,\xi)P_{\nu,0}(x,\xi)=\lambda_{\nu}(x,\xi) P_{\nu,0}(x,\xi)$ using \eqref{proj prop} and the fact that $P_{\nu,0}\in S(1)$, one gets \eqref{mm classe de symbole} for all $\gamma \in \mathbb N^{2n}\setminus \{0\}$.
\end{proof}

\textbf{Outline of the proof of Theorem \ref{projections} :}

\noindent

Fix $1\leq \nu \leq l$ and let $\gamma_{\nu}(x,\xi)$ be the contour defined in \eqref{contour}. According to \eqref{chmoj}, for all $z\in \gamma_{\nu}(x,\xi)$, $(H_0(x,\xi)-z)$ is elliptic, i.e. $(H_0(x,\xi)-z)^{-1}\in S(g^{-1})$. By the composition formula \eqref{dév asy annexe}, we have 
\begin{eqnarray}\label{rdupar}
(H(x,\xi;\h) - z) \# (H_0(x,\xi)-z)^{-1} &=& (H_0(x,\xi) - z) \#  (H_0(x,\xi)-z)^{-1} + \h  H_1(x,\xi) \# (H_0(x,\xi)-z)^{-1} \nonumber \\
&= & I_m - \h r(x,\xi,z;\h),
\end{eqnarray}
with $r\in S(1)$, uniformly for $z\in \gamma_{\nu}(x,\xi)$. Consequently, using the symbolic calculus of $\hbar$-pseudodifferential operators (see \cite[ch. 8]{dim}), we can construct a parametrix $B\in S(g^{-1})$  such that for $z\in \gamma_{\nu}(x,\xi)$, 
\begin{equation}\label{parametrix}
B(x,\xi,z;\h)\sim\sum_{j\geq 0} \h^j B_j(x,\xi,z) \,\,\, \; \text{in}\;\; S(g^{-1}), \;\; \; \text {with } \, \, B_0(x,\xi,z)=(H_0(x,\xi)-z)^{-1},
\end{equation}
and 
\begin{equation}\label{new1}
B(x,\xi,z;\h)\#(H(x,\xi;\h)-z)\sim (H(x,\xi;\h)-z)\# B(x,\xi,z;\h)\sim I_m,
\end{equation}
in $S(1)$.  The above formula implies  that for  $z,\tilde z\in \gamma_{\nu}(x,\xi)$ 
$$(H(x,\xi;\h)-z)\# \Big[B(x,\xi,z;h)-B(x,\xi,\tilde z;h)\Big] \#(H(x,\xi;\h)-\tilde z)\sim (z-\tilde z)I_m,$$
$$(H(x,\xi;\h)-z)\#  B(x,\xi,z;h)\#  B(x,\xi,\tilde z;h) \# (H(x,\xi;\h)-\tilde z )\sim I_m,$$
which yields 
\begin{equation}\label{new2}
B(x,\xi,z;\h) - B(x,\xi,\tilde{z};\h) \sim (z-\tilde{z}) B(x,\xi,z;\h) \# B(x,\xi,\tilde{z};\h).
\end{equation}

Put 
\begin{equation}\label{almost projector}
\tilde{P}_{\nu}(x,\xi;\h):=\frac{i}{2\pi}\int_{\gamma_{\nu}(x,\xi)} B(x,\xi,z;\h)dz \sim \frac{i}{2\pi} \sum_{j\geq 0} \h^j \int_{\gamma_{\nu}(x,\xi)} B_j(x,\xi,z)dz.
\end{equation}
By construction  of  $ \gamma_{\nu}(x,\xi)$ and $ B(x,\xi,z;h)$, we easily see that $\tilde{P}_{\nu}(x,\xi;\h) \in S(1)$. 

Let us start by proving \eqref{diag}. As we already pointed out in the above proof, by the Cauchy theorem, a small variation of the contour $\gamma_{\nu}(x,\xi)$ does not change $\tilde{P}_{\nu}(x,\xi;h)$. Let $\tilde{\gamma}_{\nu}(x,\xi)$ be a simple closed contour with the same properties than $\gamma_{\nu}(x,\xi)$ contained inside $\gamma_{\nu}(x,\xi)$.  Clearly, \eqref{new2} remains true for $z\in \gamma_\nu(x,\xi)$ and $\tilde z\in  \tilde\gamma_\nu(x,\xi)$.

Using \eqref{new2}, we obtain
\begin{eqnarray}
\tilde{P}_{\nu}(x,\xi;\h) \# \tilde{P}_{\nu}(x,\xi;\h) &=& \left(\frac{i}{2\pi} \right)^2 \int_{\gamma_{\nu}(x,\xi)} \int_{\tilde{\gamma}_{\nu}(x,\xi)} B(x,\xi,z;\h) \# B(x,\xi,\tilde{z};\h) dzd\tilde{z} \nonumber \\
&\sim & \left( \frac{i}{2\pi} \right)^2 \int_{\gamma_{\nu}(x,\xi)} \int_{\tilde{\gamma}_{\nu}(x,\xi)} \bigg( \frac{1}{z-\tilde{z}}B(x,\xi,z;\h)   + \frac{1}{\tilde{z}-z}  B(x,\xi,\tilde{z};\h) \bigg) dzd\tilde{z} \nonumber \\
&=:& I_1 + I_2\label{inst of}
\end{eqnarray}
where 
$$
I_1 := \left( \frac{i}{2\pi} \right)^2 \int_{\gamma_{\nu}(x,\xi)} \bigg( \int_{\tilde{\gamma}_{\nu}(x,\xi)} \frac{1}{z-\tilde{z}} d\tilde{z} \bigg) B(x,\xi,z;\h)   dz = 0
$$
$$
I_2 := \left( \frac{i}{2\pi} \right)^2 \int_{\tilde{\gamma}_{\nu}(x,\xi)} \bigg( \int_{\gamma_{\nu}(x,\xi)} \frac{1}{\tilde{z}-z}   dz \bigg) B(x,\xi,\tilde{z};\h) d\tilde{z} = \frac{i}{2\pi}\int_{\tilde{\gamma}_{\nu}(x,\xi)}  B(x,\xi,\tilde{z};\h) d\tilde{z}.
$$
This gives  \eqref{diag}. Property (\ref{diag2}) follows immediately from the selfadjointness of $H^w(x,\h D_x;\h)$ while \eqref{diag3} is a consequence of \eqref{new1}.

In order to prove \eqref{différents indices}, we consider two contours $\gamma_{\nu}(x,\xi)$ and $\gamma_{\mu}(x,\xi)$ such that $\text{dist}(\gamma_{\nu}(x,\xi),\gamma_{\mu}(x,\xi))\geq c > 0$ and we repeat the same computation as in \ref{inst of} with $\gamma_{\mu}(x,\xi)$ instead of $\tilde{\gamma}_{\nu}(x,\xi)$. In this case $I_1=I_2 =0$.

Formula (\ref{identité}), follows from the construction of $\tilde P_\nu(x,\xi;h)$  which yields
$$\sum_{\nu=1}^l\tilde P_\nu(x,\xi;h)\sim I_m.$$
\begin{flushright}
$\square$
\end{flushright}

The following lemma is needed in the proof of Lemma \ref{needed lemma}. Put
$$
\tilde{P}_{\nu,j}(x,\xi) := \frac{i}{2\pi} \int_{\gamma_{\nu}(x,\xi)} B_j(x,\xi,z) dz, \quad  j\geq 0.
$$
\begin{lemma}\label{tilde pnuj smoo}
Under assumptions \textbf{(A1)} and \textbf{(A2)}, we have 
$$
\tilde{P}_{\nu,j} \in S(g^{-j}), \quad \forall j\geq 0.
$$
\end{lemma}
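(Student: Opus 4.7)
The proof goes by induction on $j$, the base case $j=0$ being given by Lemma \ref{smoothness} (since $\tilde P_{\nu,0}=P_{\nu,0}\in S(1)=S(g^0)$). For the inductive step I would work directly from the contour representation
$$\tilde P_{\nu,j}(x,\xi)=\frac{i}{2\pi}\int_{\gamma_\nu(x,\xi)}B_j(x,\xi,z)\,dz,$$
and unfold the recurrence obtained by setting the order-$j$ coefficient of $B\#(H-z)-I_m\sim0$ equal to zero (for $j\geq1$):
$$B_j=-\Bigl(\sum_{\substack{j_1+|\alpha|+|\beta|=j\\ j_1<j}}\gamma(\alpha,\beta)\,B_{j_1,(\alpha)}^{(\beta)}\,H_{0,(\beta)}^{(\alpha)}+(B\# H_1)_{j-1}\Bigr)B_0.$$
Iterating this formula, $B_j$ is a finite sum of products of the form $B_0\cdot W_1\cdot B_0\cdot W_2\cdots B_0\cdot W_k\cdot B_0$, where each $W_i$ is a derivative $\partial^{\alpha_i}H_s$ ($s\in\{0,1\}$) with $|\alpha_i|+s\geq 1$, hence uniformly bounded by assumption \textbf{(A2)}.

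The key step is the contour integration. I would split each resolvent factor as
$$B_0(x,\xi,z)=\frac{-P_{\nu,0}(x,\xi)}{z-\lambda_\nu(x,\xi)}+G(x,\xi,z),\qquad G(x,\xi,z):=\sum_{\mu\neq\nu}\frac{P_{\mu,0}(x,\xi)}{\lambda_\mu(x,\xi)-z}.$$
By the choice of $\varepsilon(x,\xi)$ in \eqref{propriété du rayon} and the gap condition \textbf{(A1)}, the function $G$ is holomorphic in $z$ in a neighborhood of $\lambda_\nu(x,\xi)$ containing $\gamma_\nu(x,\xi)$, and $\partial_z^{n}G$ is controlled on the contour by $C_n g^{-(n+1)}$. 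Substituting this decomposition into each of the $B_0$ factors appearing in $B_j$ and expanding, the integral reduces by the residue theorem to residues at $z=\lambda_\nu$. A careful degree count shows that each surviving residue contributes $(\lambda_\nu-\lambda_\mu)^{-1}\in S(g^{-1})$ factors (together with bounded $P_{\mu,0}$'s and $W_i$'s), whose total number equals $j$; this produces $\tilde P_{\nu,j}\in S(g^{-j})$ modulo harmless bounded multiplications. Derivatives $\partial_{(x,\xi)}^\gamma$ of the integral are handled by fixing the contour locally (a small deformation is permissible by holomorphy) and differentiating under the integral sign; each derivative either hits a $B_0$ (contributing additional $B_0(\partial H_0)B_0$ that does not spoil the count since $\partial H_0$ is bounded and the extra pole residue is absorbed by the bound $\varepsilon\asymp g$) or hits a $W_i$ or the coefficients $\lambda_\mu$, $P_{\mu,0}$, and those are controlled by \eqref{mm classe de symbole} and \eqref{import step}.

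The main obstacle is the combinatorial bookkeeping of pole orders, residue $z$-derivatives, and the counting of $(\lambda_\nu-\lambda_\mu)^{-1}$ factors so as to match exactly the power $g^{-j}$. An alternative route I would consider if this bookkeeping becomes unwieldy is to bypass the parametrix entirely and proceed by induction on $j$ using instead the Moyal identities of Theorem \ref{projections}: the relations $\tilde P_\nu\#\tilde P_\nu\sim\tilde P_\nu$ and $[H,\tilde P_\nu]_\#\sim 0$, taken at order $\h^j$, yield respectively equations determining the diagonal blocks $P_{\mu,0}\tilde P_{\nu,j}P_{\mu,0}$ and the off-diagonal blocks $P_{\mu,0}\tilde P_{\nu,j}P_{\mu',0}$ (the latter via invertibility of $[H_0,\cdot]$ on off-diagonal blocks, giving a factor $(\lambda_\mu-\lambda_{\mu'})^{-1}\in S(g^{-1})$ thanks to \textbf{(A1)}); the right-hand sides depend only on $\tilde P_{\nu,k}$ for $k<j$, and using the improved derivative bound \eqref{import step} on $P_{\nu,0}$ together with the inductive hypothesis one checks that the gain of $g^{-1}$ per Poisson-bracket and per commutator inversion closes the induction.
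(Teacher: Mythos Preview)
Your proposal is plausible but takes a considerably more involved route than the paper. The paper's proof is essentially three observations: (i) the remainder symbol $r=\h^{-1}\big(I_m-(H-z)\#(H_0-z)^{-1}\big)$ lies in $S(g^{-1})$ uniformly for $z\in\gamma_\nu(x,\xi)$, since each of its terms carries a factor $(H_0-z)^{-1}\in S(g^{-1})$ while the accompanying derivatives of $H_0$ or the factor $H_1$ are bounded by \textbf{(A2)}; (ii) the Neumann-series terms $B_j=(H_0-z)^{-1}\#^j r$ therefore lie in $S(g^{-(j+1)})$, simply by the composition rule $S(g_1)\#S(g_2)\subset S(g_1g_2)$; and (iii) since the contour $\gamma_\nu(x,\xi)$ has length $2\pi\varepsilon(x,\xi)\asymp g(x,\xi)$ by \eqref{propriété du rayon}, integrating the uniform bound $\Vert B_j\Vert\leq C\,g^{-(j+1)}$ over $\gamma_\nu$ yields $\tilde P_{\nu,j}\in S(g^{-j})$, with derivatives handled by locally freezing the contour and differentiating under the integral sign.

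What your residue approach overlooks is precisely step (iii): the single factor of $g$ coming from the contour length is exactly what converts the pointwise bound on $B_j$ into the desired bound on $\tilde P_{\nu,j}$, so there is no need to decompose $B_0$ spectrally and count poles one by one. Your product expansion $B_0W_1B_0\cdots W_kB_0$ is essentially the content of step (ii), but you then embark on a residue computation and a ``careful degree count'' that the paper bypasses entirely via the crude uniform estimate $\Vert(H_0-z)^{-1}\Vert\leq(2/\rho)g^{-1}$ of \eqref{chmoj}. Your alternative route through the Moyal identities $\tilde P_\nu\#\tilde P_\nu\sim\tilde P_\nu$ and $[H,\tilde P_\nu]_\#\sim0$ is a legitimate independent method (this is the Brummelhuis--Nourrigat construction alluded to before Theorem \ref{projections}) and would indeed close the induction using \eqref{import step}, but it is still substantially longer than the paper's direct contour-length argument.
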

\begin{proof} For all $j\geq 0$, $B_j(x,\xi,z)$ is given by (see equation (8.11) in \cite{dim})
$$
B_j(x,\xi,z) = (H_0(x,\xi)-z)^{-1}\#^j r:= (H_0(x,\xi)-z)^{-1} \# r \#r \cdots \# r,
$$
with $\#$ repeated $j$-times. The symbol $r$ is defined in \eqref{rdupar}, more precisely 
\begin{eqnarray*}
r(x,\xi,z;\h) &=& \frac{1}{\h} \big( I_m - (H(x,\xi;\h)-z) \# (H_0(x,\xi)-z)^{-1} \big)  \\
&=& \frac{1}{\h} (I_m - (H_0(x,\xi)-z) \# (H_0(x,\xi)-z)^{-1}) - H_1(x,\xi) \# (H_0(x,\xi)-z)^{-1}.
\end{eqnarray*}
Since $(H_0(x,\xi)-z)^{-1}\in S(g^{-1})$ according to \eqref{chmoj}, it follows from assumption \textbf{(A2)} and the composition formula \eqref{dév asy annexe} that $r\in S(g^{-1})$. Then, for all $j\geq 0$
$$
(H_0(x,\xi)-z)^{-1} \#^j r \in S(g^{-(j+1)}).
$$
Consequently, $\tilde{P}_{\nu,j} \in S(g^{-j})$, for all $j\geq 0$.
\end{proof}

\textbf{Acknowledgement.} The author wishes to express his gratitude to Mouez Dimassi for suggesting the problem and many stimulating conversations. The author also acknowledges helpful discussions with Jean-Fran\c cois Bony. The author is grateful to the referee for his stimulating questions and recommandations which help to improve the paper. This research was partially supported by the program of the European Commission Erasmus Mundus Green IT.

\end{document}